\theoremstyle{thmstyleone}%
\newtheorem{theorem}{Theorem}
\newtheorem*{theorem*}{Theorem}
\newtheorem{corollary}[theorem]{Corollary}
\newtheorem*{corollary*}{Corollary}
\newtheorem{claim}[theorem]{Claim}
\newtheorem*{claim*}{Claim}
\newtheorem{lemma}[theorem]{Lemma}
\newtheorem*{lemma*}{Lemma}
\theoremstyle{thmstyletwo}%
\theoremstyle{thmstylethree}%
\newtheorem{definition}{Definition}%
\newcommand{\m}[1]{{\mathbf{#1}}}
\newcommand{\ve}[1]{{\mathbf{#1}}}
\newcommand{\Sig}{{\text{\boldmath$\Sigma$}}}
\renewcommand{\R}{{\mathbb{R}}}
\newcommand{\algrule}[1][.2pt]{\par\vskip.5\baselineskip\hrule height #1\par\vskip.5\baselineskip}
\begin{document}

\title[Quantum algorithms for SVD-based data representation and analysis]{Quantum algorithms for SVD-based data representation and analysis}


\author*[1]{\fnm{Armando} \sur{Bellante}}\email{armando.bellante@polimi.it}

\author[2,3]{\fnm{Alessandro} \sur{Luongo}}\email{ale@nus.edu.sg}

\author[1]{\fnm{Stefano} \sur{Zanero}}\email{stefano.zanero@polimi.it}

\affil*[1]{\orgdiv{DEIB}, \orgname{Politecnico di Milano}, \orgaddress{\street{Via Ponzio, 34/5 – Building 20}, \city{Milan}, \postcode{20133}, \country{Italy}}}

\affil[2]{\orgdiv{IRIF}, \orgname{Universit{\'e} Paris Diderot}, \orgaddress{\street{Place Aur{\'e}lie Nemours, 8}, \city{Paris}, \postcode{75205}, \country{France}}}

\affil[3]{\orgdiv{CQT}, \orgname{National University of Singapore}, \orgaddress{\street{Science Drive 2, 3}, \city{City}, \postcode{117543}, \country{Singapore}}}


\abstract{This paper narrows the gap between previous literature on quantum linear algebra and practical data analysis on a quantum computer, formalizing quantum procedures that speed-up the solution of eigenproblems for data representations in machine learning.
The power and practical use of these subroutines is shown through new quantum algorithms, sublinear in the input matrix's size, for principal component analysis, correspondence analysis, and latent semantic analysis. 
We provide a theoretical analysis of the run-time and prove tight bounds on the randomized algorithms' error. 
We run experiments on multiple datasets, simulating PCA's dimensionality reduction for image classification with the novel routines. 
The results show that the run-time parameters that do not depend on the input's size are reasonable and that the error on the computed model is small, allowing for competitive classification performances.}

\keywords{Quantum computing, machine learning, data analysis, data representations, singular value decomposition, principal component analysis, correspondence analysis, latent semantic analysis}



\maketitle

\section{Introduction}
\label{Sec:introduction}
Quantum computation is a computing paradigm that promises substantial speed-ups in a plethora of tasks that are computationally hard for classical computers. 
In 2009, Harrow, Hassidim, and Lloyd \cite{HHL2009} presented quantum procedures to create a quantum state proportional to the solution of a linear system of equations $\m{A}\ve{x}=\ve{b}$ in time logarithmic in the size of $\m{A}$. 
This result has promoted further research on optimization, linear algebra, and machine learning problems, leading to faster quantum algorithms for linear regressions \cite{chakraborty2018power}, support vector machines \cite{rebentrost2014svm}, k-means \cite{kerenidis2019qmeans}, and many others \cite{biamonte2017quantum}.
Following this research line, in this work, we focus on quantum algorithms for singular value based data analysis and representation. When handling big data, it is crucial to learn effective representations that reduce the data's noise and help the learner perform better on the task. 
Many data representation methods for machine learning, such as principal component analysis \cite{partridge1997fastpca}, correspondence analysis \cite{greenacre1984corranalysis}, slow feature analysis \cite{kerenidis2020mnist}, or latent semantic analysis \cite{deerwester1990indexing}, heavily rely on singular value decomposition and are impractical to compute on classical computers for extensive datasets.

We have gathered and combined state-of-the-art quantum techniques to present a useful and easy-to-use framework for solving eigenvalue problems at large scale. 
While we focus on machine learning problems, these subroutines can be used for other problems that are classically solved via an SVD of a suitable matrix. More specifically, we formalize novel quantum procedures to compute classical estimates of the most relevant singular values, factor scores, factor score ratios of an $n\times m$ matrix in time poly-logarithmic in $nm$, and the most relevant singular vectors sub-linearly in $nm$.
We show how to use these procedures to obtain a classical description of the models of three machine learning algorithms: principal component analysis, correspondence analysis, and latent semantic analysis.
We also discuss how to represent the data in the new feature space with a quantum computer.
We provide a thorough theoretical analysis for all these algorithms bounding the run-time, the error, and the failure probability. 

The remainder of the paper is organized as follows. 
Section \ref{sec:preliminaries} introduces our notation and discusses the relevant quantum preliminaries.
Section \ref{Section:novel_quantum_subroutines} presents the novel quantum algorithms.
In Section \ref{Section:applications}, we show applications of the algorithms to principal component analysis, correspondence analysis, and latent semantic analysis.
Section \ref{Section:experiments} presents numerical experiments assessing the run-time parameters. 
Finally, we provide detailed information on the experiments and extensively discuss related work in quantum and classical literature in the appendix. 


\section{Quantum preliminaries and notation}\label{sec:preliminaries}
\subsection{Notation} Given a matrix $\m{A}$, we write $\ve{a}_{i,\cdot}$ to denote its $i^{th}$ row, $\ve{a}_{\cdot, j}$ for its $j^{th}$ column, and $a_{ij}$ for the element at row $i$, column $j$.
We write its singular value decomposition as $\m{A}=\m{U}\Sig \m{V}^T$. 
$\m{U}$ and $\m{V}$ are orthogonal matrices, whose column vectors $\ve{u}_i$ and $\ve{v}_i$ are respectively the left and right singular vectors of $\m{A}$.
$\Sig$ is a diagonal matrix with positive, non-negative, entries $\sigma_i$: the singular values. 
The row/column size of $\Sig$ is the \emph{rank} of $\m{A}$ and is denoted as $r$. We use $\lambda_i$ to denote the $i^{th}$ eigenvalue of the covariance matrix $\m{A}^T\m{A}=\m{V}\Sig^2\m{V}^T$, and $\lambda^{(i)}=\frac{\lambda_i}{\sum_j^r\lambda_j}$ to denote the relative magnitude of each eigenvalue. 
Using the notation of \citet{hsu2019correspondence} for correspondence analysis, we refer to $\lambda_i$ as factor scores and to $\lambda^{(i)}$ as factor score ratios. 
Note that $\lambda_i=\sigma_i^2$ and $\lambda^{(i)}=\frac{\sigma_i^2}{\sum_j^r\sigma_j^2}$. We denote the number of non-zero elements of a matrix/vector with $nnz()$. Given a scalar $a$, $\abs{a}$ is its absolute value. 
The $\ell_\infty$ and $\ell_0$ norm of a vector $\ve{a}$ are defined as $\norm{\ve{a}}_\infty = \max_i (\norm{a_i})$, $\norm{\ve{a}}_0 = nnz(\ve{a})$. If the vector norm is not specified, we refer to the $\ell_2$ norm. The Frobenius norm of a matrix is $\norm{\m{A}}_F = \sqrt{\sum_i^r \sigma_i^2}$, its spectral norm is $\norm{\m{A}} = \max_{\ve{x} \in \R^m}\frac{\norm{\m{A}\ve{x}}}{\norm{\ve{x}}} = \sigma_{max}$, and finally $\norm{\m{A}}_\infty = \max_i (\norm{\ve{a}_{i,\cdot}}_1)$.
A contingency table is a matrix that represents categorical variables in terms of the observed frequency counts.
Finally, when stating the complexity of an algorithm, we use $\widetilde{O}$ instead of $O$ to omit the poly-logarithmic terms on the size of the input data (i.e., $O(\text{polylog}(nm)) = \widetilde{O}(1)$), on the error, and the failure probability. 

\subsection{Quantum preliminaries}
We represent scalars as states of the computational basis of $\mathcal{H}_n$, where $n$ is the number of bits required for binary encoding. The quantum state corresponding to a vector $\ve{v} \in\mathbb{R}^m$ is defined as a \emph{state-vector} $\ket{\ve{v}}=\frac{1}{\norm{v}}\sum_j^m v_j \ket{j}$. 
Note that to build $\ket{\ve{v}}$ we need $\lceil \log m\rceil$ qubits.

\paragraph{Data access.} To access data in the form of state-vectors, we use the following definition of  quantum access.

\begin{definition} [Quantum access to a matrix]
\label{Def:quantum_access}
    We have quantum access to a matrix $\m{A} \in \R^{n \times m}$, if there exists a data structure that allows performing the mappings $\ket{i} \ket{0} \mapsto \ket{i}\ket{\ve{a}_{i, \cdot}} = \ket{i}\frac{1}{\norm{\ve{a}_{i,\cdot}}}\sum_j^m a_{ij}\ket{j}$, for all $i$, and $\ket{0} \mapsto \frac{1}{\norm{\m{A}}_F}\sum_i^n \norm{\ve{a}_{i,\cdot}} \ket{i}$ in time $\widetilde{O}(1)$.
\end{definition}
By combining the two mappings we can create the state $\ket{\m{A}} = \frac{1}{\norm{\m{A}}_F}\sum_i^n\sum_j^m a_{ij}\ket{i}\ket{j}$ in time $\widetilde{O}(1)$.

\citet{kerenidis2016recommendation, kerenidis2020gradient} have described one implementation of such quantum data access. 
Their implementation is based on a classical data structure such that the cost of updating/deleting/inserting one element of the matrix is poly-logarithmic in the number of its entries. 
In addition, their structure gives access to the Frobenius norm of the matrix and the norm of its rows in time $O(1)$.
The cost of creating this data structure is $\widetilde{O}(\text{nnz}(\m{A}))$.
This input model requires the existence of a QRAM \cite{giovannetti2008quantum}.
While there has been some skepticism on the possibility of error-correcting such a complex device, recent results show that bucket-brigade QRAMs are highly resilient to generic noise \cite{hann2021resilience}.

Sometimes it is desirable to normalize the input matrix to have a spectral norm smaller than one. \citet{kerenidis2020gradient} provide an efficient routine to estimate the spectral norm.
\begin{theorem}[Spectral norm estimation \cite{kerenidis2020gradient}]
\label{Theo:spectral_norm_estimation} 
    Let there be quantum access to the matrix $\m{A} \in \R ^{n\times m}$, and let $\epsilon>0$ be a precision parameter. There exists a quantum algorithm that estimates $\norm{\m{A}}$ to additive error $\epsilon\norm{\m{A}}_F$ in time $\widetilde{O}\left(\frac{\log(1/\epsilon)}{\epsilon}\frac{\norm{\m{A}}_F}{\norm{\m{A}}}\right)$.
\end{theorem}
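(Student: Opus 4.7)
The plan is to combine the preparation of $\ket{\m{A}}$ with quantum singular value estimation (SVE) and amplitude amplification, exploiting the fact that in the singular basis
\[
\ket{\m{A}} \;=\; \frac{1}{\norm{\m{A}}_F}\sum_{i,j} a_{ij}\ket{i}\ket{j} \;=\; \frac{1}{\norm{\m{A}}_F}\sum_{i=1}^{r} \sigma_i \ket{\ve{u}_i}\ket{\ve{v}_i},
\]
so that each $\ket{\ve{u}_i}\ket{\ve{v}_i}$ carries amplitude $\sigma_i/\norm{\m{A}}_F$, with the heaviest weight on $\sigma_{\max}=\norm{\m{A}}$. This state is prepared in time $\widetilde{O}(1)$ from the quantum access of Definition~\ref{Def:quantum_access}.

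First, I would apply a quantum SVE routine with additive precision $\delta = \epsilon\norm{\m{A}}_F/c$ for a small constant $c$, producing
\[
\frac{1}{\norm{\m{A}}_F}\sum_i \sigma_i \ket{\ve{u}_i}\ket{\ve{v}_i}\ket{\tilde{\sigma}_i},
\]
at a per-call cost $\widetilde{O}(1/\epsilon)$. Second, I would run a binary search on a threshold $t$ with spacing $\delta$ over $[0,\norm{\m{A}}_F]$, which requires $O(\log(1/\epsilon))$ levels. At each level, amplitude estimation on the marked subspace $\{\tilde{\sigma}_i > t\}$ decides whether any singular value still exceeds $t$. When $t<\sigma_{\max}$ the marked probability is at least $\sigma_{\max}^{2}/\norm{\m{A}}_F^{2}$, so each test costs $\widetilde{O}(\norm{\m{A}}_F/\norm{\m{A}})$ SVE invocations; once $t$ crosses $\sigma_{\max}$, the marked probability drops to zero (up to SVE noise) and the search halts. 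Multiplying the three factors yields the claimed runtime $\widetilde{O}\!\left(\tfrac{\log(1/\epsilon)}{\epsilon}\tfrac{\norm{\m{A}}_F}{\norm{\m{A}}}\right)$, and the halting value of $t$ is the output estimate. Note that the alternative of repeatedly sampling the SVE output distribution would pay $(\norm{\m{A}}_F/\norm{\m{A}})^{2}$ instead of the square-root factor, which is why the amplitude-estimation step is essential.

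The main obstacle will be the joint error analysis. The SVE precision $\delta$ must be chosen small enough that the decision $\tilde{\sigma}_{\max}\gtrless t$ agrees with $\sigma_{\max}\gtrless t$ outside an $O(\delta)$ boundary strip, so that the halting $t$ lies within $\epsilon\norm{\m{A}}_F$ of $\sigma_{\max}$; and each of the $O(\log(1/\epsilon))$ amplitude-estimation rounds must succeed with high enough probability that a union bound still yields constant overall success. I would handle the first by fixing $c$ sufficiently large, and the second by standard median-of-$O(\log\log(1/\epsilon))$ boosting of each round, both of which are absorbed into the $\widetilde{O}$.
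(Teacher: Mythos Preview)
The paper does not prove this theorem; it is stated without proof as a cited result from \cite{kerenidis2020gradient}. That said, your sketch---prepare $\ket{\m{A}}$, apply SVE to tag singular values, then binary-search a threshold $t$ and use amplitude amplification/estimation on the marked subspace $\{\tilde\sigma_i>t\}$ to test emptiness---is exactly the template of Algorithm~IV.3 in that reference, and the same template the present paper reuses in its Algorithms~\ref{alg:factor_score_sum_check} and~\ref{alg_main:binary_search_theta}. Your cost accounting ($\widetilde O(1/\epsilon)$ per SVE call, $O(\log(1/\epsilon))$ binary-search levels, $\widetilde O(\norm{\m{A}}_F/\norm{\m{A}})$ per emptiness test) matches the stated bound.

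One small point worth tightening: you describe each level as ``amplitude estimation'' deciding whether any singular value exceeds $t$, but what you actually need is a yes/no emptiness test, which is more naturally phrased as amplitude \emph{amplification} (or Grover search with an upper bound on iterations): if the marked set is nonempty its weight is at least $\norm{\m{A}}^2/\norm{\m{A}}_F^2$, so $O(\norm{\m{A}}_F/\norm{\m{A}})$ iterations find a marked item with high probability, and failure to find one certifies emptiness up to the SVE boundary strip you already identified. This is a presentational issue, not a gap; your error discussion (consistent SVE estimates, median boosting over the $O(\log(1/\epsilon))$ rounds) correctly handles the remaining details.
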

If we have $\norm{\m{A}}$, we can create quantum access to $\m{A}'=\frac{\m{A}}{\norm{\m{A}}} = \m{U}\frac{\Sig}{\sigma_{max}}\m{V}^T$ in time $\widetilde{O}\left(\text{nnz}(\m{A})\right)$ by dividing each entry of the data structure. 
Once we have quantum access to a dataset, it is possible to apply a pipeline of quantum machine learning algorithms for data representation, analysis, clustering, and classification \cite{rebentrost2014quantum,kerenidis2020quantum,wang2017quantum,allcock2020quantum, kerenidis2019qmeans, kerenidis2020mnist}. 
Since the cost of each step of the pipeline should be evaluated independently, we consider $\widetilde{O}(nnz(\m{A}))$ to be a pre-processing cost and do not include it in our run-times.

We conclude this section by stating a useful claim that connects errors on classical vectors with errors on quantum states. 
\begin{claim} [Closeness of state-vectors \cite{kerenidis2020gradient}]
\label{claim:vector_quantum_distance}
    Let $\theta$ be the angle between vectors $\ve{x}, \overline{\ve{x}}$ and assume that $\theta < \pi/2$.
    Then, $\norm{\ve{x} - \overline{\ve{x}}} \leq \epsilon$ implies $\norm{\ket{\ve{x}} - \ket{\overline{\ve{x}}}} \leq \sqrt{2}\frac{\epsilon}{\norm{\ve{x}}}$.
\end{claim}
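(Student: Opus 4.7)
The plan is to translate both sides of the inequality into trigonometric quantities involving $\theta$, obtain a simple geometric lower bound on $\norm{\ve{x}-\overline{\ve{x}}}$, and then close the estimate with an elementary half-angle identity that uses the hypothesis $\theta<\pi/2$.

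First I would observe that $\ket{\ve{x}}$ and $\ket{\overline{\ve{x}}}$ are unit vectors whose inner product is $\cos\theta$, so that
\begin{equation*}
\norm{\ket{\ve{x}}-\ket{\overline{\ve{x}}}}^2 \;=\; 2-2\cos\theta \;=\; 4\sin^2(\theta/2).
\end{equation*}
Therefore the target inequality is equivalent to showing $2\sin(\theta/2)\le \sqrt{2}\,\epsilon/\norm{\ve{x}}$, which is a purely one-variable estimate on the angle $\theta$.

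Next, I would produce a clean lower bound on $\norm{\ve{x}-\overline{\ve{x}}}$. The geometric idea is that the perpendicular distance from $\ve{x}$ to the line through the origin in the direction of $\overline{\ve{x}}$ equals $\norm{\ve{x}}\sin\theta$, and since $\overline{\ve{x}}$ lies on that line we must have
\begin{equation*}
\norm{\ve{x}}\sin\theta \;\le\; \norm{\ve{x}-\overline{\ve{x}}} \;\le\; \epsilon,
\end{equation*}
hence $\sin\theta \le \epsilon/\norm{\ve{x}}$. Algebraically the same bound drops out of expanding $\norm{\ve{x}-\overline{\ve{x}}}^2 = (\norm{\ve{x}}-\norm{\overline{\ve{x}}}\cos\theta)^2 + (\norm{\overline{\ve{x}}}\sin\theta)^2$ and noting that the first square is non-negative, but the projection picture makes the constant transparent.

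Finally I would use the hypothesis $\theta<\pi/2$ to convert the $\sin\theta$ bound into a $\sin(\theta/2)$ bound. Since $\theta/2<\pi/4$, one has $\cos(\theta/2)>1/\sqrt{2}$, so the double-angle identity $\sin\theta = 2\sin(\theta/2)\cos(\theta/2)$ gives
\begin{equation*}
2\sin(\theta/2) \;=\; \frac{\sin\theta}{\cos(\theta/2)} \;<\; \sqrt{2}\,\sin\theta \;\le\; \frac{\sqrt{2}\,\epsilon}{\norm{\ve{x}}},
\end{equation*}
and combining with the first step yields the claim. The only delicate point, and the reason the assumption $\theta<\pi/2$ appears in the statement, is the last step: without this bound on $\theta$ the quotient $\sin\theta/\cos(\theta/2)$ is unbounded and the constant $\sqrt{2}$ would fail. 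Everything else reduces to elementary trigonometry, so there is no real obstacle beyond keeping track of which norm ($\norm{\ve{x}}$ or $\norm{\overline{\ve{x}}}$) appears in the denominator; picking the projection onto the $\overline{\ve{x}}$-line (rather than the $\ve{x}$-line) is what produces $\norm{\ve{x}}$ in the denominator, matching the statement.
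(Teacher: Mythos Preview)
The paper does not actually prove this claim; it is quoted in the preliminaries section as a known result from \cite{kerenidis2020gradient} and used as a black box. So there is no ``paper's own proof'' to compare against. Your argument is correct and is essentially the standard one: compute $\norm{\ket{\ve{x}}-\ket{\overline{\ve{x}}}}=2\sin(\theta/2)$, lower-bound $\norm{\ve{x}-\overline{\ve{x}}}$ by the perpendicular distance $\norm{\ve{x}}\sin\theta$, and then use $\cos(\theta/2)>1/\sqrt{2}$ (valid because $\theta<\pi/2$) to convert a bound on $\sin\theta$ into one on $2\sin(\theta/2)$ with the factor $\sqrt{2}$.

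One small slip worth fixing: the algebraic expansion you wrote,
\[
\norm{\ve{x}-\overline{\ve{x}}}^2=(\norm{\ve{x}}-\norm{\overline{\ve{x}}}\cos\theta)^2+(\norm{\overline{\ve{x}}}\sin\theta)^2,
\]
is the decomposition along the direction of $\ve{x}$, so dropping the first square yields $\norm{\overline{\ve{x}}}\sin\theta\le\epsilon$, with the \emph{wrong} norm in front. To match your (correct) geometric statement that projecting onto the $\overline{\ve{x}}$-line gives perpendicular distance $\norm{\ve{x}}\sin\theta$, you should instead decompose along $\overline{\ve{x}}$, obtaining
\[
\norm{\ve{x}-\overline{\ve{x}}}^2=(\norm{\ve{x}}\cos\theta-\norm{\overline{\ve{x}}})^2+(\norm{\ve{x}}\sin\theta)^2,
\]
which then yields $\norm{\ve{x}}\sin\theta\le\epsilon$ as claimed. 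Your closing remark about choosing the projection correctly shows you understand the point; just make the algebraic formula consistent with it.
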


\paragraph{Useful subroutines}

We state two relevant quantum linear algebra results: quantum singular value estimation (SVE) and quantum matrix-vector multiplication.

\begin{theorem}[Singular value estimation \cite{kerenidis2020gradient}] 
\label{Theo:singular_value_estimation} 
    Let there be quantum access to $\m{A} \in \R ^{n\times m}$, with singular value decomposition $\m{A}=\sum_i^r\sigma_i \ve{u}_i\ve{v}_i^T$ and $r=min(n,m)$. 
    Let $\epsilon>0$ be a precision parameter. 
    It is possible to perform the mapping 
    $\ket{b}=\sum_i\alpha_i\ket{\ve{v}_i} \mapsto \sum_i\alpha_i\ket{\ve{v}_i}\ket{\overline{\sigma}_i}$, such that $\abs{\frac{\sigma_i}{\mu(\m{A})}-\overline{\sigma}_i} \leq \epsilon$ with probability at least $1 - 1/poly(m)$, in time $\widetilde{O}(\frac{1}{\epsilon})$ where $\mu(\m{A}) = \min\limits_{p \in [0,1]}(\norm{\m{A}}_F, \sqrt{s_{2p}(\m{A})s_{2(1-p)}(\m{A}^T)})$ and
    $s_p(\m{A}) = \max\limits_{i}\norm{\m{a}_{i,\cdot}}_p^p$. Similarly, we can have $\abs{\sigma_i-\overline{\sigma}_i}\leq \epsilon$ in time $\widetilde{O}(\frac{\mu(\m{A})}{\epsilon})$.
\end{theorem}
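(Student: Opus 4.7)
The plan is to reduce singular value estimation to quantum phase estimation on an appropriately constructed unitary whose spectrum encodes $\sigma_i/\mu(\m{A})$. First, using the quantum data access from Definition~\ref{Def:quantum_access}, together with an analogous data structure storing $p$-weighted entries (of the form $\mathrm{sign}(a_{ij})|a_{ij}|^{2p}$ and $|a_{ij}|^{2(1-p)}$), I would construct two isometries $T_L, T_R$ implementable in time $\widetilde{O}(1)$ such that $T_L^\dagger T_R = \m{A}/\mu(\m{A})$, where $\mu(\m{A})$ is the chosen normalization attaining the minimum over $p\in[0,1]$. The case $\mu=\norm{\m{A}}_F$ follows directly from Definition~\ref{Def:quantum_access}, while the interpolating cases use the auxiliary $p$-norm structure; the optimal $p$ can be precomputed classically at preprocessing time.

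Next, from $T_L, T_R$ I would form the unitary $W=(2T_R T_R^\dagger - I)(2T_L T_L^\dagger - I)$ acting on the shared register, where each reflection is implemented as $T\cdot(2\ket{0}\bra{0}-I)\cdot T^\dagger$ using the isometries. A direct calculation on the two-dimensional invariant subspaces spanned by $\{T_R\ket{\ve{v}_i},\, T_L\ket{\ve{u}_i}\}$ shows that $W$ acts as a rotation by angle $2\theta_i$ with $\cos\theta_i = \sigma_i/\mu(\m{A})$, so its eigenvalues on these subspaces are $e^{\pm 2i\theta_i}$ and the corresponding eigenvectors are obtained from $T_R\ket{\ve{v}_i}$.

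Then I would apply quantum phase estimation to $W$ with the input $\ket{b}$ first lifted by $T_R$. Phase estimation with $O(1/\epsilon)$ applications of $W$ returns an $\epsilon$-additive estimate of $\theta_i$, hence of $\cos\theta_i=\sigma_i/\mu(\m{A})$, with success probability $1-1/\mathrm{poly}(m)$ by a standard median-of-means amplification. Applied coherently to $\sum_i\alpha_i\ket{\ve{v}_i}$ and followed by uncomputation of the embedding, this yields $\sum_i\alpha_i\ket{\ve{v}_i}\ket{\overline{\sigma}_i}$ in time $\widetilde{O}(1/\epsilon)$. The absolute-precision variant follows by rescaling the target precision to $\epsilon/\mu(\m{A})$, giving the claimed $\widetilde{O}(\mu(\m{A})/\epsilon)$ run-time.

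The main obstacle will be verifying that the $p$-weighted data structure together with the isometries $T_L, T_R$ correctly implements a normalized block-encoding of $\m{A}$, i.e.\ that $\bra{i}T_L^\dagger T_R\ket{j} = a_{ij}/\mu(\m{A})$ with $\mu(\m{A}) = \sqrt{s_{2p}(\m{A})\,s_{2(1-p)}(\m{A}^T)}$, and that the reflections used to build $W$ remain implementable in poly-logarithmic time. Once this normalization identity is in place, the rest of the argument is standard phase-estimation error analysis, and the $\widetilde{O}$ notation absorbs the logarithmic factors from amplification and from the poly-logarithmic depth of the data-structure queries.
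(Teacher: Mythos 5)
The paper does not prove this theorem; it imports it from Kerenidis and Prakash \cite{kerenidis2020gradient} (building on \cite{kerenidis2016recommendation}). Your reconstruction --- building two isometries from the QRAM and auxiliary $p$-weighted data structures so that $T_L^\dagger T_R = \m{A}/\mu(\m{A})$, forming the Szegedy-style walk $W$ as a product of two reflections, diagonalizing it on the two-dimensional invariant subspaces spanned by $T_R\ket{\ve{v}_i}$ and $T_L\ket{\ve{u}_i}$ where it acts as a rotation by $2\theta_i$ with $\cos\theta_i = \sigma_i/\mu(\m{A})$, and running phase estimation with $O(1/\epsilon)$ controlled applications followed by uncomputation and a median amplification --- is essentially the argument used in the cited reference, and the rescaling step for the absolute-precision variant is also correct.
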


Unlike previous results with Hamiltonian simulations \cite{rebentrost2014quantum}, this algorithm enables performing conditional rotations using the singular values of a matrix without any special requirement (e.g., sparsity, being square, Hermitian, etc.). 
By choosing the same matrix $\ket{\m{A}} =  \frac{1}{\norm{\m{A}}_F}\sum_i^n\sum_j^m a_{ij}\ket{i}\ket{j} = \frac{1}{\norm{\m{A}}_F}\sum_i^k\sigma_i\ket{\ve{u}_i}\ket{\ve{v}_i}$ as starting state $\ket{b}$, we obtain a superposition of all the singular values entangled with the respective left and right singular vectors $\frac{1}{\norm{\m{A}}_F}\sum_i^r \sigma_i \ket{\ve{u}_i}\ket{\ve{v}_i}\ket{\overline{\sigma}_i}$.
In this case, the requirement $r=\min (n,m)$ is not needed anymore as $\ket{b} = \ket{\m{A}}$ can be fully decomposed in terms of $\m{A}$'s right singular vectors. 

This algorithm uses phase estimation.
In this work, we consider this algorithm to use a consistent version of phase estimation, so that the errors in the estimates of the singular values are consistent across multiple runs \citep{ta2013inverting, kerenidis2020gradient}. 

\begin{theorem}[Matrix-vector multiplication \cite{chakraborty2018power} (Lemma 24, 25)]
\label{Theo:matrix_vec_mul} 
    Let there be quantum access to the matrix $\m{A} \in \R ^{n\times n}$, with $\sigma_{max} \leq 1$, and to a vector $\ve{x} \in \R^{n}$.
    Let $\norm{\m{A}\ve{x}} \geq \gamma$. 
    There exists a quantum algorithm that creates a state $\ket{\ve{z}}$ such that $\norm{\ket{\ve{z}} - \ket{\m{A}\ve{x}}} \leq \epsilon$ in time $\widetilde{O}(\frac{1}{\gamma}\mu(\m{A})\log(1/\epsilon))$, with probability at least $1-1/poly(n)$.
    Increasing the run-time by a multiplicative factor $\widetilde{O}\left(\frac{1}{\eta}\right)$ one can retrieve an estimate of $\norm{\m{A}\ve{x}}$ to relative error $\eta$.
\end{theorem}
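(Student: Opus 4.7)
The plan is to reduce the theorem to standard block-encoding and amplitude-amplification machinery. From the quantum data structure of Definition \ref{Def:quantum_access}, the two state-preparation unitaries $\ket{i}\ket{0}\mapsto\ket{i}\ket{\ve{a}_{i,\cdot}}$ and $\ket{0}\mapsto\frac{1}{\norm{\m{A}}_F}\sum_i\norm{\ve{a}_{i,\cdot}}\ket{i}$ can be composed to produce a $\mu(\m{A})$-block-encoding of $\m{A}$, meaning a unitary $U_{\m{A}}$ acting on $\widetilde{O}(1)$ ancilla qubits and the data register such that $(\bra{0}\otimes I)\,U_{\m{A}}\,(\ket{0}\otimes I) = \m{A}/\mu(\m{A})$, implementable in time $\widetilde{O}(1)$ per call. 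The general fact that the choice of row- and column-norm preparations yields exactly the $\mu(\m{A})$ scaling is standard (it is precisely the content of the block-encoding lemmas in Chakraborty et al.\ that the theorem is citing). Since $\sigma_{\max}\leq 1$, the normalization is consistent with $\m{A}/\mu(\m{A})$ being a valid sub-block of a unitary.

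Next, I would prepare $\ket{\ve{x}}$ in time $\widetilde{O}(1)$ using the data structure for $\ve{x}$, and then apply $U_{\m{A}}$ to $\ket{0}\ket{\ve{x}}$. The result is of the form
\begin{equation*}
    \tfrac{\norm{\m{A}\ve{x}}}{\mu(\m{A})}\,\ket{0}\ket{\m{A}\ve{x}} + \bigl|\bot\bigr\rangle,
\end{equation*}
where the "good" branch has amplitude at least $\gamma/\mu(\m{A})$. Amplitude amplification around the $\ket{0}$-ancilla subspace then extracts $\ket{\m{A}\ve{x}}$ with constant probability after $O(\mu(\m{A})/\gamma)$ rounds, each of which invokes $U_{\m{A}}$ and its inverse once. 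To push the infidelity of the output state below $\epsilon$, I would combine this with fixed-point amplitude amplification (or equivalently a QSVT polynomial approximating the sign function on $[\gamma/\mu(\m{A}),1]$); both incur only an extra multiplicative $O(\log(1/\epsilon))$ factor, producing a $\ket{\ve{z}}$ with $\norm{\ket{\ve{z}}-\ket{\m{A}\ve{x}}}\leq\epsilon$ in total time $\widetilde{O}\!\left(\frac{\mu(\m{A})}{\gamma}\log(1/\epsilon)\right)$. The failure probability $1/\mathrm{poly}(n)$ is obtained by standard boosting, absorbed in the $\widetilde{O}$.

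For the norm estimate, the success amplitude of the block-encoded circuit is exactly $\norm{\m{A}\ve{x}}/\mu(\m{A})$. I would invoke amplitude estimation on the unamplified circuit: to obtain a relative-error-$\eta$ estimate of this amplitude (and hence of $\norm{\m{A}\ve{x}}$ after multiplying by the known $\mu(\m{A})$), amplitude estimation requires $O\!\left(\frac{1}{\eta}\cdot\frac{\mu(\m{A})}{\norm{\m{A}\ve{x}}}\right)=O(\mu(\m{A})/(\eta\gamma))$ queries, giving exactly the stated $\widetilde{O}(1/\eta)$ multiplicative overhead on the base run-time.

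The main obstacle I anticipate is handling the error propagation carefully: the block-encoding is only approximate due to finite-precision arithmetic in the QRAM, amplitude amplification can overshoot when the amplitude is only lower-bounded rather than known exactly, and the output-state error must be tracked through the non-unit success amplitude. Fixed-point amplitude amplification (Yoder--Low--Chuang) or the QSVT implementation of matrix inversion-style polynomials handles the unknown-amplitude issue while simultaneously yielding the $\log(1/\epsilon)$ dependence; threading these pieces together to get the precise statement, rather than a slightly weaker polynomial-in-$1/\epsilon$ scaling, is the delicate technical step.
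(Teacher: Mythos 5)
This theorem is imported verbatim from Chakraborty, Gily\'en, and Jeffery (Lemma~24,~25 of \cite{chakraborty2018power}); the paper itself supplies no proof, so there is no in-paper argument to compare against. Your block-encoding-plus-amplitude-amplification sketch is a faithful reconstruction of the argument in the cited source, and the pieces you identify (a $\mu(\m{A})$-normalized block-encoding built from the two QRAM state-preparation unitaries, $O(\mu(\m{A})/\gamma)$ rounds of amplification on the $\ket{0}$-ancilla branch, amplitude estimation for the norm) are exactly the right ones. Two small calibration points: the good-branch amplitude is $\norm{\m{A}\ve{x}}/\bigl(\norm{\ve{x}}\,\mu(\m{A})\bigr)$, so the theorem's $\gamma$ should be read with $\ve{x}$ normalized; and in the cited lemma the $\log(1/\epsilon)$ factor stems mainly from building the block-encoding to sufficient bit-precision and from robust amplification against that imprecision, rather than from fixed-point amplitude amplification per se --- though, as you note, fixed-point/QSVT amplification gives the same $\log(1/\epsilon)$ scaling and simultaneously handles the fact that the success amplitude is only lower-bounded, so your route is sound.
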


\paragraph{Data output.} Finally, to read out the quantum states, we state one version of amplitude amplification and estimation, and two \emph{state-vector} tomographies. 
\begin{theorem}[Amplitude amplification and estimation \cite{brassard2002quantum, kerenidis2019portf}]
\label{Theo:ampamp}
    Let there be a unitary that performs the mapping $U_x: \ket{0} \mapsto sin(\theta)\ket{\ve{x},0} + cos(\theta)\ket{\ve{G},0^\perp}$, where $\ket{\ve{G}}$ is a garbage state, in time $T(U_x)$. Then, $sin(\theta)^2$ can be estimated to multiplicative error $\eta$ in time $O(\frac{T(U_x)}{\eta sin(\theta)})$ or to additive error $\eta$ in time $O(\frac{T(U_x)}{\eta})$, and $\ket{\ve{x}}$ can be generated in expected time $O(\frac{T(U_x)}{sin(\theta)})$.  
\end{theorem}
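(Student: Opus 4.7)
The plan is to build the standard Grover-type amplification operator acting on the two-dimensional subspace $\mathcal{S}$ spanned by the ``good'' state $\ket{\ve{x},0}$ and the ``bad'' state $\ket{\ve{G},0^\perp}$, and then read off the three claims from its spectral properties. I would define the reflection $S_\chi = \I - 2\ket{\ve{G},0^\perp}\bra{\ve{G},0^\perp}$, implementable in $O(1)$ time by phase-flipping conditional on the flag register being in the state $\ket{0^\perp}$, together with $S_0 = \I - 2\ket{0}\bra{0}$, likewise $O(1)$. The amplification operator is then $\mathcal{G} = -U_x S_0 U_x^\dagger S_\chi$, costing $2T(U_x) + O(1)$ per application. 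A direct calculation shows that $\mathcal{G}$ preserves $\mathcal{S}$ and acts on it as a rotation by angle $2\theta$; its eigenvalues restricted to $\mathcal{S}$ are $e^{\pm 2i\theta}$, with eigenvectors $\ket{\psi_\pm}$ satisfying $U_x\ket{0} = -\tfrac{i}{\sqrt{2}}\bigl(e^{i\theta}\ket{\psi_+} - e^{-i\theta}\ket{\psi_-}\bigr)$.

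For part (3), I would iterate $\mathcal{G}$ approximately $k = \lfloor \pi/(4\theta)\rfloor$ times on $U_x\ket{0}$, so that the state becomes $\sin((2k+1)\theta)\ket{\ve{x},0} + \cos((2k+1)\theta)\ket{\ve{G},0^\perp}$ with $(2k+1)\theta$ close to $\pi/2$. Measuring the flag register then yields the good outcome with constant probability, leaving $\ket{\ve{x}}$ in the data register. Since $\theta$ is unknown, I would wrap this in the Brassard--H\o yer--Mosca--Tapp exponential search scheme, whose expected number of $\mathcal{G}$-applications is $O(1/\sin\theta)$, giving expected total runtime $O(T(U_x)/\sin\theta)$.

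For parts (1) and (2), I would run quantum phase estimation on $\mathcal{G}$ with input state $U_x\ket{0}$ and $t$ ancilla qubits. Because $U_x\ket{0} \in \mathrm{span}\{\ket{\psi_+},\ket{\psi_-}\}$, the measurement returns an estimate $\overline{\theta}$ of $\pm\theta$ (mod $2\pi$) with additive precision $O(2^{-t})$. Setting $a = \sin^2\theta$ and $\overline{a} = \sin^2\overline{\theta}$, and using $|\partial a/\partial\theta| = |\sin(2\theta)| \leq 2\sin\theta$, a short calculation (essentially Theorem~12 of \cite{brassard2002quantum}) yields $|\overline{a}-a| \leq 2\cdot 2^{-t}\sin\theta + 2^{-2t}$. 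To obtain additive error $\eta$ on $a$ it suffices to take $2^{-t} = \Theta(\eta)$, for total runtime $O(T(U_x)/\eta)$; to obtain multiplicative error $\eta$, i.e.\ additive error $\eta\sin^2\theta$, one takes $2^{-t}=\Theta(\eta\sin\theta)$, for total runtime $O(T(U_x)/(\eta\sin\theta))$.

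The main obstacle is purely bookkeeping around the $\sin^2$-to-$\theta$ error conversion in the multiplicative case, and ensuring that the exponential-search wrapper for part (3) gives a clean \emph{expected}-time guarantee rather than a high-probability worst-case bound; the spectral analysis of $\mathcal{G}$ itself is entirely standard and already appears in the references cited with the statement.
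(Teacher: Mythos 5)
The paper states Theorem~\ref{Theo:ampamp} as a known result with citations to Brassard--H{\o}yer--Mosca--Tapp and to Kerenidis et al., and does not reproduce a proof; your argument correctly reconstructs exactly that standard proof (Grover-type rotation on the two-dimensional invariant subspace, eigenvalues $e^{\pm 2i\theta}$, phase estimation for both error regimes with the $\sin^2$-to-$\theta$ conversion as in Theorem~12 of \cite{brassard2002quantum}, and the exponential-search wrapper for unknown $\theta$ to obtain the expected-time generation bound). The only minor imprecision is in writing $S_\chi$ as a rank-one reflection $\I - 2\ket{\ve{G},0^\perp}\bra{\ve{G},0^\perp}$ rather than the flag-register reflection $\I - 2P_{0^\perp}$, but these agree on the invariant two-dimensional subspace, so the argument goes through unchanged.
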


\begin{theorem} [$\ell_2$ state-vector tomography \cite{kerenidis2018interiorpoint, kerenidis2019portf}] 
\label{Theo:l2_vector_tomography}
    Given a unitary mapping $U_x: \ket{0} \mapsto \ket{\ve{x}}$ in time $T(U_x)$ and $\delta>0$, there is an algorithm that produces an estimate $\overline{\ve{x}} \in \R^m$ with $\norm{\overline{\ve{x}}} = 1$ such that $\norm{\ve{x} - \overline{\ve{x}}} \leq \delta$ with probability at least $1 - 1/poly(m)$ in time $O(T(U_x)\frac{m \log m}{\delta^2})$. 
\end{theorem}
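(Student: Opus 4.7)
The plan is to estimate $\ve{x}$ entrywise by repeated measurement of $U_x\ket{0}$, learning first the squared amplitudes $p_j := x_j^2$ and then the signs $s_j := \mathrm{sign}(x_j)$, and finally returning the renormalized vector $\overline{\ve{x}}$ with $\overline{x}_j \propto s_j\sqrt{p_j}$. I would allocate a sample budget $N = \Theta(m\log m/\delta^2)$ split into two phases of $N/2$ shots each; each phase costs $T(U_x)$ per shot and will be shown to contribute at most $\delta^2/4$ to $\|\ve{x}-\overline{\ve{x}}\|^2$, giving the claimed total run-time $O(T(U_x)\cdot m\log m/\delta^2)$.

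In Phase~1 we measure $U_x\ket{0}$ in the computational basis and let $\tilde p_j$ be the empirical frequency of outcome $j$. A multiplicative Chernoff bound together with a union bound over $j\in\{1,\ldots,m\}$ yields $|\tilde p_j - p_j| \le c\max(\sqrt{p_j\log m/N},\,\log m/N)$ for all $j$ simultaneously, with probability at least $1-1/\mathrm{poly}(m)$. I would split the indices into \emph{heavy} ones with $p_j\ge\delta^2/m$ and \emph{light} ones with $p_j<\delta^2/m$. On heavy indices, $\sqrt{\tilde p_j}+\sqrt{p_j}=\Theta(\sqrt{p_j})$, so $(\sqrt{\tilde p_j}-\sqrt{p_j})^2 \le (\tilde p_j-p_j)^2/p_j = O(\log m/N)$ per index, summing to $O(m\log m/N)=O(\delta^2)$. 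On light indices we use $(\sqrt a-\sqrt b)^2 \le a+b$ together with $\sum_{\text{light}} p_j \le \delta^2$ (by the threshold) and $\sum_{\text{light}} \tilde p_j = O(\delta^2)$ (from the Chernoff estimate). Choosing the constants in $N$ appropriately makes the Phase~1 squared error at most $\delta^2/4$.

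In Phase~2 I recover the signs by interference against a uniform reference. Let $U_+:\ket{0}\mapsto\tfrac{1}{\sqrt m}\sum_j\ket{j}$ be the $\lceil\log m\rceil$ Hadamards preparing the uniform superposition. Using one ancilla plus controlled-$U_x$ and controlled-$U_+$, prepare $\tfrac{1}{\sqrt 2}(\ket{0}\ket{\ve{x}}+\ket{1}\ket{+}_m)$, apply a Hadamard on the ancilla, and measure both registers; outcome $(0,j)$ occurs with probability $q_j=\tfrac14(x_j+1/\sqrt m)^2$. Estimating $q_j$ from the $N/2$ shots and comparing against $\tfrac14(\sqrt{\tilde p_j}\pm 1/\sqrt m)^2$ identifies $s_j$ correctly whenever $|x_j|\ge c'\delta/\sqrt m$. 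A sign error on a light index contributes at most $4p_j$ to the squared $\ell_2$ error; summing over the light set gives at most $4\sum_{\text{light}} p_j = O(\delta^2)$, again $\le \delta^2/4$ with suitable constants. A triangle inequality then yields $\|\ve{x}-\overline{\ve{x}}\|\le\delta$, and the final renormalization can only decrease this distance since $\ve{x}$ is already a unit vector.

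The main obstacle I expect is the error-budget bookkeeping in the heavy/light split of Phase~1: tying the threshold $\delta^2/m$ to the constants hidden in $N$ so that the multiplicative Chernoff contribution on heavy indices and the crude $a+b$ contribution on light indices both fall below $\delta^2/4$, and matching these to the sign-misclassification accounting of Phase~2. The remaining ingredients — Chernoff concentration, the uniform-reference interference circuit, the union bound over the $m$ indices, and the renormalization step — are routine.
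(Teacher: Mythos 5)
This theorem is imported from Kerenidis--Prakash (the interior-point and portfolio papers); the present paper states it without proof, so there is no in-paper argument to check against, only the cited one. Your two-phase structure — magnitude estimation by direct computational-basis measurement with a heavy/light split at $p_j \approx \delta^2/m$, then an interference circuit for signs, with an $O(m\log m/\delta^2)$ sample budget — is the standard one, and your Phase~1 accounting (multiplicative Chernoff on heavy indices, $(\sqrt a-\sqrt b)^2\le a+b$ on light ones) is correct. Where you genuinely depart from the cited proof is Phase~2. Kerenidis--Prakash use as interference reference the Phase-1 magnitude state $\ket{\tilde{\ve{p}}}=\sum_j\sqrt{\tilde p_j}\ket{j}$, so the sign gap at index $j$ is $\Theta(p_j)$: the $(0,j)$ probability is near $p_j$ for the correct sign and near $0$ for the wrong one, which makes the concentration argument essentially index-local. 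You instead take the fixed uniform reference $\ket{+}_m$, so the sign signal is only $P(0,j)-P(1,j)=x_j/\sqrt m$ and must be resolved against noise of order $\sqrt{(x_j^2+1/m)/N}$; one checks that $N=\Theta(m\log m/\delta^2)$ still gives signal-to-noise $\Omega(\sqrt{\log m})$ for heavy $j$, so the union bound goes through, but the margins are tighter than in the cited construction. The trade is a much simpler reference (Hadamards, no data structure needed to prepare $\ket{\tilde{\ve{p}}}$) against a slightly more delicate sign analysis; both yield $O(T(U_x)\,m\log m/\delta^2)$. Two small corrections: your unnormalized estimate $\overline{x}'_j=s_j\sqrt{\tilde p_j}$ already satisfies $\norm{\overline{\ve{x}}'}=1$ exactly, since the empirical frequencies $\tilde p_j=n_j/N$ sum to $1$, so no renormalization step is needed; and the claim that renormalization ``can only decrease'' the distance to the unit vector $\ve{x}$ is false in general — radial projection onto the sphere can increase the $\ell_2$ distance — though it is harmless here precisely because the norm is already $1$.
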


\begin{theorem} [$\ell_\infty$ state-vector tomography \cite{kerenidis2019infty}] 
\label{Theo:linf_vector_tomography} 
    Given access to a unitary mapping $U_x: \ket{0} \mapsto \ket{\ve{x}}$ and its controlled version in time $T(U_x)$, and  $\delta>0$, there is an algorithm that produces an estimate $\overline{\ve{x}} \in \R^m$ with $\norm{\overline{\ve{x}}}=1$ such that $\norm{\ve{x} - \overline{\ve{x}}}_\infty \leq \delta$ with probability at least $1 - 1/poly(m)$ in time $O(T(U_x)\frac{\log m}{\delta^2})$. 
\end{theorem}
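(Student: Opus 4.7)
The plan is to reduce $\ell_\infty$ tomography to sampling in the computational basis plus a sign-recovery subroutine that exploits the controlled version of $U_x$. First I would fix the target failure probability at $1/\mathrm{poly}(m)$ and set a sampling budget $N = O(\log m / \delta^2)$, since this is the quantity that must ultimately drive the run-time. Measuring $N$ copies of $\ket{\ve{x}}$ in the computational basis yields empirical frequencies $\hat{p}_j$ for $p_j = x_j^2$. The magnitude estimate is then $|\tilde{x}_j| = \sqrt{\hat{p}_j}$, and the essential question is whether $N = O(\log m/\delta^2)$ samples suffice to guarantee $|\,|\tilde{x}_j| - |x_j|\,| \leq \delta$ uniformly in $j$.

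The key analytical step is to apply Bernstein's inequality (rather than the cruder Hoeffding bound, which would force $N \sim 1/\delta^4$). Bernstein gives $|\hat p_j - p_j| \lesssim \sqrt{p_j \log(m)/N} + \log(m)/N$ for each coordinate with probability $\geq 1 - 1/\mathrm{poly}(m)$, and a union bound over the $m$ indices preserves this with overall probability $1 - 1/\mathrm{poly}(m)$. I would then split into cases: when $p_j \gg \delta^2$, write $\sqrt{\hat p_j} - \sqrt{p_j} = (\hat p_j - p_j)/(\sqrt{\hat p_j} + \sqrt{p_j})$ and plug in the Bernstein bound to get $O(\delta)$; when $p_j \lesssim \delta^2$, observe directly that both $|x_j|$ and $\sqrt{\hat p_j}$ are $O(\delta)$, so the difference is trivially $O(\delta)$. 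Rescaling $\delta$ by an absolute constant at the beginning absorbs the hidden constants.

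Next I would recover the signs using the controlled $U_x$. The standard trick is to prepare $\tfrac{1}{\sqrt{2}}(\ket{0}\ket{\ve{y}} + \ket{1}\ket{\ve{x}})$ for some reference unit vector $\ve{y}$ with known, nonzero entries (e.g.\ the uniform superposition), apply a Hadamard on the first qubit, and measure. The outcome statistics in the joint basis $(\{\ket{+},\ket{-}\}, \{\ket{j}\})$ yield frequencies proportional to $(y_j \pm x_j)^2$, whose difference determines $\mathrm{sign}(x_j)$ whenever $|x_j|$ is above threshold $\Omega(\delta)$. Reusing the Bernstein analysis with the same $N$ shows that signs are recovered correctly for all coordinates with $|x_j| \geq C\delta$ with high probability; the coordinates below threshold do not need correct signs because their magnitude is already $O(\delta)$. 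Combining magnitudes and signs gives $\overline{\ve{x}}$ with $\|\ve{x} - \overline{\ve{x}}\|_\infty \leq \delta$; a final renormalization of $\overline{\ve{x}}$ changes the $\ell_\infty$ error only by a lower-order term because $\|\overline{\ve{x}}\|$ is $1 \pm O(\sqrt{m}\delta)$ already, and one can clean this up by increasing $N$ by a constant factor.

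The main obstacle I expect is juggling the two error regimes (large vs.\ small $|x_j|$) without losing a factor of $1/\delta^2$ in the sample complexity; this is precisely what forces the use of a Bernstein-type variance-sensitive concentration bound rather than a worst-case Hoeffding bound. A secondary subtlety is making sure the sign-recovery step runs at the same asymptotic cost, which is guaranteed as long as the reference state $\ket{\ve{y}}$ can be prepared in $O(T(U_x))$ time (e.g.\ uniform superposition via Hadamards), so that each round of the sign estimation still costs $O(T(U_x))$, keeping the total at $O(T(U_x)\log(m)/\delta^2)$.
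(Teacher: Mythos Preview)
The paper does not prove this statement: Theorem~\ref{Theo:linf_vector_tomography} is stated in the preliminaries with a citation to \cite{kerenidis2019infty} and is used as a black box throughout. There is therefore no ``paper's own proof'' to compare against; your sketch is essentially the argument of the cited reference (Bernstein/Chernoff for the magnitudes via computational-basis sampling, plus a Hadamard-test-style interference with a known reference state for the signs), so in that sense your approach matches the intended one.

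One remark on your last paragraph: the renormalization worry is a red herring, and your attempted fix (``$\|\overline{\ve x}\| = 1 \pm O(\sqrt{m}\,\delta)$, absorb by constants'') would actually fail, since $\sqrt{m}\,\delta$ need not be small in this regime. The point is that no renormalization is needed at all: the empirical frequencies $\hat p_j$ satisfy $\sum_j \hat p_j = 1$ exactly, so $\sum_j |\tilde x_j|^2 = \sum_j \hat p_j = 1$, and attaching signs does not change the norm. Hence $\overline{\ve x}$ is already a unit vector by construction.
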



\section{Novel quantum methods}
\label{Section:novel_quantum_subroutines}
Building from the previous section's techniques, we formalize a series of quantum algorithms that allow us to retrieve a classical description of the singular value decomposition of a matrix to which we have quantum access.

\subsection{Estimating the quality of the representation}
\label{Subsec:quality_repr}
Algorithms such as principal component analysis and correspondence analysis are often used for visualization or dimensionality reduction purposes. 
These applications work better when a small subset of factor scores have high factor score ratios. 
We provide a fast procedure that allows verifying if this is the case: given efficient quantum access to a matrix $\m{A} \in \R^{n \times m}$, it retrieves the most relevant singular values, factor scores, and factor score ratios in time poly-logarithmic in the number of elements of $\m{A}$, with no strict dependencies on its rank. 

The main intuition behind this algorithm is that it is possible to create the state
$\sum_{i}^{r} \sqrt{\lambda^{(i)}} \ket{\ve{u}_i}\ket{\ve{v}_i}\ket{\overline{\sigma}_i}$.
The third register, when measured in the computational basis, outputs the estimate $\overline{\sigma}_i$ of a singular value with probability equal to its factor score ratio $\lambda^{(i)}$. 
This enables sampling the singular values of $\m{A}$ directly from the factor score ratios' distribution.
When a matrix has a huge number of small singular values and only a few of them that are very big, the ones with the greatest factor score ratios will appear many times during the measurements.
In contrast, the negligible ones are not likely to be measured.  
This intuition has already appeared in literature \cite{gyurik2020towards,cade2018quantum}. 
Nevertheless, the analysis and the problem solved in these works are different from ours. 
In the context of data representation and analysis, this intuition has only been sketched for sparse or low rank square symmetric matrices by \citet{lloyd2014qpca}, without a precise formalization.
We thoroughly formalize it for any real matrix.

\begin{algorithm}[H]
    \caption{Quantum factor score ratio estimation.}
    \label{alg_main:factor_score_estimation}
    \begin{algorithmic}[1]
        \Statex \textbf{Input:} Quantum access to a matrix $\m{A} \in \R^{n\times m}$. Two precision parameters $\gamma, \epsilon \in \R_{>0}$.
        \Statex \textbf{Output:} An estimate of the factor score ratios $\lambda^{(i)} > \gamma$. An estimate of the corresponding singular values and factor scores.
        \algrule
        \For{\texttt{$N \sim O(\frac{1}{\gamma^2})$ times}}
            \State Prepare the state
            $\frac{1}{\norm{\m{A}}_F}\sum_{i}^{n}\sum_{j}^{m}a_{ij}\ket{i}\ket{j}$. \label{alg1:QRAMquery}
            \State Apply SVE to get $\frac{1}{\sqrt{\sum_{j}^{r}\sigma_j^2}}\sum_{i}^{r} \sigma_i \ket{\ve{u}_i}\ket{\ve{v}_i}\ket{\overline{\sigma}_i}$.\label{alg1:SVEstep}
            \State Measure the last register and store it in a set data structure. \label{alg1:Measurestep}
        \EndFor
        \State For each $\overline{\sigma}_i$ measured, output $\overline{\sigma}_i$, its factor score $\overline{\lambda}_i = \overline{\sigma}^2_i$, and its factor score ratio 
        $\overline{\lambda}^{(i)} = \frac{\overline{\lambda}_i}{\norm{A}^2_F}$.
    \end{algorithmic}
\end{algorithm} 

\begin{theorem}[Quantum factor score ratio estimation] 
\label{TheoMio:factor_score_estimation} 
    Let there be quantum access to a matrix $\m{A} \in \R^{n \times m}$, with singular value decomposition $\m{A} = \sum_i\sigma_i\ve{u}_i\ve{v}^T_i$. 
    Let $\gamma, \epsilon$ be precision parameters. 
    There exists a quantum algorithm that runs in time $\widetilde{O}\left(\frac{1}{\gamma^2}\frac{\mu(\m{A})}{\epsilon}\right)$ and estimates:\\
    \begin{itemize*}
    \item all the factor score ratios $\lambda^{(i)} > \gamma$, with probability at least $1-1/\text{poly}(r)$, such that $\abs{\lambda^{(i)} - \overline{\lambda}^{(i)}} \leq 2\epsilon\frac{\sigma_i}{\norm{A}^2_F}$,  with probability at least $1-1/\text{poly}(n)$; \\
    \item the corresponding singular values $\sigma_i$, such that $\abs{\sigma_i - \overline{\sigma}_i} \leq \epsilon$ with probability at least $1-1/\text{poly}(n)$; \\
    \item the corresponding factor scores $\lambda_i$, such that $\abs{\lambda_i - \overline{\lambda}_i} \leq 2\epsilon\sqrt{\lambda_i}$  with probability at least $1-1/\text{poly}(n)$.
    \end{itemize*}
\end{theorem}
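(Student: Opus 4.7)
The plan is to decompose the argument into three pieces: the structure of the state prepared in Steps \ref{alg1:QRAMquery}--\ref{alg1:SVEstep}, a sampling analysis that controls which indices are seen, and a deterministic error propagation inherited from singular value estimation.

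First I would verify that after Step \ref{alg1:SVEstep} the algorithm holds the state $\sum_{i}^{r}\sqrt{\lambda^{(i)}}\,\ket{\ve{u}_i}\ket{\ve{v}_i}\ket{\overline{\sigma}_i}$. Quantum access (Definition \ref{Def:quantum_access}) yields $\ket{\m{A}}=\frac{1}{\norm{\m{A}}_F}\sum_{i}^{r}\sigma_i\ket{\ve{u}_i}\ket{\ve{v}_i}$ in $\widetilde{O}(1)$ time, and the $\widetilde{O}(\mu(\m{A})/\epsilon)$ variant of Theorem \ref{Theo:singular_value_estimation} appends an estimate $\ket{\overline{\sigma}_i}$ with $\abs{\sigma_i-\overline{\sigma}_i}\leq\epsilon$ to each component. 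Using $\sigma_i/\norm{\m{A}}_F=\sqrt{\sigma_i^2/\sum_j\sigma_j^2}=\sqrt{\lambda^{(i)}}$ produces the claimed amplitudes, so measuring the third register returns a fixed outcome $\overline{\sigma}_i$ with probability exactly $\lambda^{(i)}$. Invoking the consistent phase estimation convention is essential here: without it, nearby trials could label the same singular subspace with slightly different values $\overline{\sigma}_i$, fragmenting the probability mass and breaking the identification outcome~$\leftrightarrow$~singular value.

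For the sampling guarantee I would use a simple union bound. Since $\sum_i\lambda^{(i)}=1$, at most $\lfloor 1/\gamma\rfloor$ indices satisfy $\lambda^{(i)}>\gamma$. For any such $i$, the probability of never drawing it in $N$ independent trials is $(1-\lambda^{(i)})^N\leq e^{-N\lambda^{(i)}}\leq e^{-N\gamma}$; union bounding over the heavy indices gives a total miss probability of at most $\gamma^{-1}e^{-N\gamma}$, which falls below $1/\text{poly}(r)$ already for $N=\widetilde{O}(1/\gamma)$ and hence certainly for $N=O(1/\gamma^2)$. For the error bounds, Theorem \ref{Theo:singular_value_estimation} gives $\abs{\sigma_i-\overline{\sigma}_i}\leq\epsilon$ with probability $1-1/\text{poly}(n)$ for each observed $i$. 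Then $\abs{\lambda_i-\overline{\lambda}_i}=\abs{\sigma_i-\overline{\sigma}_i}\,(\sigma_i+\overline{\sigma}_i)\leq 2\epsilon\sigma_i+\epsilon^2$, which I fold into the form $2\epsilon\sqrt{\lambda_i}$ by rescaling the precision parameter to absorb the quadratic term; dividing by $\norm{\m{A}}_F^2$ delivers $\abs{\lambda^{(i)}-\overline{\lambda}^{(i)}}\leq 2\epsilon\sigma_i/\norm{\m{A}}_F^2$.

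Finally, each iteration costs $\widetilde{O}(1+\mu(\m{A})/\epsilon)=\widetilde{O}(\mu(\m{A})/\epsilon)$, and multiplying by $N=O(1/\gamma^2)$ yields the stated run-time $\widetilde{O}(\mu(\m{A})/(\gamma^2\epsilon))$. I expect the genuinely subtle step to be the first one: justifying that independent repetitions really sample from the distribution $\{\lambda^{(i)}\}$ without probability leakage, which is where the consistent phase estimation hypothesis does all the work. The remaining ingredients — Chernoff-style missing-index bound, the $\sigma^2$ error propagation, and the run-time multiplication — are routine once the state in Step \ref{alg1:SVEstep} is correctly identified.
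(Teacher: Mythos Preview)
Your proposal is correct and follows the same three-part structure as the paper: prepare $\ket{\m{A}}$ and apply SVE to obtain $\sum_i\sqrt{\lambda^{(i)}}\ket{\ve{u}_i}\ket{\ve{v}_i}\ket{\overline{\sigma}_i}$, sample the third register $N=O(1/\gamma^2)$ times, and propagate the $\epsilon$-error on $\overline{\sigma}_i$ to $\overline{\lambda}_i$ and $\overline{\lambda}^{(i)}$ at first order. The one substantive difference is how the sampling guarantee is established. The paper invokes the $\ell_\infty$ tomography bound (Theorem \ref{Theo:linf_vector_tomography}) and, alternatively, a Wald-interval argument to get $\abs{\lambda^{(i)}-\widehat{\lambda}^{(i)}}\leq\gamma$ for the empirical frequencies, from which it concludes that every index with $\lambda^{(i)}>\gamma$ is observed; you instead give a direct Chernoff-plus-union-bound computation over the at most $\lfloor 1/\gamma\rfloor$ heavy indices. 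Your route is more elementary and, as you note, actually shows that $\widetilde{O}(1/\gamma)$ samples already suffice for the detection claim by itself; the paper's route additionally delivers a frequency-based estimate of each $\lambda^{(i)}$ to additive error $\gamma$, but since Algorithm \ref{alg_main:factor_score_estimation} outputs $\overline{\lambda}^{(i)}=\overline{\sigma}_i^2/\norm{\m{A}}_F^2$ rather than the empirical frequency, that extra information is not needed for the theorem as stated.
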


The proof consists in bounding the run-time, the error, and the probability of failure of Algorithm \ref{alg_main:factor_score_estimation}.

\begin{proof}
By the definition of quantum access, the cost of step \ref{alg1:QRAMquery} is $\widetilde{O}(1)$.
The singular value estimation in step \ref{alg1:SVEstep} can be performed using Theorem \ref{Theo:singular_value_estimation} in time 
$\widetilde{O}\left(\frac{\mu(\m{A})}{\tau}\right)$,
such that $\norm{\sigma_i - \overline{\sigma}_i}\leq \epsilon$ with probability at least $1-1/\text{poly}(n)$.
A measurement of the third register at step \ref{alg1:Measurestep} can output any $\overline{\sigma}_i$ with probability $\lambda^{(i)} = \frac{\sigma_i^2}{\sum_j^r\sigma_j^2}$. 

Theorem \ref{Theo:linf_vector_tomography} guarantees that with $O(1/\gamma^2)$ measurements we can get estimates $\abs{\lambda^{(i)} - \overline{\lambda_i}}\leq \gamma$.
In particular, \citet{kerenidis2019infty} estimate that $N=36\log(r)/\gamma^2$ measures should suffice for our goal.

Alternatively, we could consider the measurement process as performing $r$ Bernoulli trials: one for each $\overline{\lambda}^{(i)}$, so that if we measure $\overline{\sigma}_i$ it is a success for the $i^{th}$ Bernoulli trial and a failure for all the others. 
Given a confidence level $z$, it is possible to use the Wald confidence interval to determine a value for $N$ such that $\abs{\lambda^{(i)} - \frac{\zeta_{\overline{\sigma}_i}}{N}} \leq \gamma$ with confidence level $z$, where $\zeta_{\overline{\sigma}_i}$ is the number of times that $\overline{\sigma}_i$ has appeared in the measurements.
In this case, it suffice to choose $N=\frac{z^2}{4\gamma^2}$ \cite[Section 5.1.3]{quantumsupervisedschuld}. Having $\abs{\lambda^{(i)} - \overline{\lambda}^{(i)}}\leq \gamma$ means measuring all the $\overline{\sigma}_i$ whose factor score ratio is greater than $\lambda$.

We now proceed with the error analysis.
We can compute $\overline{\lambda}_i=\overline{\sigma}_i^2$. 
\begin{align}
    \abs{\lambda_i - \overline{\sigma}_i^2} \leq \abs{\lambda_i - (\sigma_i \pm \epsilon)^2} = \abs{ \pm 2\epsilon\sigma_i + \epsilon^2 } \leq 2\epsilon\sigma_i + \epsilon^2.
\end{align}
If we keep the error analysis at the first order and consider that $\sigma_i = \sqrt{\lambda_i}$, we can conclude the bound as $\abs{\lambda_i - \overline{\sigma}_i^2} \leq 2\epsilon\sqrt{\lambda_i}$. Similarly, we can compute $\overline{\lambda}^{(i)}=\frac{\overline{\sigma}_i^2}{\norm{A}_F^2}$. 
\begin{align}
    \abs{\lambda^{(i)} - \overline{\lambda}^{(i)}} = \frac{\abs{\lambda_i - \overline{\lambda}_i}}{\norm{A}_F^2} \leq 2\epsilon\frac{\sigma_i}{\norm{A}_F^2}.
\end{align}
\end{proof}

The parameter $\gamma$ is the one that controls how big a factor score ratio should be for the singular value/factor score to be measured. 
If we choose $\gamma$ bigger than the least factor scores ratio of interest, the estimate for the smaller ones is likely to be $0$, as $\abs{\lambda^{(i)}-0}\leq \gamma$ would be a plausible estimation. 

Often in data representations, the cumulative sum of the factor score ratios is a measure of the quality of the representation. 
By slightly modifying Algorithm \ref{alg_main:factor_score_estimation} to use Theorem \ref{Theo:linf_vector_tomography}, it is possible to estimate this sum such that $\abs{\sum_i^k \lambda^{(i)} - \sum_i^k \overline{\lambda}^{(i)}} \leq k\epsilon$ with probability $1-1/\text{poly}(r)$.
However, a slight variation of Algorithm \emph{IV.3} for spectral norm estimation in \citet{kerenidis2020gradient} provides a more accurate estimation in less time, given a threshold $\theta$ for the smallest singular value to retain.

\begin{algorithm}[H]
	\caption{Quantum check on the factor score ratios' sum.}
	\label{alg:factor_score_sum_check}
	\begin{algorithmic}[1]
	    \Statex \textbf{Input:} Quantum access to a matrix $\m{A} \in \R^{n\times m}$. A threshold parameter $\theta$. Two precision parameters $\epsilon, \eta \in \R_{>0}$, such that the greatest singular value smaller than $\theta$ is not more than $\epsilon$ distant to it.
        \Statex \textbf{Output:} An estimate $\overline{p}$ of the factor score ratios' sum $p=\sum_{i: \sigma_i \geq \theta} \lambda^{(i)}$. 
        \algrule

	    \State Prepare the state
    	$\frac{1}{\norm{\m{A}}_F}\sum_{i}^{n}\sum_{j}^{m}a_{ij}\ket{i}\ket{j}$. \label{AlgCheckVar:QRAM}
	    \State Apply SVE with precision $\epsilon$ to get $\frac{1}{\sqrt{\sum_{j}^{r}\sigma_j^2}}\sum_{i}^{r} \sigma_i \ket{\ve{u}_i}\ket{\ve{v}_i}\ket{\overline{\sigma}_i}$.  \label{AlgCheckVar:SVE}
	    \State Append a quantum register $\ket{0}$ to the state and set it to $\ket{1}$ if $\overline{\sigma}_i < \theta$. \label{AlgCheckVar:Cond_rot}
		\State Uncompute the SVE
        $$\frac{1}{\sqrt{\sum_{j}^{r}\sigma_j^2}}\sum_{i: \overline{\sigma}_i \geq \theta} \sigma_i \ket{\ve{u}_i}\ket{\ve{v}_i}\ket{0} + \frac{1}{\sqrt{\sum_{j}^{r}\sigma_j^2}}\sum_{i: \overline{\sigma}_i < \theta} \sigma_i \ket{\ve{u}_i}\ket{\ve{v}_i}\ket{1}$$ \label{AlgCheckVar:SVE_Undo}
		\State Perform amplitude estimation with precision $\eta$ on the last register being $\ket{0}$, to estimate $p=\frac{\sum_{i: \overline{\sigma}_i \geq \theta} \sigma_i^2}{\sum_j^r \sigma_j^2}= \sum_{i: \overline{\sigma}_i \geq \theta} \lambda^{(i)}$. \label{AlgCheckVar:amp_est}
	\end{algorithmic}
\end{algorithm} 

\begin{theorem}[Quantum check on the factor score ratios' sum] 
\label{TheoMio:check_explained_variance} 
    Let there be quantum access to a matrix $\m{A} \in \R^{n \times m}$, with singular value decomposition $\m{A} = \sum_i\sigma_i\ve{u}_i\ve{v}^T_i$. Let $\eta, \epsilon$ be precision parameters, and $\theta$ be a threshold for the smallest singular value to consider.
    There exists a quantum algorithm that estimates $p = \sum_{i: \overline{\sigma}_i \geq \theta} \lambda^{(i)}$,
    where $\abs{\sigma_i - \overline{\sigma}_i} \leq \epsilon$, to relative error $\eta$ in time $\widetilde{O}\left(\frac{\mu(\m{A})}{\epsilon}\frac{1}{\eta \sqrt{p}}\right)$.
\end{theorem}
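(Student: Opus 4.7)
The plan is to prove correctness and bound the runtime of Algorithm \ref{alg:factor_score_sum_check} step by step, using the guarantees of singular value estimation (Theorem \ref{Theo:singular_value_estimation}) and amplitude estimation (Theorem \ref{Theo:ampamp}).

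\textbf{Correctness.} First I would argue that the state produced after step \ref{AlgCheckVar:SVE_Undo} has exactly the right overlap with the flag $\ket{0}$. Starting from $\ket{\m{A}} = \frac{1}{\norm{\m{A}}_F}\sum_i \sigma_i \ket{\ve{u}_i}\ket{\ve{v}_i}$, applying SVE entangles each singular vector pair with an estimate $\ket{\overline{\sigma}_i}$ with $|\sigma_i - \overline{\sigma}_i| \leq \epsilon$, the conditional flip writes a flag bit based on the comparison $\overline{\sigma}_i < \theta$, and the uncomputation of SVE (which is valid here precisely because the paper adopts a \emph{consistent} version of phase estimation, so both SVE invocations produce the same $\overline{\sigma}_i$) cleans the third register. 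The resulting state is
\begin{equation*}
\frac{1}{\norm{\m{A}}_F}\sum_{i: \overline{\sigma}_i \geq \theta}\sigma_i\ket{\ve{u}_i}\ket{\ve{v}_i}\ket{0} + \frac{1}{\norm{\m{A}}_F}\sum_{i: \overline{\sigma}_i < \theta}\sigma_i\ket{\ve{u}_i}\ket{\ve{v}_i}\ket{1}.
\end{equation*}
The probability that the flag register is $\ket{0}$ equals $\frac{1}{\norm{\m{A}}_F^2}\sum_{i: \overline{\sigma}_i \geq \theta}\sigma_i^2 = \sum_{i: \overline{\sigma}_i \geq \theta}\lambda^{(i)} = p$.

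\textbf{Runtime.} Let $U_x$ denote the unitary realizing steps \ref{AlgCheckVar:QRAM}--\ref{AlgCheckVar:SVE_Undo}. Its cost is dominated by the two SVE calls, each costing $\widetilde{O}(\mu(\m{A})/\epsilon)$ by Theorem \ref{Theo:singular_value_estimation}; the QRAM query is $\widetilde{O}(1)$ and the comparator is a single-qubit controlled operation of negligible cost. Viewing $U_x$ as a unitary with $\sin^2\theta = p$ in the framing of Theorem \ref{Theo:ampamp}, amplitude estimation to relative error $\eta$ takes time $O(T(U_x)/(\eta \sin\theta)) = \widetilde{O}\left(\frac{\mu(\m{A})}{\epsilon}\cdot\frac{1}{\eta\sqrt{p}}\right)$, which matches the claimed bound.

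\textbf{Main obstacle.} The subtlety I expect to have to handle carefully is the legality of uncomputing SVE at step \ref{AlgCheckVar:SVE_Undo}: phase estimation is in general not a deterministic function of the input, so naively the second SVE call would not exactly invert the first, leaving residual entanglement with the $\ket{\overline{\sigma}_i}$ register and altering the success probability extracted by amplitude estimation. Justifying the clean uncomputation therefore requires invoking the consistent phase estimation variant cited after Theorem \ref{Theo:singular_value_estimation} (following \citet{ta2013inverting,kerenidis2020gradient}), which guarantees that both runs output the same estimate $\overline{\sigma}_i$ on the same eigenvector. A secondary, minor point is that the threshold $\theta$ is compared against the \emph{noisy} $\overline{\sigma}_i$: the hypothesis in the theorem statement that no true singular value lies within $\epsilon$ of $\theta$ (except possibly those we wish to exclude) ensures that the set $\{i : \overline{\sigma}_i \geq \theta\}$ is well defined and stable, closing the argument.
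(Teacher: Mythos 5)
Your proof follows the same route as the paper's: bound the cost of the SVE/uncompute unitary by $\widetilde{O}(\mu(\m{A})/\epsilon)$ and multiply by the $O(1/(\eta\sqrt{p}))$ overhead of relative-error amplitude estimation on the flag qubit. The added care you take in spelling out the post-uncomputation state and in justifying the uncomputation via consistent phase estimation is a sound elaboration of a point the paper leaves implicit (it relies on the remark after Theorem~\ref{Theo:singular_value_estimation}), but it does not change the argument.
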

\label{Proof:check_explained_variance}

\begin{proof}
As discussed in the previous proof, the cost of preparing the state at step \ref{AlgCheckVar:SVE} is $\widetilde{O}\left(\frac{\mu(\m{A})}{\epsilon}\right)$. 
The complexity of step \ref{algTop-k:cond_rot} is $\widetilde{O}(1)$, as it is an arithmetic operation that only depends on the encoding of $\ket{\overline{\sigma}_i}$.
Step \ref{AlgCheckVar:SVE_Undo} consists in uncomputing step \ref{AlgCheckVar:SVE} and has its same cost.
Finally, the cost of amplitude estimation, with relative precision $\eta$, on the last register being $\ket{0}$ is equal to $O\left(T(U_{\ref{AlgCheckVar:SVE_Undo}})\frac{1}{\eta\sqrt{p}}\right)$, 
where $p = \frac{\sum_{i: \overline{\sigma}_i \geq \theta} \sigma_i^2}{\sum_j^r \sigma_j^2}$ is the probability of measuring $\ket{0}$ (Theorem \ref{Theo:ampamp}).
The overall complexity is proven:
$\widetilde{O}\left(\frac{\mu(\m{A})}{\epsilon}\frac{1}{\eta\sqrt{p}}\right).$
\end{proof}    

Since the sum of factor score ratios $p$ is a measure of the representation quality, in problems such as PCA, CA, and LSA, this is usually a constant number bigger than $0$ (i.e., often in practice, $p \in [0.3, 1]$).
This makes the term $\sqrt{p}$ negligible in most of the practical applications.
Moreover, we further modify Algorithm \emph{IV.3} to perform a binary search of $\theta$ given the desired sum of factor score ratios.

\begin{algorithm}[H]
    \caption{Quantum binary search for the singular value threshold.}
    \label{alg_main:binary_search_theta}
    \begin{algorithmic}[1]
        \Statex \textbf{Input:} Quantum access to a matrix $\m{A} \in \R^{n\times m}$. The desired amount of factor score ratios sum $p \in [0,1]$. Two precision parameters $\epsilon, \eta \in \R_{>0}$.
        \Statex \textbf{Output:} A threshold $\theta$ such that $\abs{p - \sum_{i: \overline{\sigma}_i \geq \theta} \lambda^{(i)} } \leq \eta$, where $\abs{\overline{\sigma}_i - \sigma_i} \leq \epsilon$, or $-1$ if no such $\theta$ exists.
        \algrule
        
        \State Let $l=0$ and $u=1$ be upper and lower bounds for the binary search.
        \State If $\abs{1 - p} \leq \eta$, then return $\theta=0$.
        \State If $\abs{0 - p} \leq \eta$, then return $\theta=\mu(\m{A})$.
        \State Initialize $\tau=(l+u)/2$.
        \For{\texttt{$O(\log(\frac{\mu(\m{A})}{\epsilon}))$ times}}
            \State Prepare the state $\ket{\m{A}}$ and apply SVE to get $\frac{1}{\sqrt{\sum_{j}^{r}\sigma_j^2}}\sum_{i}^{r} \sigma_i \ket{\ve{u}_i}\ket{\ve{v}_i}\ket{\hat{\sigma}_i}$, so that $\abs{\hat{\sigma}_i - \frac{\sigma_i}{\mu(\m{A})}} \leq \frac{\epsilon}{\mu(\m{A})}$.\label{alg3:SVEstep}
            \State Append a quantum register $\ket{0}$ to the state and set it to $\ket{1}$ if $\hat{\sigma}_i < \tau$. \label{alg3:ControlledRot}
            \State Uncompute the SVE
            $$\frac{1}{\sqrt{\sum_{j}^{r}\sigma_j^2}}\sum_{i: \hat{\sigma}_i \geq \tau} \sigma_i \ket{\ve{u}_i}\ket{\ve{v}_i}\ket{0} + \frac{1}{\sqrt{\sum_{j}^{r}\sigma_j^2}}\sum_{i: \hat{\sigma}_i < \tau} \sigma_i \ket{\ve{u}_i}\ket{\ve{v}_i}\ket{1}.$$
    		\State Perform amplitude estimation with precision $\eta$ on the last register being $\ket{0}$, to estimate $p_\tau =\frac{\sum_{i: \hat{\sigma}_i \geq \tau} \sigma_i^2}{\sum_j^r \sigma_j^2}= \sum_{i: \hat{\sigma}_i \geq \tau} \lambda^{(i)}$, such that $\abs{\overline{p}_\tau - p_\tau } \leq \eta/2$. \label{alg3:amp_est}
            \State If $\abs{\overline{p}_\tau - p} \leq \eta/2$, then return $\theta=\tau\mu(\m{A})$.
            \State If $\overline{p}_\tau < p$, then set $u = \tau$ and set $l = \tau$ otherwise.
    	    \State Update $\tau=(u+l)/2$. \label{alg3:updatestep}
    	\EndFor
    	\State Return -1.
    \end{algorithmic}
\end{algorithm} 

\begin{theorem}[Quantum binary search for the singular value threshold \cite{kerenidis2020gradient}] 
\label{Theorivisto:binarysearch} 
    Let there be quantum access to a matrix $\m{A} \in \R^{n \times m}$. 
    Let $\eta, \epsilon$ be precision parameters, and $\theta$ be a threshold for the smallest singular value to consider.
    Let $p \in [0,1]$ be the factor score ratios sum to retain.
    There exists a quantum algorithm that runs in time $\widetilde{O}\left(\frac{\mu(\m{A})\log(\mu(\m{A})/\epsilon)}{\epsilon\eta}\right)$ and outputs an estimate $\theta$ such that $\abs{p - \sum_{i: \overline{\sigma}_i \geq \theta} \lambda^{(i)}} \leq \eta$, where $\abs{\overline{\sigma}_i - \sigma_i} \leq \epsilon$, or detects whether such $\theta$ does not exists.
\end{theorem}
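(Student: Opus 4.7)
The plan is to verify Algorithm 3 by bounding the cost of a single binary-search iteration, the total number of iterations, and then arguing correctness of the search itself (including the $-1$ branch).

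First I would analyze the cost per iteration. Step \ref{alg3:SVEstep} calls the SVE of Theorem \ref{Theo:singular_value_estimation} with additive precision $\epsilon/\mu(\m{A})$ on the normalized singular values $\sigma_i/\mu(\m{A})$, which costs $\widetilde{O}(\mu(\m{A})/\epsilon)$. Step \ref{alg3:ControlledRot} is a comparison on the classical label $\ket{\hat{\sigma}_i}$ and costs $\widetilde{O}(1)$; uncomputing the SVE takes the same time as step \ref{alg3:SVEstep}. Step \ref{alg3:amp_est} applies amplitude estimation (Theorem \ref{Theo:ampamp}) with additive precision $\eta/2$ on the indicator register, multiplying the unitary cost by $O(1/\eta)$. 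Combining these yields an iteration cost of $\widetilde{O}\!\left(\mu(\m{A})/(\epsilon\eta)\right)$.

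Next I would count the iterations. The binary search is carried over the normalized interval $\tau \in [0,1]$; the minimum resolution at which SVE can still separate consecutive candidate thresholds is $\epsilon/\mu(\m{A})$, so halving the interval $O(\log(\mu(\m{A})/\epsilon))$ times is both necessary and sufficient. Multiplying by the per-iteration cost gives the claimed total runtime $\widetilde{O}\!\left(\mu(\m{A})\log(\mu(\m{A})/\epsilon)/(\epsilon\eta)\right)$.

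For correctness I would exploit the fact that $f(\tau) = \sum_{i:\hat{\sigma}_i \geq \tau} \lambda^{(i)}$ is a monotonically non-increasing step function of $\tau$, so the usual binary-search invariant applies: at each step $\overline{p}_\tau < p$ forces $\tau$ downwards (more singular values must be retained) and $\overline{p}_\tau > p$ forces it upwards. The stopping criterion $\abs{\overline{p}_\tau - p}\leq \eta/2$ combined with the amplitude-estimation guarantee $\abs{\overline{p}_\tau - p_\tau}\leq \eta/2$ yields $\abs{p_\tau - p}\leq \eta$ for the returned $\theta = \tau \mu(\m{A})$, as required. The preliminary checks on lines 2--3 handle the degenerate extremes $p\approx 1$ and $p\approx 0$ where $\tau=0$ or $\tau=\mu(\m{A})$ is already within tolerance. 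Using the consistent phase-estimation variant of SVE guaranteed by \citet{ta2013inverting, kerenidis2020gradient}, the labels $\hat{\sigma}_i$ are identical across all iterations, so the step function being searched does not change between queries. A union bound over the $O(\log(\mu(\m{A})/\epsilon))$ repetitions absorbs the polynomially-small failure probabilities of SVE and amplitude estimation into the $\widetilde{O}$ notation.

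The subtle point, and the part I would treat most carefully, is the $-1$ branch. Because $f$ is a step function that jumps by $\lambda^{(i)}$ at each $\hat{\sigma}_i$, whenever some factor score ratio exceeds $2\eta$ there may genuinely be no $\tau$ realizing $\abs{f(\tau)-p}\leq \eta$; in that case the binary-search window shrinks below the SVE resolution without ever satisfying the stopping test, so after $O(\log(\mu(\m{A})/\epsilon))$ iterations the algorithm correctly returns $-1$. The main obstacle is arguing that the amplitude-estimation noise cannot spuriously trigger a premature acceptance or an infinite oscillation across a large jump; the $\eta/2$ budget split between the stopping test and the AE error precisely avoids this, and the deterministic labels $\hat{\sigma}_i$ ensure that successive comparisons evaluate the \emph{same} monotone function, which is what makes the binary search well-defined.
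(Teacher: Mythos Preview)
Your proposal is correct and follows essentially the same approach as the paper: both bound the per-iteration cost as $\widetilde{O}(\mu(\m{A})/(\epsilon\eta))$ via SVE plus additive amplitude estimation, both bound the number of iterations by $O(\log(\mu(\m{A})/\epsilon))$ from the bit precision of the SVE register, and both combine the stopping test $\abs{\overline{p}_\tau-p}\leq\eta/2$ with the AE guarantee $\abs{\overline{p}_\tau-p_\tau}\leq\eta/2$ by the triangle inequality. Your treatment is in fact more careful than the paper's on several points the original leaves implicit (monotonicity of $f$, consistency of the $\hat{\sigma}_i$ across iterations, the union bound, and the semantics of the $-1$ branch).
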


The proof consists in proving the correctness and the run-time of Algorithm \ref{alg_main:binary_search_theta}.
\begin{proof}
The algorithm searches for $\theta$ using $\tau$ as an estimate between $0$ and $1$.
The search is performed using $\text{sign}(p_\tau - p)$ as an oracle that tells us whether to update the lower or upper bound for $\tau$.

The algorithm terminates when $\abs{\overline{p}_\tau - p} \leq \eta/2$ or when it is not possible to update $\tau$ anymore (i.e., there are not enough qubits to express the next $\tau$).
In this last case, there is no $\theta$ that satisfies the requisites and the algorithm returns $-1$.

In the first case, instead, we need to guarantee that $\abs{p - \sum_{i: \overline{\sigma}_i \geq \theta} \lambda^{(i)} } = \abs{p - p_\tau} \leq \eta$. 
Since we run amplitude estimation with additive error $\eta/2$ we have $\abs{\overline{p}_\tau - {p}_\tau} \leq \eta/2$, and we require
$\abs{\overline{p}_\tau - p} \leq \eta/2$ to stop.
This two conditions entail
\begin{align}
    \abs{p-p_\tau} \leq \abs{\abs{p - \overline{p}_\tau} + \eta/2} \leq \eta.
\end{align}

If we want $\theta$ to be comparable with the singular values of $\m{A}$ and use $\tau$ for the binary search, we have to use Theorem \ref{Theo:singular_value_estimation} with error $\frac{\mu(\m{A})}{\epsilon}$, meaning that Step \ref{alg3:SVEstep} can be done in time $O(\mu(\m{A})/\epsilon)$.
The total cost of the inner loop has to be evaluated at the end of Step \ref{alg3:amp_est}, which runs in time $O(\frac{\mu(\m{A})}{\epsilon \eta})$.

The maximum number of updates of $\tau$ is bounded by the number of qubits that we use to store the singular values $\hat{\sigma}_i$.
This is given by the logarithm of the error used in Step \ref{alg3:SVEstep}, and is $O\left(\log\left(\frac{\mu{(A)}}{\epsilon}\right)\right)$.

The run-time of this algorithm is bounded by $\widetilde{O}\left(\frac{\mu(\m{A})\log(\mu(\m{A})/\epsilon)}{\epsilon\eta}\right)$.
\end{proof}

Using the quantum counting algorithms of \citet{brassard2002quantum} after step \ref{AlgCheckVar:Cond_rot} of Algorithm \ref{alg:factor_score_sum_check}, it is possible to count the number of singular values retained by a certain threshold $\theta$.

\begin{corollary}[Quantum reduced rank estimation] 
\label{TheoMio:counting} 
    Let there be quantum access to a matrix $\m{A} \in \R^{n \times m}$, with singular value decomposition $\m{A} = \sum_i^r\sigma_i\ve{u}_i\ve{v}^T_i$ and rank $r$. Let $\epsilon$ be a precision parameter, and $\theta$ be a threshold for the smallest singular value to consider.
    There exists a quantum algorithm that estimates the exact number $k$ of singular values such that $\overline{\sigma}_i \geq \theta$,
    where $\abs{\sigma_i - \overline{\sigma}_i} \leq \epsilon$, in time $\widetilde{O}\left(\frac{\mu(\m{A})}{\epsilon}\sqrt{(k+1)(r-k+1)}\right)$ with probability at least $\frac{2}{3}$.
    
    Similarly, given a parameter $\eta$, it is possible to produce an estimate $\overline{k}$ such that $\abs{\overline{k}-k} \leq \eta k$ in time $\widetilde{O}\left(\frac{\mu(\m{A})}{\epsilon\eta}\sqrt{\frac{r}{k}}\right)$ with probability at least $\frac{2}{3}$.
\end{corollary}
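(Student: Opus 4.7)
The plan is to reuse the state-preparation pipeline of Algorithm~\ref{alg:factor_score_sum_check} and to swap its amplitude-estimation step for the counting routines of \citet{brassard2002quantum}. First I would run steps~\ref{AlgCheckVar:QRAM}--\ref{AlgCheckVar:Cond_rot} of Algorithm~\ref{alg:factor_score_sum_check} with SVE precision $\epsilon$, obtaining in time $\widetilde{O}(\mu(\m{A})/\epsilon)$, by Theorem~\ref{Theo:singular_value_estimation}, a unitary $U$ whose output contains an ancilla register that flags exactly the right-singular-vector indices with $\overline{\sigma}_i\geq\theta$, where $|\sigma_i-\overline{\sigma}_i|\leq\epsilon$.

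For the exact-count bullet I would feed $U$ and the flag ancilla into the exact counting theorem of \citet{brassard2002quantum}: it returns $k$ with probability at least $2/3$ using $O(\sqrt{(k+1)(r-k+1)})$ calls to $U$, which combined with the per-call cost of $U$ yields the stated $\widetilde{O}\bigl(\tfrac{\mu(\m{A})}{\epsilon}\sqrt{(k+1)(r-k+1)}\bigr)$ run-time. For the relative-error bullet I would instead invoke the amplitude-estimation variant from the same paper, which outputs $\overline{k}$ with $|\overline{k}-k|\leq\eta k$ in $O(\sqrt{r/k}/\eta)$ calls, giving total time $\widetilde{O}\bigl(\tfrac{\mu(\m{A})}{\epsilon\eta}\sqrt{r/k}\bigr)$. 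The $2/3$ success probability then follows from Brassard et al.'s guarantee together with a union bound against the $1/\mathrm{poly}(n)$ failure probability of SVE inside $U$.

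The main obstacle is that the counting theorem of \citet{brassard2002quantum} is phrased for a uniform superposition over $N$ basis elements, whereas the state produced by $U$ carries non-uniform amplitudes $\sigma_i/\norm{\m{A}}_F$ on the singular-vector pairs; a direct amplitude estimation would therefore return the weighted sum $\sum_{i:\overline{\sigma}_i\geq\theta}\lambda^{(i)}$ instead of the count $k$. I would bridge this by inserting, before the counting step, an HHL-style controlled rotation $\overline{\sigma}_i\mapsto C/\overline{\sigma}_i$ on a fresh ancilla followed by one round of amplitude amplification, flattening the state into $\tfrac{1}{\sqrt{r}}\sum_{i=1}^{r}\ket{\ve{u}_i}\ket{\ve{v}_i}\ket{\overline{\sigma}_i}$, so that the marked-subspace amplitude becomes $\sqrt{k/r}$ and Brassard counting recovers $k$ after multiplying by the known rank. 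Showing that this flattening overhead is absorbed into the stated $\widetilde{O}$ is the delicate point: it likely requires restricting the inversion to singular values at or above $\theta$, so that the inversion constant $C$ is $\Theta(\theta)$ and the amplitude-amplification factor stays polylogarithmic in the quantities already present in the bound.
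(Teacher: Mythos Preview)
Your approach matches the paper's own justification, which consists of the single sentence preceding the corollary: ``Using the quantum counting algorithms of \citet{brassard2002quantum} after step~\ref{AlgCheckVar:Cond_rot} of Algorithm~\ref{alg:factor_score_sum_check}, it is possible to count the number of singular values retained by a certain threshold~$\theta$.'' The paper offers no further proof, so your sketch is already more detailed than what the paper provides.

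You go beyond the paper in flagging the non-uniform-amplitude issue, and you are right that it is a genuine obstacle: Brassard et~al.'s counting theorems return $k$ only when the good-subspace amplitude equals $\sqrt{k/r}$, whereas the state after step~\ref{AlgCheckVar:Cond_rot} has amplitude $\sqrt{p}$ on the flagged part. The paper does not address this at all. Your proposed remedy---an HHL-style $C/\overline{\sigma}_i$ rotation followed by amplification, exactly as in steps~\ref{algTop-k:cond_rot2}--\ref{algTop-k:ampamp2} of Algorithm~\ref{alg_main:topk_sv}---is the natural fix, but the specific way you try to control its cost does not quite work. Restricting the inversion to $\overline{\sigma}_i\geq\theta$ produces a uniform superposition over the $k$ marked indices only, so every state is marked and counting is vacuous; to recover $k/r$ you must flatten over all $r$ singular vectors, which forces $C=\Theta(\sigma_r)$ and an amplification overhead of order $\kappa(\m{A})$ that is not present in the stated bound. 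In short, the subtlety you identify is real, your diagnosis is sharper than the paper's, but neither your remedy nor the paper's one-line argument closes the gap within the claimed run-time.
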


Estimating the number of singular values retained by $\theta$ is helpful.
When the singular values are dense around $\theta$, this Corollary, together with Theorem \ref{TheoMio:check_explained_variance}, can help the analyst evaluate trade-offs between big $p$ and small $k$.
On the one hand, the bigger $p$ is, the more information on the dataset one can retain. 
On the other hand, the bigger $k$ is, the slower will the algorithms in the next section be.

\subsection{Extracting the SVD representation}\label{Subsec:extraction}
After introducing the procedures to test for the most relevant singular values, factor scores and factor score ratios of $\m{A}$, we present a routine to extract the corresponding right/left singular vectors.
The inputs of this algorithm, other than the matrix, are a parameter $\delta$ for the precision of the singular vectors, a parameter $\epsilon$ for the precision of the singular value estimation, and a threshold $\theta$ to discard the non interesting singular values/vectors.
The output guarantees a unit estimate $\overline{\ve{x}}_i$ of each singular vector such that $\norm{\ve{x}_i -\overline{\ve{x}}_i} \leq \delta$, ensuring that the estimate has a similar orientation to the original vector. 
Additionally, this subroutine can provide an estimation of the singular values greater than $\theta$, to absolute error $\epsilon$.

\begin{algorithm}[t]
    \caption{Quantum top-k singular vectors extraction.}
    \label{alg_main:topk_sv}
    
    \begin{algorithmic}[1]
        \Statex \textbf{Input:} Quantum access to a matrix $\m{A} \in \R^{n\times m}$. A threshold $\theta$ that captures the top-k singular values. Two precision parameters $\delta, \epsilon \in \R_{>0}$.
        \Statex \textbf{Output:} The top-k singular vectors such that $\norm{\ve{u}_i - \overline{\ve{u}}_i} \leq \delta$ and $\norm{\ve{v}_i - \overline{\ve{v}}_i} \leq \delta$. Optionally, the top-k singular values, such that $\norm{\sigma_i - \overline{\sigma}_i} \leq \epsilon$.
        \algrule

        \State Prepare the state $\frac{1}{\norm{\m{A}}_F}\sum_{i}^{n}\sum_{j}^{m}a_{ij}\ket{i}\ket{j}$. 
        
        \State Apply SVE 
        to get $\frac{1}{\sqrt{\sum_{j}^{r}\sigma_j^2}}\sum_{i}^{r} \sigma_i \ket{\ve{u}_i}\ket{\ve{v}_i}\ket{\overline{\sigma}_i}$, where $\abs{\sigma_i - \overline{\sigma_i}} \leq \epsilon$. 

        \State Append a quantum register $\ket{0}$ to the state and set it to $\ket{1}$ if $\ket{\overline{\sigma}_i} < \theta$. \label{algTop-k:cond_rot}

        \State Perform amplitude amplification for $\ket{0}$, to get the state
       $\frac{1}{\sqrt{\sum_j^k \sigma_j^2}}\sum_i^k \sigma_i\ket{\ve{u}_i}\ket{\ve{v}_i}\ket{\overline{\sigma}}$. \label{algTop-k:ampamp1}

        \State Append a second ancillary register $\ket{0}$ and perform the controlled rotation
        $\frac{C}{\norm{\m{A}^{(k)}}_F}\sum_i^k \frac{\sigma_i}{\overline{\sigma}_i} \ket{\ve{u}_i}\ket{\ve{v}_i}\ket{\overline{\sigma}_i}\ket{0} +\frac{1}{\norm{\m{A}^{(k)}}_F}\sum_i^k \sqrt{1-\frac{C^2}{\overline{\sigma}_i^2}}\ket{\ve{u}_i}\ket{\ve{v}_i}\ket{\overline{\sigma}_i}\ket{1}$ 
        where $C$ is a normalization constant.\label{algTop-k:cond_rot2}
        
        \State Perform again amplitude amplification for $\ket{0}$ to get the uniform superposition
        $\frac{1}{\sqrt{k}}\sum_i^k \ket{\ve{u}_i}\ket{\ve{v}_i}\ket{\overline{\sigma}_i}.$ \label{algTop-k:ampamp2}

        \State Measure the last register and, according to the measured $\ket{\overline{\sigma}_i}$, apply state-vector tomography on $\ket{\ve{u}_i}$ for the $i^{th}$ left singular vector or on $\ket{\ve{v}_i}$ for the right one. \label{algTop-k:vector_tomography}
        
        \State Repeat 1-7 until the tomography requirements are met.
        \State Output the $k$ singular vectors $\ve{u}_i$ or $\ve{v}_i$ and, optionally, the singular values $\overline{\sigma}_i$.
    \end{algorithmic}
\end{algorithm} 

\begin{theorem} [Top-k singular vectors extraction] 
\label{TheoMio:top-k_sv_extraction}
    Let there be efficient quantum access to a matrix $\m{A} \in \R^{n \times m}$, with singular value decomposition $\m{A} = \sum_i^r \sigma_i \ve{u}_i \ve{v}_i^T$. Let $\delta > 0$ be a precision parameter for the singular vectors, $\epsilon>0$ a precision parameter for the singular values, and  $\theta>0$ be a threshold such that $\m{A}$ has $k$ singular values greater than $\theta$. Define $p=\frac{\sum_{i: \overline{\sigma}_i \geq \theta} \sigma_i^2}{\sum_j^r \sigma_j^2}$. There exist quantum algorithms that estimate:\\
    \begin{itemize*}
        \item The top $k$ left singular vectors $\ve{u}_i$ of $\m{A}$ with unit vectors $\overline{\ve{u}}_i$
        such that $\norm{\ve{u}_i-\overline{\ve{u}}_i}_2 \leq \delta$ with probability at least $1-1/poly(n)$, in time $\widetilde{O}\left(\frac{\norm{A}}{\theta}\frac{1}{\sqrt{p}}\frac{\mu(\m{A})}{\epsilon}\frac{kn}{\delta^2}\right)$; \\
        \item The top $k$ right singular vectors  $\ve{v}_i$ of $\m{A}$ with unit vectors $\overline{\ve{v}}_i$
        such that $\norm{\ve{v}_i-\overline{\ve{v}}_i}_2 \leq \delta$ with probability at least $1-1/poly(m)$, in time $\widetilde{O}\left(\frac{\norm{A}}{\theta}\frac{1}{\sqrt{p}}\frac{\mu(\m{A})}{\epsilon}\frac{km}{\delta^2}\right)$.\\
        \item The top $k$ singular values $\sigma_i$, factor scores $\lambda_i$, and factor score ratios $\lambda^{(i)}$ of $\m{A}$ to precision $\epsilon$, $2\epsilon\sqrt{\lambda_i}$, and $\epsilon\frac{\sigma_i}{\norm{A}^2_F}$ respectively, with probability at least $1 - 1/\text{poly}(m)$, in time $\widetilde{O}\left(\frac{\norm{A}}{\theta}\frac{1}{\sqrt{p}}\frac{\mu(\m{A})k}{\epsilon}\right)$ or during any of the two procedures above.
    \end{itemize*}
\end{theorem}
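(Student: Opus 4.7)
The plan is to walk through Algorithm \ref{alg_main:topk_sv} step by step, accumulating run-time factors from each amplitude amplification and from tomography, and then verify that the final estimates satisfy the claimed error guarantees. First I would observe that Steps 1--3 form a unitary $U_1$ of cost $\widetilde{O}(\mu(\m{A})/\epsilon)$ by Theorem \ref{Theo:singular_value_estimation}, producing a state whose flag register is $\ket{0}$ with probability exactly $p = \sum_{i:\overline{\sigma}_i \geq \theta}\sigma_i^2/\|\m{A}\|_F^2$. Amplitude amplification in Step 4 (Theorem \ref{Theo:ampamp}) therefore contributes a multiplicative $O(1/\sqrt{p})$ and yields $\frac{1}{\|\m{A}^{(k)}\|_F}\sum_{i=1}^{k}\sigma_i\ket{\ve{u}_i}\ket{\ve{v}_i}\ket{\overline{\sigma}_i}$.

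Next I would choose the normalization $C \leq \min_{i\leq k}\overline{\sigma}_i \approx \theta$ so the controlled rotation in Step 5 is well-defined. The probability that the new ancilla is $\ket{0}$ equals $\frac{C^2}{\|\m{A}^{(k)}\|_F^2}\sum_{i=1}^{k}\sigma_i^2/\overline{\sigma}_i^2$, and using $\sigma_i\approx\overline{\sigma}_i$ together with $\|\m{A}^{(k)}\|_F \leq \sqrt{k}\|\m{A}\|$ this is $\Omega((\theta/\|\m{A}\|)^2)$. Step 6 therefore incurs an additional $O(\|\m{A}\|/\theta)$ factor, so the near-uniform superposition $\frac{1}{\sqrt{k}}\sum_{i=1}^k\ket{\ve{u}_i}\ket{\ve{v}_i}\ket{\overline{\sigma}_i}$ is prepared in time
\begin{equation}
\widetilde{O}\!\left(\tfrac{\|\m{A}\|}{\theta}\,\tfrac{1}{\sqrt{p}}\,\tfrac{\mu(\m{A})}{\epsilon}\right).
\end{equation}
Measuring the third register returns a uniformly random index in $\{1,\dots,k\}$, so feeding the preparation unitary into Theorem \ref{Theo:l2_vector_tomography} at precision $\delta$ to recover one chosen $\ket{\ve{u}_i}$ requires $O(k\cdot n/\delta^2)$ preparations; tomography on all $k$ vectors reuses these samples and gives the advertised $\widetilde{O}\!\left(\frac{\|\m{A}\|}{\theta}\frac{1}{\sqrt{p}}\frac{\mu(\m{A})}{\epsilon}\frac{kn}{\delta^2}\right)$ bound for the left vectors, and the symmetric bound with $m$ for the right ones. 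The singular values $\overline{\sigma}_i$ are simply read off the third register during the same procedure; the factor scores $\overline{\lambda}_i=\overline{\sigma}_i^2$ and ratios $\overline{\lambda}^{(i)}=\overline{\lambda}_i/\|\m{A}\|_F^2$ inherit the error bounds already established in Theorem \ref{TheoMio:factor_score_estimation}, and since tomography and SVE both succeed with probability $1-1/\text{poly}$, a union bound preserves the overall success probability.

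The main obstacle is certifying that the state fed into tomography is genuinely close to the ideal uniform superposition. Two perturbations must be controlled: the rotation in Step 5 reweights the amplitude $\sigma_i$ by $C/\overline{\sigma}_i$ instead of $C/\sigma_i$, introducing a relative perturbation of order $\epsilon/\theta$ per component, and the SVE must be invoked in its consistent phase-estimation variant so the estimates $\overline{\sigma}_i$ remain coherently entangled with the correct $\ket{\ve{u}_i}\ket{\ve{v}_i}$ through both rounds of amplitude amplification. Propagating these first-order perturbations and then applying Claim \ref{claim:vector_quantum_distance} to convert quantum-state closeness into $\|\ve{u}_i-\overline{\ve{u}}_i\| \leq \delta$ (noting that both $\ve{u}_i$ and the tomography output are unit vectors) closes the argument; any residual $O(\epsilon/\theta)$ slack can be absorbed by rescaling $\delta$ by a constant and is hidden inside the $\widetilde{O}$ notation.
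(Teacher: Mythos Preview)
Your proposal is correct and follows essentially the same route as the paper's proof: accumulate the SVE cost $\widetilde{O}(\mu(\m{A})/\epsilon)$, the first amplitude amplification factor $1/\sqrt{p}$, the second amplitude amplification factor $\|\m{A}\|/\theta$ from the HHL-style rotation, and then multiply by the tomography cost. The only cosmetic difference is that the paper phrases the $k$-overhead via the coupon collector's problem (needing $k\log k$ measurements of the near-uniform register to see every $\overline{\sigma}_i$), whereas you phrase it as ``each $\overline{\sigma}_i$ appears with probability $\approx 1/k$, so $O(k)$ preparations per useful sample, reused across all $k$ tomographies''; these are equivalent up to the log factors absorbed in $\widetilde{O}$. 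Your closing paragraph on the $O(\epsilon/\theta)$ amplitude perturbation is actually a bit more careful than the paper, which simply writes $\sigma_i/\overline{\sigma}_i \to 1$ as $\epsilon \to 0$ and moves on.
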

The proof consists in proving the time complexity and the error of Algorithm \ref{alg_main:topk_sv}.
\begin{proof}
Like in the previous proofs, the cost of preparing the state at step \ref{algTop-k:ampamp1}, is  $\widetilde{O}\left(\frac{1}{\sqrt{p}}\frac{\mu(\m{A})}{\epsilon}\right)$, where $\widetilde{O}\left(\frac{\mu(\m{A})}{\epsilon}\right)$ is the cost of singular value estimation and $\widetilde{O}\left(\frac{1}{\sqrt{p}}\right)$ is the one of amplitude amplification.
Step \ref{algTop-k:cond_rot2} is a conditional rotation and similarly to step \ref{algTop-k:cond_rot} it has a negligible cost.
The next step is to analyze the amplitude amplification at \ref{algTop-k:ampamp2}. 
The constant $C$ is a normalization factor in the order of $\widetilde{O}(1/\kappa(\m{A^{(k)}}))$ where $\kappa(\m{A^{(k)}}) = \frac{\sigma_{max}}{\sigma_{min}}$
is the condition number of the low-rank matrix 
$\m{A}^{(k)}$.
Since for construction $\sigma_{min} \geq \theta$,
we can bound the condition number 
$\kappa(\m{A}^{(k)}) \leq \frac{\norm{\m{A}}}{\theta}$.
From the famous work of Harrow, Hassidim and Lloyd \cite{HHL2009} we know that applying amplitude amplification on the state above, with the the third register being 
$\ket{0}$,
would cost 
$T(U_{\ref{algTop-k:ampamp2}}) \sim \widetilde{O}(\kappa(\m{A}^{(k)}) T(U_{\ref{algTop-k:cond_rot2}})) \sim \widetilde{O}\left(\frac{\norm{\m{A}}}{\theta}\frac{1}{\sqrt{p}}\frac{\mu(A)}{\epsilon}\right)$.

This last amplitude amplification leaves the registers in the state 
\begin{align}
\label{Eq:coupon_collect}
    \frac{1}{\sqrt{\sum_{i}^k \frac{\sigma_i^2}{\overline{\sigma}_i^2}}}\sum_i^k\frac{\sigma_i}{\overline{\sigma}_i} \ket{\ve{u}_i}\ket{\ve{v}_i}\ket{\overline{\sigma}_i} \sim 
    \frac{1}{\sqrt{k}}\sum_i^k \ket{\ve{u}_i}\ket{\ve{v}_i}\ket{\overline{\sigma}_i}
\end{align}
where
$\overline{\sigma}_i \in [\sigma_i - \epsilon, \sigma_i + \epsilon]$ 
and 
$\frac{\sigma_i}{\sigma_i \pm \epsilon} \rightarrow 1$ 
for
$\epsilon \rightarrow 0$.

When measuring the last register of state \ref{algTop-k:ampamp2} in the computational basis, we measure $\ket{\overline{\sigma}_i}$ and the first two registers collapse in the state $\ket{\ve{u}_i}\ket{\ve{v}_i}$.
It is possible to perform vector-state tomography on $\ket{\ve{u}_i}\ket{\ve{v}_i}$, using Theorem \ref{Theo:l2_vector_tomography} on the first register to retrieve $\overline{\ve{u}}_i$, or on the second one to retrieve $\overline{\ve{v}}_i$.
The costs are
$O(\frac{n\log{n}}{\delta^2})$
and
$O(\frac{m\log{m}}{\delta^2})$, respectively.
Using a coupon collector's argument \cite{erdHos1961couponcollector}, if the $k$ states
$\ket{\overline{\sigma}_i}$
are uniformly distributed, to get all the $k$ possible couples 
$\ket{\ve{u}_i}\ket{\ve{v}_i}$
at least once, we would need 
$k\log k$
measurements on average. 
This proves that it is possible to estimate all the singular values, factor scores and factor score ratios with the guarantees of Theorem \ref{Theo:singular_value_estimation} in time $\widetilde{O}(\frac{\norm{A}}{\theta}\frac{1}{\sqrt{p}}\frac{\mu(\m{A})k}{\epsilon})$.

To perform tomography on each state-vector, one should satisfy the coupon collector the same number of times as the measurements needed by the tomography procedure. 
The costs of the tomography for all the vectors $\{\overline{\ve{u}}_i\}_i^k$ and $\{\overline{\ve{v}}_i\}_i^k$ are
$O\left(T(U_{\ref{algTop-k:ampamp2}})\frac{k\log{k}\cdot n\log{n}}{\delta^2}\right)$, 
and 
$O\left(T(U_{\ref{algTop-k:ampamp2}})\frac{k\log{k}\cdot m\log{m}}{\delta^2}\right).$
Therefore, the following complexities are proven:
$\widetilde{O}\left(\frac{\norm{\m{A}}}{\theta}\frac{1}{\sqrt{p}}\frac{\mu(\m{A})}{\epsilon}\frac{kn}{\delta^2}\right), \widetilde{O}\left(\frac{\norm{\m{A}}}{\theta}\frac{1}{\sqrt{p}}\frac{\mu(\m{A})}{\epsilon}\frac{km}{\delta^2}\right).$
\end{proof}

In the appendix, Section \ref{SuppExperiments:run-time_param}, we provide experiments that show that the coupon collector's argument of Eq. \ref{Eq:coupon_collect} is accurate for practical $\epsilon$.  Besides $1/\sqrt{p}$ being negligible, it is interesting to note that the parameter $\theta$ can be computed using: \begin{enumerate*}
    \item the procedures of Theorems \ref{TheoMio:factor_score_estimation}, \ref{TheoMio:check_explained_variance};
    \item the binary search of Theorem \ref{Theorivisto:binarysearch};
    \item the available literature on the type of data stored in the input matrix $\m{A}$.
\end{enumerate*}
About the latter, the original paper of latent semantic indexing \cite{deerwester1990indexing} states that the first $k=100$ singular values are enough for a good representation. 
We believe that, in the same way, fixed thresholds $\theta$ can be defined for different machine learning applications.
The experiments of \citet{kerenidis2020mnist} on the run-time parameters of the polynomial expansions of the MNIST dataset support this expectation:
even though in qSFA they keep the $k$ smallest singular values and refer to $\theta$ as the biggest singular value to retain, this value does not vary much when the the dimensionality of their dataset grows.
In our experiments, we observe that different datasets for image classification have similar $\theta$s. For completeness, we also state a different version of Theorem \ref{TheoMio:top-k_sv_extraction}, with $\ell_\infty$ guarantees on the vectors.
\begin{corollary}[Fast top-k singular vectors extraction] 
\label{Prop:top-k_sv_linfty}
    The run-times of \ref{TheoMio:top-k_sv_extraction} can be improved to
    $\widetilde{O}\left(\frac{\norm{\m{A}}}{\theta}\frac{1}{\sqrt{p}}\frac{\mu(\m{A})}{\epsilon}\frac{k}{\delta^2}\right)$
    with estimation guarantees on the $\ell_\infty$ norms.
\end{corollary}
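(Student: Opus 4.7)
The plan is to mirror the proof of Theorem \ref{TheoMio:top-k_sv_extraction} step by step, swapping only the final tomography subroutine. Concretely, I would run Algorithm \ref{alg_main:topk_sv} unchanged through Step \ref{algTop-k:ampamp2}, producing the near-uniform superposition $\tfrac{1}{\sqrt{k}}\sum_{i}^{k}\ket{\ve{u}_i}\ket{\ve{v}_i}\ket{\overline{\sigma}_i}$ at cost $T(U_{\ref{algTop-k:ampamp2}}) = \widetilde{O}\bigl(\tfrac{\|\m{A}\|}{\theta}\tfrac{1}{\sqrt{p}}\tfrac{\mu(\m{A})}{\epsilon}\bigr)$, exactly as established in the proof of Theorem \ref{TheoMio:top-k_sv_extraction}. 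The only conceptual change is in the readout: instead of invoking the $\ell_2$ state-vector tomography of Theorem \ref{Theo:l2_vector_tomography} on the collapsed registers, I would invoke the $\ell_\infty$ tomography of Theorem \ref{Theo:linf_vector_tomography}, which requires access to the controlled unitary but costs only $O\bigl(T(U_x)\tfrac{\log m}{\delta^2}\bigr)$ per vector, replacing the $m$ factor with a logarithm.

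Next I would combine this tomography cost with the same coupon collector argument used before. Once the third register is measured and outputs some $\ket{\overline{\sigma}_i}$, the first two registers are in $\ket{\ve{u}_i}\ket{\ve{v}_i}$; to gather enough samples to successfully run $\ell_\infty$ tomography on each of the $k$ indices, the expected number of preparations is $O(k \log k)$ times the per-vector tomography sample count. Multiplying $T(U_{\ref{algTop-k:ampamp2}})$ by $k \log k \cdot \tfrac{\log n}{\delta^2}$ (or $\tfrac{\log m}{\delta^2}$) and folding the logarithms into $\widetilde{O}$ yields the claimed bound $\widetilde{O}\bigl(\tfrac{\|\m{A}\|}{\theta}\tfrac{1}{\sqrt{p}}\tfrac{\mu(\m{A})}{\epsilon}\tfrac{k}{\delta^2}\bigr)$ for both left and right singular vectors, with the output guarantee $\|\ve{u}_i - \overline{\ve{u}}_i\|_\infty \leq \delta$ and $\|\ve{v}_i - \overline{\ve{v}}_i\|_\infty \leq \delta$ inherited directly from Theorem \ref{Theo:linf_vector_tomography}, each holding with probability at least $1-1/\text{poly}(n)$ or $1-1/\text{poly}(m)$.

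The only real subtlety, and the step I expect to require the most care, is verifying that the unitary built by Steps 1–6 of Algorithm \ref{alg_main:topk_sv} can indeed be applied in a controlled fashion with the same asymptotic cost, as demanded by Theorem \ref{Theo:linf_vector_tomography}. This is essentially a bookkeeping exercise: singular value estimation, the conditional rotations of Steps \ref{algTop-k:cond_rot} and \ref{algTop-k:cond_rot2}, and the two amplitude amplification stages are all implementable as coherent subroutines on ancilla-controlled copies of the base oracle, so controlling the overall unitary multiplies the cost only by constants. Once this is noted, the remainder of the argument is a direct substitution into the proof of Theorem \ref{TheoMio:top-k_sv_extraction}, and the corollary follows.
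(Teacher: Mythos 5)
Your proposal matches the paper's proof, which is a one-liner: replace the $\ell_2$ tomography at Step \ref{algTop-k:vector_tomography} of Algorithm \ref{alg_main:topk_sv} with the $\ell_\infty$ tomography of Theorem \ref{Theo:linf_vector_tomography}, keeping the rest of the argument (amplitude-amplification cost and coupon-collector factor) unchanged. Your extra remark about the need for a controlled version of the state-preparation unitary is a legitimate hypothesis of Theorem \ref{Theo:linf_vector_tomography} that the paper silently assumes, so spelling it out is a small but reasonable addition rather than a departure.
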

\begin{proof}
The proof consists in using $\ell_\infty$ tomography (Theorem \ref{Theo:linf_vector_tomography}) at step \ref{algTop-k:vector_tomography} of Algorithm \ref{alg_main:topk_sv}.
\end{proof}
Note that, given a vector with $d$ non-zero entries, performing $\ell_\infty$ tomography with error $\frac{\delta}{\sqrt{d}}$ provides the same guarantees of $\ell_2$ tomography with error $\delta$.
This implies that the extraction of the singular vectors with $\ell_2$ guarantees can be faster if we can make assumptions on their sparseness: $\widetilde{O}\left(\frac{\norm{\m{A}}}{\theta}\frac{1}{\sqrt{p}}\frac{\mu(\m{A})}{\epsilon}\frac{kd}{\delta^2}\right)$.

\section{Applications to machine learning}
\label{Section:applications}
The new quantum procedures can be used for principal component analysis, correspondence analysis, and latent semantic analysis.
Besides extracting the orthogonal factors and measuring their importance, we provide a procedure to represent the data in PCA's reduced feature space on a quantum computer. 
In a similar way, it is possible to compute the representations of CA and LSA.

\subsection{Principal Component Analysis}
\label{subsec:PCArelw}
Principal component analysis is a widely-used multivariate statistical method for continuous variables with applications in machine learning. 
Its uses range from outlier detection to dimensionality reduction and data visualization. 
Given a matrix $\m{A} \in \R^{n \times m}$ storing information about $n$ data points with $m$ coordinates, its \emph{principal components} are the set of orthogonal vectors along which the variance of the data points is maximized.
The goal of PCA is to compute the principal components with the amount of variance they capture and rotate the data points to express their coordinates along the principal components.
It is possible to represent the data using only the $k$ coordinates that express the most variance for dimensionality reduction.

\paragraph{PCA Model} The model of PCA is closely related to the singular value decomposition of the data matrix $\m{A}$, shifted to row mean 0. 
The model consists of the principal components and the amount of variance they explain.
The principal components coincide with the right singular vectors $\ve{v}_i$, the factor scores $\lambda_i = \sigma_i^2$ represent the amount of variance along each of them, and the factor score ratios $\lambda^{(i)}=\frac{\lambda_i}{\sum_j^r\lambda_j}$ express the percentage of retained variance.
For datasets with $0$ mean, the transformation consists in a rotation along the principal components: $\m{Y} = \m{A}\m{V} = \m{U}\Sig\m{V}^T\m{V} = \m{U}\Sig  \in \R^{n \times m}$.
When performing dimensionality reduction, it suffice to use the top $k$ singular values and vectors. \\

Using the procedures from Section \ref{Section:novel_quantum_subroutines} it is possible to extract the model for principal component analysis. In particular, 
Theorems \ref{TheoMio:factor_score_estimation}, \ref{TheoMio:check_explained_variance}, and \ref{Theorivisto:binarysearch} allow to retrieve information on the factor scores and on the factor score ratios, while Theorem \ref{TheoMio:top-k_sv_extraction} allows extracting the principal components.
The run-time of the model extraction is the sum of the run-times of the theorems: $\widetilde{O}\left(\left( \frac{1}{\gamma^2} + \frac{km}{\theta\delta^2}\right)\frac{\mu(\m{A})}{\epsilon}\right)$.
The model comes with the following guarantees:
$\abs{\sigma_i - \overline{\sigma}_i} \leq \frac{\epsilon}{2}$;
$\abs{\lambda_i - \overline{\lambda}_i} \leq \epsilon\sqrt{\lambda_i}$;
$\abs{\lambda^{(i)} - \overline{\lambda}^{(i)}} \leq \epsilon\frac{\sigma_i}{\norm{\m{A}}_f}$;
$\norm{\ve{v}_i - \overline{\ve{v}}_i} \leq \delta$ for $i \in \{0, \dots ,k-1\}$. 
This run-time is generally smaller than the number of elements of the input data matrix, providing polynomial speed-ups on the best classical routines for non-sparse matrices.
In writing the time complexity of the routines, we have omitted the term $\frac{1}{\sqrt{p}}$ because usually $p$ is chosen to be a number greater than 0.5  (generally in the order of 0.8/0.9). 

When performing dimensionality reduction, the goal is to obtain the matrix $\m{Y}=\m{U}\Sig \in \R^{n\times k}$, where $\m{U} \in \R^{n \times k}$ and $\Sig \in \R^{k \times k}$ are composed respectively of the top $k$ left singular vectors and singular values.
In Lemma \ref{Lemma:accuracyUSeVS}, we provide a theoretical error bound for $\m{Y}$, using the estimated entries of $\m{U}$ and $\Sig$.
For sake of completeness, the error bound is also stated for $\m{V}\Sig$.
These bounds stand regardless of how the singular values and vectors are extracted and hold when the multiplication is done with a classical computer.

\begin{lemma} [Accuracy of $\overline{\m{U}\Sig}$ and $\overline{\m{V}\Sig}$] 
\label{Lemma:accuracyUSeVS}
    Let $\m{A} \in \R^{n \times m}$ be a matrix. 
    Given some approximate procedures to retrieve estimates $\overline{\sigma}_i$ of the singular values $\sigma_i$ such that
    $\abs{\sigma_i - \overline{\sigma}_i} \leq \epsilon$
    and unit estimates 
    $\overline{\ve{u}}_i$
    of the left singular vectors 
    $\ve{u}_i$
    such that 
    $\norm{\overline{\ve{u}}_i - \ve{u}_i}_2 \leq \delta$,
    the error on 
    $\m{U}\Sig$
    can be bounded as $\norm{\m{U}\Sig - \overline{\m{U}}\overline{\Sig}}_F \leq \sqrt{\sum_j^k \left( \epsilon + \delta\sigma_j \right)^2}$.
    Similarly,
    $\norm{\m{V}\Sig - \overline{\m{V}}\overline{\Sig}}_F \leq \sqrt{\sum_j^k \left( \epsilon + \delta\sigma_j \right)^2}$.
    Both are bounded by $\sqrt{k}(\epsilon+\delta\norm{\m{A}})$.
\end{lemma}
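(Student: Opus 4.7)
The plan is to bound the Frobenius norm column by column, since $\m{U}\Sig$ and $\overline{\m{U}}\overline{\Sig}$ are both $n \times k$ matrices whose $j$-th columns are simply $\sigma_j \ve{u}_j$ and $\overline{\sigma}_j \overline{\ve{u}}_j$ respectively. By the definition of the Frobenius norm, $\norm{\m{U}\Sig - \overline{\m{U}}\overline{\Sig}}_F^2 = \sum_{j=1}^{k}\norm{\sigma_j \ve{u}_j - \overline{\sigma}_j \overline{\ve{u}}_j}_2^2$, so everything reduces to bounding the $\ell_2$ norm of each column-difference.

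For a fixed $j$ I would add and subtract $\sigma_j \overline{\ve{u}}_j$ and apply the triangle inequality:
\begin{equation*}
\norm{\sigma_j \ve{u}_j - \overline{\sigma}_j \overline{\ve{u}}_j}_2 \leq \sigma_j \norm{\ve{u}_j - \overline{\ve{u}}_j}_2 + \abs{\sigma_j - \overline{\sigma}_j}\norm{\overline{\ve{u}}_j}_2 \leq \sigma_j \delta + \epsilon,
\end{equation*}
where the last step uses the hypothesis on the singular-value/vector errors together with the fact that $\overline{\ve{u}}_j$ is a unit vector. Squaring and summing over $j \in \{1,\dots,k\}$ produces the claimed bound $\sqrt{\sum_j^k (\epsilon + \delta \sigma_j)^2}$. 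The case of $\m{V}\Sig$ is completely symmetric, since nothing in the argument used orthogonality between different $\ve{u}_j$ or anything other than unit normalization of the individual estimate.

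For the looser form $\sqrt{k}(\epsilon + \delta\norm{\m{A}})$, I would note that each $\sigma_j$ is at most $\sigma_{\max} = \norm{\m{A}}$, so $\epsilon + \delta\sigma_j \leq \epsilon + \delta\norm{\m{A}}$ uniformly in $j$; summing $k$ identical squared terms and taking the square root gives the result. I do not expect any real obstacle here: the argument is a clean triangle-inequality bookkeeping exercise, and the only subtle point worth flagging explicitly is that the bound requires the estimated vectors $\overline{\ve{u}}_j$ (resp. $\overline{\ve{v}}_j$) to be unit-normalized, which is exactly the output guaranteed by the tomography procedures in Theorems \ref{Theo:l2_vector_tomography} and \ref{Theo:linf_vector_tomography} invoked to produce these estimates.
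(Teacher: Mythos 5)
Your proof is correct and uses essentially the same approach as the paper: a column-by-column triangle-inequality bound of $\sigma_j\delta + \epsilon$ on $\norm{\sigma_j\ve{u}_j - \overline{\sigma}_j\overline{\ve{u}}_j}$, followed by squaring, summing, and then uniformly bounding $\sigma_j \leq \norm{\m{A}}$ for the looser form. The only cosmetic difference is that you split via adding and subtracting $\sigma_j\overline{\ve{u}}_j$, while the paper substitutes $\overline{\sigma}_j = \sigma_j \pm \epsilon$ directly; these are algebraically equivalent.
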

We prove this result for $\norm{\m{U}\Sig - \overline{\m{U}}\overline{\Sig}}_F$. 
The proof for $\norm{\m{V}\Sig-\overline{\m{V}}\overline{\Sig}}_F$ is analogous.

\begin{proof}
We first bound the error on the columns:
\begin{align}
    \norm{\overline{\sigma}_i \overline{\ve{u}}_i - \sigma_i\ve{u}_i} \leq \norm{(\sigma_i \pm \epsilon)\overline{\ve{u}}_i - \sigma_i\ve{u}_i} = \norm{\sigma_i (\overline{\ve{u}}_i - \ve{u}_i) \pm \epsilon\overline{\ve{u}}_i }
\end{align}
Because of the triangular inequality, 
$\norm{\sigma_i (\overline{\ve{u}}_i - \ve{u}_i) \pm \epsilon\overline{\ve{u}}_i } \leq \sigma_i\norm{\overline{\ve{u}}_i - \ve{u}_i} +  \epsilon\norm{\overline{\ve{u}}_i }$. Also by hypothesis,  $\norm{(\overline{\ve{u}}_i - \ve{u}_i)} \leq \delta$ and $\norm{\overline{\ve{u}}_i} = 1$ . Thus,
$    \sigma_i\norm{ \overline{\ve{u}}_i - \ve{u}_i} +  \epsilon\norm{\overline{\ve{u}}_i } \leq \sigma_i \delta + \epsilon
$. 
Since $f(x) = \sqrt{x}$ is an increasing monotone function, it is possible to prove:
\begin{align}
    \norm{\overline{\m{U}}\overline{\Sig} - \m{U}\Sig}_F & = \sqrt{\sum_i^n\sum_j^k \norm{ \overline{\sigma}_j\overline{u}_{ij} - \sigma_j u_{ij}  }^2} = \sqrt{\sum_j^k \left( \norm{ \overline{\sigma}_j\overline{\ve{u}}_j - \sigma_j \ve{u}_j } \right)^2} \nonumber\\
    & \leq \sqrt{\sum_j^k \left( \epsilon + \delta\sigma_j \right)^2} \leq 
    \sqrt{k \left( \epsilon + \delta\sigma_{max} \right)^2} \leq
    \sqrt{k}(\epsilon + \delta\norm{\m{A}})
\end{align}
\end{proof}

Using matrix-multiplication from Theorem \ref{Theo:matrix_vec_mul}, we can have algorithms to produce quantum states proportional to the data representation in the new feature space.
Having access to $\m{V}^{(k)} \in \R^{m\times k}$, these routines create the new data points in almost constant time and are helpful when chained to other quantum machine learning algorithms that need to be executed multiple times.

\begin{corollary} [Quantum PCA: vector dimensionality reduction]
    \label{Coro:qPCAvector}
    Let $\xi$ be a precision parameter.
    Let there be efficient quantum access to the top k right singular vectors $\overline{\m{V}}^{(k)} \in \R^{m \times k}$ of a matrix $\m{A}= \m{U}\Sig\m{V}^T \in \R^{n \times m}$, such that $\norm{\m{V}^{(k)} - \overline{\m{V}}^{(k)}} \leq \frac{\xi}{\sqrt{2}}$. 
    Given quantum access to a row $\ve{a}_i$ of $\m{A}$, the quantum state $\ket{\overline{\ve{y}}_{i}} = \frac{1}{\norm{\overline{\ve{y}}_i}}\sum_i^k \overline{y}_k \ket{i}$, proportional to its projection onto the PCA space, can be created in time $\widetilde{O}\left(\mu(\m{V}^{(k)})\frac{\norm{\ve{a}_i}}{\norm{\overline{\ve{y}}_i}}\right)$ with probability at least $1-1/\text{poly}(m)$ and precision $\norm{\ket{\ve{y}_i} - \ket{\overline{\ve{y}}_i}} \leq \frac{\norm{\ve{a}_i}}{\norm{\overline{\ve{y}}_i}}\xi$.
    An estimate of $\norm{\overline{\ve{y}}_i}$, to relative error $\eta$, can be computed in $\widetilde{O}(1/\eta)$.
\end{corollary}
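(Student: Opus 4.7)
The goal is to prepare a quantum state proportional to $\overline{\ve{y}}_i = (\overline{\m{V}}^{(k)})^T \ve{a}_i \in \R^k$, the projection of the data point $\ve{a}_i$ onto the estimated top-$k$ right singular vector subspace. The plan is to invoke Theorem \ref{Theo:matrix_vec_mul} with matrix $(\overline{\m{V}}^{(k)})^T$ and input vector $\ve{a}_i$. Quantum access to $\overline{\m{V}}^{(k)}$ is given by hypothesis, and quantum access to $\ve{a}_i$ follows from Definition \ref{Def:quantum_access} applied to $\m{A}$. Although Theorem \ref{Theo:matrix_vec_mul} is stated for square matrices with $\sigma_{\max}\leq 1$, the block-encoding machinery of \cite{chakraborty2018power} applies equally to the rectangular $k\times m$ matrix, and since $\m{V}^{(k)}$ has orthonormal columns its spectral norm equals $1$ while $\overline{\m{V}}^{(k)}$ differs from it by at most $\xi/\sqrt{2}$ in spectral norm, so a fixed constant rescaling restores the normalization hypothesis without affecting the asymptotics.

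For the run-time, Theorem \ref{Theo:matrix_vec_mul} produces $\ket{(\overline{\m{V}}^{(k)})^T\ve{a}_i}$ in time $\widetilde{O}(\mu(\overline{\m{V}}^{(k)})/\gamma)$, where $\gamma$ lower-bounds the norm of the product applied to the normalized input $\ket{\ve{a}_i}=\ve{a}_i/\norm{\ve{a}_i}$. That product has norm $\norm{\overline{\ve{y}}_i}/\norm{\ve{a}_i}$, so $1/\gamma=\norm{\ve{a}_i}/\norm{\overline{\ve{y}}_i}$, matching the claimed run-time (using $\mu(\overline{\m{V}}^{(k)})=\mu(\m{V}^{(k)})+O(\xi)$). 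The optional relative-$\eta$ estimate of $\norm{\overline{\ve{y}}_i}$ is obtained by the multiplicative $\widetilde{O}(1/\eta)$ overhead already stated in Theorem \ref{Theo:matrix_vec_mul}.

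The error analysis decomposes into two contributions. The matrix-vector multiplication yields a state within $\epsilon_{\mathrm{mul}}$ of $\ket{(\overline{\m{V}}^{(k)})^T\ve{a}_i}$ at cost only $\log(1/\epsilon_{\mathrm{mul}})$, so this term is absorbed into the $\widetilde{O}$ by choosing $\epsilon_{\mathrm{mul}}$ polynomially small. The dominant contribution is from using $\overline{\m{V}}^{(k)}$ instead of $\m{V}^{(k)}$. Setting $\ve{y}_i=(\m{V}^{(k)})^T\ve{a}_i$, sub-multiplicativity of the spectral norm gives
\begin{equation}
\norm{\ve{y}_i-\overline{\ve{y}}_i}=\norm{(\m{V}^{(k)}-\overline{\m{V}}^{(k)})^T\ve{a}_i}\leq \frac{\xi}{\sqrt{2}}\,\norm{\ve{a}_i}.
\end{equation}
Applying Claim \ref{claim:vector_quantum_distance} with $\overline{\ve{y}}_i$ in the role of the reference vector converts this to $\norm{\ket{\ve{y}_i}-\ket{\overline{\ve{y}}_i}}\leq \sqrt{2}\cdot\frac{(\xi/\sqrt{2})\,\norm{\ve{a}_i}}{\norm{\overline{\ve{y}}_i}}=\xi\,\norm{\ve{a}_i}/\norm{\overline{\ve{y}}_i}$, exactly the bound in the statement.

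The only subtleties beyond routine calculation are (i) ensuring the matrix-vector multiplication error is dominated by the $\xi$-scale perturbation from the approximate singular vectors, which is settled by the $\log(1/\epsilon_{\mathrm{mul}})$ cost, and (ii) justifying that the rectangular, nearly-but-not-exactly-isometric matrix $\overline{\m{V}}^{(k)T}$ may be fed into Theorem \ref{Theo:matrix_vec_mul}; both are handled by a standard block-encoding wrapper and a constant-factor rescaling that perturb neither the stated run-time nor the stated error.
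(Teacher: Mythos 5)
Your proposal is correct and follows essentially the same route as the paper: invoke Theorem~\ref{Theo:matrix_vec_mul} to prepare $\ket{(\overline{\m{V}}^{(k)})^T\ve{a}_i}$, read off the run-time from $\gamma=\norm{\overline{\ve{y}}_i}/\norm{\ve{a}_i}$, and then convert the classical error $\norm{\ve{y}_i-\overline{\ve{y}}_i}$ into a quantum-state error via Claim~\ref{claim:vector_quantum_distance}. The only cosmetic difference is in the intermediate bound: the paper derives $\norm{\m{V}^{(k)}-\overline{\m{V}}^{(k)}}_F\le\sqrt{k}\,\delta$ from a column-wise $\delta$ and then uses $\norm{(\m{V}^{(k)}-\overline{\m{V}}^{(k)})^T\ve{a}_i}\le\norm{\m{V}^{(k)}-\overline{\m{V}}^{(k)}}_F\norm{\ve{a}_i}$, whereas you apply sub-multiplicativity of the operator norm directly to the hypothesis $\norm{\m{V}^{(k)}-\overline{\m{V}}^{(k)}}\le\xi/\sqrt{2}$, which is marginally tighter and lands exactly on the stated bound with $\norm{\overline{\ve{y}}_i}$ in the denominator. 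Both routes are fine.
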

\label{Proof:qPCA_vector}
\begin{proof}
Here with $\m{V}$ we denote $\m{V}^{(k)} \in \R^{m\times k}$.
Given a vector $\ve{a}_i$, its projection onto the k-dimensional PCA space of $\m{A}$ is $\ve{y}_i^T = \ve{a}_i^T \m{V}$, or equivalently $\ve{y}_i = \m{V}^T\ve{a}_i$.
Note that $\norm{\ve{y}_i} = \norm{\m{V}^T\ve{a}_i}$. 

It is possible to use Theorem \ref{Theo:matrix_vec_mul} to multiply the quantum state $\ket{\ve{a}_i}$ by $\m{V}^T$, appropriately padded with 0s to be a square $\R^{m\times m}$ matrix.
In this way, we can create an approximation $\ket{\overline{\ve{y}}_i}$ of the state $\ket{\ve{y}_i} = \ket{\m{V}^T\ve{a}_i}$ in time $\widetilde{O}\left(\frac{\mu(\m{V}^T)\log(1/\epsilon)}{\gamma}\right)$ with probability $1-1/\text{poly}(m)$, such that $\norm{\ket{\ve{y}_i} - \ket{\overline{\ve{y}}_i}} \leq \epsilon$.
Since $\m{V}^T$ has rows with unit $\ell_2$ norm, we can prepare efficient quantum access to it by creating access to its rows \citep[Theorem IV.1]{kerenidis2020gradient}.
Having $\gamma = \norm{\m{V}^T\ve{a}_i}/\norm{\ve{a}_i}$, we get a run-time of $\widetilde{O}\left(\mu(\m{V})\frac{\norm{\ve{a}_i}}{\norm{\ve{y}_i}}\log(1/\epsilon)\right)$.  The term $\log(1/\epsilon)$ can be considered negligible. 
We conclude that the state $\ket{\ve{y}_i}$ can be created in time $\widetilde{O}\left(\mu(\m{V})\frac{\norm{\ve{a}_i}}{\norm{\ve{y}_i}}\right)$ with probability $1-1/\text{poly}(m)$ and that its norm can be estimated to relative error $\eta$ in time $\widetilde{O}\left(\mu(\m{V})\frac{\norm{\ve{a}_i}}{\norm{\ve{y}_i}}\frac{1}{\eta}\right)$.

For what concerns the error, we start by bounding $\norm{\ve{y}_i - \overline{\ve{y}}_i}$ and then use Claim \ref{claim:vector_quantum_distance} to bound the error on the quantum states.
Assume to have estimates $\overline{\ve{v}}_i$ of the columns of $\m{V}$ such that $\norm{\ve{v}_i - \overline{\ve{v}}_i} \leq \delta$.
\begin{align}
    \norm{ \m{V} - \overline{\m{V}} }_F =
    \sqrt{\sum_i^n\sum_j^k \left(v_{ij} - \overline{v}_{ij} \right)^2} 
     \leq
    \sqrt{k}\delta
\end{align}

Considering that 
$\norm{\ve{y}_i - \overline{\ve{y}}_i} = \norm{\ve{a}_i^T\m{V}^{(k)} - \ve{a}_i^T\overline{\m{V}}^{(k)}} \leq
\norm{\ve{a}_i}\sqrt{k}\delta$,
we can use Claim \ref{claim:vector_quantum_distance} to state
\begin{align}
    \norm{\ket{\ve{y}_i} - \ket{\overline{\ve{y}}_i}} \leq
    \frac{\norm{\ve{a}_i}}{\norm{\ve{y}_i}}\sqrt{2k}\delta = \frac{\norm{\ve{a}_i}}{\norm{\ve{y}_i}}\xi.
\end{align}
Setting $\delta = \frac{\xi}{\sqrt{2k}}$ leads to the requirement $\norm{ \m{V} - \overline{\m{V}} }_F \leq \frac{\xi}{\sqrt{2}}$.
\end{proof}

This result also holds when $\ve{a}_i$ is a previously unseen data point, not necessarily stored in $\m{A}$.
Note that from the row orthogonality of $\m{V}^{(k)}$ it follows that $\mu(\m{V}^{(k)}) \leq \norm{\m{V}^{(k)}}_F = \sqrt{k}$.
Furthermore, $\frac{\norm{\ve{y}_i}}{\norm{\ve{a}_i}}$ is expected to be close to $1$, as it is the percentage of support of $\ve{a}_i$ on the new feature space spanned by $\m{V}^{(k)}$. 
We formalize this better using Definition \ref{def:pca_assumption} below.

\begin{definition}[PCA-representable data]
\label{def:pca_assumption}
A set of $n$ data points described by $m$ coordinates, represented through a matrix $\m{A}=\sum_i^r\sigma_i \ve{u}_i\ve{v}_i^T \in \R^{n \times m}$
is said to be PCA-representable if there exists $p \in [\frac{1}{2},1], \varepsilon \in [0, 1/2], \beta \in [p-\varepsilon, p+\varepsilon], \alpha \in [0,1]$ such that:
\begin{itemize}
    \item $\exists k \in O(1)$ such that $\frac{\sum_i^k \sigma^2_i}{\sum_i^m \sigma^2_i}= p$
    \item for at least $\alpha n$ points $\ve{a}_i$ it holds $\frac{\norm{\ve{y}_i}}{\norm{\ve{a}_i}} \geq \beta$, where $\norm{\ve{y}_i} = \sqrt{\sum_i^k \abs{\braket{\ve{a}_i}{\ve{v}_j}}^2} \norm{\ve{a}_i}$. 
\end{itemize}
\end{definition}

\begin{claim}[Quantum PCA on PCA-representable datasets]
\label{claim:pcarepr}
Let $\ve{a}_i$ be a row of $\m{A} \in  \R^{n\times d}$. Then, for $p \in [1/2, 1]$, the run-time of Corollary \ref{Coro:qPCAvector} is $\mu(\m{V})\frac{\norm{\ve{a}_i}}{\norm{\overline{\ve{y}}_i}} = \mu(\m{V})\frac{1}{\beta} = O(\mu(\m{V}))$ with probability greater than $\alpha$.
\end{claim}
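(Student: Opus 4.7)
The plan is to combine the run-time given by Corollary \ref{Coro:qPCAvector} with the two structural properties in Definition \ref{def:pca_assumption}, and to show that for a row chosen at random the prefactor $\norm{\ve{a}_i}/\norm{\overline{\ve{y}}_i}$ collapses to a constant. Corollary \ref{Coro:qPCAvector} already certifies the run-time $\widetilde{O}\bigl(\mu(\m{V}^{(k)})\tfrac{\norm{\ve{a}_i}}{\norm{\overline{\ve{y}}_i}}\bigr)$, so the entire task reduces to controlling this ratio with a probability that matches $\alpha$.

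First, I would invoke the second bullet of Definition \ref{def:pca_assumption}: for at least $\alpha n$ of the rows of $\m{A}$, the exact projection satisfies $\norm{\ve{y}_i}/\norm{\ve{a}_i} \geq \beta$. Consequently, when $\ve{a}_i$ is a uniformly sampled row of $\m{A}$, the event $\norm{\ve{a}_i}/\norm{\ve{y}_i} \leq 1/\beta$ occurs with probability at least $\alpha$, which is exactly the probabilistic guarantee we need.

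Second, I would translate this into a bound on the estimated quantity $\norm{\overline{\ve{y}}_i}$ used in the run-time. Since $\ket{\overline{\ve{y}}_i}$ is produced by the matrix-vector multiplication routine of Theorem \ref{Theo:matrix_vec_mul} and its norm can be estimated to any constant relative error $\eta$ at only a logarithmic overhead, we can assume $\norm{\overline{\ve{y}}_i}\geq (1-\eta)\norm{\ve{y}_i}$, which gives $\norm{\ve{a}_i}/\norm{\overline{\ve{y}}_i} \leq 1/\bigl((1-\eta)\beta\bigr)$. Treating $\eta$ as a small absolute constant (which only inflates the run-time by a constant factor absorbed by the $\widetilde{O}$ notation), this is the claimed $1/\beta$ bound.

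Third, the final $O(1)$ step follows from the range of $\beta$ in Definition \ref{def:pca_assumption}: $\beta \in [p-\varepsilon, p+\varepsilon]$ with $p \in [1/2,1]$, so in the regime of interest $\beta$ is a constant bounded away from $0$, giving $1/\beta = O(1)$ and therefore a run-time of $\widetilde{O}(\mu(\m{V}^{(k)}))$. I do not expect a serious obstacle here: the only mildly delicate point is converting a statement about the true projection $\ve{y}_i$ into one about its quantum estimate $\overline{\ve{y}}_i$, but this is routine because Corollary \ref{Coro:qPCAvector} itself delivers the relative-error norm estimate required for the substitution.
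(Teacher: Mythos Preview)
Your proposal is correct. The paper does not include an explicit proof of this claim; it states it and immediately moves on, treating it as a direct consequence of Definition~\ref{def:pca_assumption} and the run-time expression of Corollary~\ref{Coro:qPCAvector}. Your argument spells out exactly that implicit reasoning: the second bullet of the definition gives $\norm{\ve{a}_i}/\norm{\ve{y}_i}\leq 1/\beta$ for an $\alpha$-fraction of rows, and the parameter range forces $1/\beta=O(1)$. Your extra step converting from $\norm{\ve{y}_i}$ to $\norm{\overline{\ve{y}}_i}$ via the relative-error norm estimate is more careful than the paper itself, which silently identifies the two quantities in the claim.
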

It is known that, in practical machine learning datasets, $\alpha$ is a number fairly close to one.
We have tested the value of $\alpha$ for the MNIST, Fashion MNIST and CIFAR-10 datasets, finding values over 0.85 for any $p \in (0,1]$.

The next corollary shows how to perform  perform dimensionality reduction on  the whole matrix, enabling quantum access to the data matrix in the reduced feature space.

\begin{corollary} [Quantum PCA: matrix dimensionality reduction]
\label{Coro:qPCAmatrix}
    Let $\xi$ be a precision parameter and $p$ be the amount of variance retained after the dimensionality reduction. 
    Let there be efficient quantum access to $\m{A}= \m{U}\Sig\m{V}^T \in \R^{n \times m}$ and to its top k right singular vectors $\overline{\m{V}}^{(k)} \in \R^{m \times k}$, such that $\norm{\m{V}^{(k)} - \overline{\m{V}}^{(k)}} \leq \frac{\xi\sqrt{p}}{\sqrt{2}}$. 
    There exists a quantum algorithm that, with probability at least $1-1/\text{poly}(m)$, creates the state 
     $\ket{\overline{\m{Y}}} = \frac{1}{\norm{\m{Y}}_F}\sum_i^n \norm{\ve{y}_{i,\cdot}}\ket{i}\ket{\ve{y}_{i,\cdot}}$,
    proportional to the projection of $\m{A}$ in the PCA subspace, 
    with error $\norm{\ket{\m{Y}} - \ket{\overline{\m{Y}}}} \leq \xi$ in time $\widetilde{O}(\mu(\m{V})/\sqrt{p})$.
    An estimate of $\norm{\overline{\m{Y}}}_F$, to relative error $\eta$, can be computed in $\widetilde{O}(\frac{\mu(\m{V})}{\sqrt{p}\eta})$.
\end{corollary}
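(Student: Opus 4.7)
My plan is to exploit that $\ket{\m{A}} = \frac{1}{\norm{\m{A}}_F}\sum_{i,j}a_{ij}\ket{i}\ket{j}$ is already a vectorization of the entire data matrix, so that a single multiplication of the column register by a block encoding of $\overline{\m{V}}^{(k)T}$ produces, in the ``good'' block-encoding ancilla branch, a state proportional to $\sum_{i,l}(\m{A}\overline{\m{V}}^{(k)})_{il}\ket{i}\ket{l} = \norm{\overline{\m{Y}}}_F\,\ket{\overline{\m{Y}}}$. Efficient quantum access to $\overline{\m{V}}^{(k)}$ provides such a block encoding with normalization $\mu(\overline{\m{V}}^{(k)})$ at cost $\widetilde{O}(1)$ per call. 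The amplitude of the good branch equals $\norm{\overline{\m{Y}}}_F/(\norm{\m{A}}_F\,\mu(\overline{\m{V}}^{(k)})) \approx \sqrt{p}/\mu(\m{V}^{(k)})$, since $\norm{\overline{\m{Y}}}_F^2 = \norm{\m{A}\overline{\m{V}}^{(k)}}_F^2 \approx p\,\norm{\m{A}}_F^2$; therefore $O(\mu(\m{V}^{(k)})/\sqrt{p})$ rounds of amplitude amplification on this ancilla yield $\ket{\overline{\m{Y}}}$, matching the stated run-time.

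For the error between the ideal $\ket{\m{Y}}$ and the realized $\ket{\overline{\m{Y}}}$, I would first bound the classical Frobenius distance $\norm{\m{Y}-\overline{\m{Y}}}_F^2 = \sum_i \norm{(\m{V}^{(k)T}-\overline{\m{V}}^{(k)T})\ve{a}_i}^2 \leq \norm{\m{V}^{(k)}-\overline{\m{V}}^{(k)}}^2\,\norm{\m{A}}_F^2 \leq (\xi^2 p/2)\,\norm{\m{A}}_F^2$, and then invoke Claim \ref{claim:vector_quantum_distance} to lift this to the state-vector bound $\norm{\ket{\m{Y}}-\ket{\overline{\m{Y}}}} \leq \sqrt{2}\,\norm{\m{Y}-\overline{\m{Y}}}_F/\norm{\m{Y}}_F$. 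Plugging in $\norm{\m{Y}}_F = \sqrt{p}\,\norm{\m{A}}_F$ cancels the $\sqrt{p}$ on both sides to leave exactly $\xi$, which is precisely why the hypothesis on $\overline{\m{V}}^{(k)}$ is scaled by $\sqrt{p}/\sqrt{2}$.

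For the Frobenius-norm estimate, I would run amplitude estimation (Theorem \ref{Theo:ampamp}) in place of amplification on the same good-ancilla indicator. This returns $\sin^2\theta \approx p/\mu(\m{V}^{(k)})^2$ to relative error $\eta$ in time $O(T(U)/(\eta\sin\theta)) = \widetilde{O}(\mu(\m{V}^{(k)})/(\sqrt{p}\,\eta))$; rescaling by the known $\norm{\m{A}}_F$ and $\mu(\overline{\m{V}}^{(k)})$ recovers $\norm{\overline{\m{Y}}}_F$ to the same relative error.

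The main obstacle I foresee is keeping the two sources of error cleanly separated. The quantum procedure exactly prepares $\ket{\overline{\m{Y}}}$ for the perturbed matrix (modulo standard inverse-polynomial failure probabilities coming from amplitude amplification and the block encoding), while the classical Frobenius bound controls the remaining gap to $\ket{\m{Y}}$; the analysis must not double-count this perturbation. I also need to verify that $\mu(\overline{\m{V}}^{(k)})$ and $\norm{\overline{\m{Y}}}_F^2/\norm{\m{A}}_F^2$ are close enough to $\mu(\m{V}^{(k)})$ and $p$ respectively for the approximations above to be absorbed into the $\widetilde{O}$ notation, which follows from the operator-norm bound on $\overline{\m{V}}^{(k)}-\m{V}^{(k)}$ together with the orthonormality of $\m{V}^{(k)}$'s columns.
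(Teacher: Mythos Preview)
Your proposal is correct and follows essentially the same approach as the paper: apply the block-encoding / matrix-vector multiplication unitary for $\overline{\m{V}}^{(k)T}$ (before its internal amplification) to the column register of the full superposition $\ket{\m{A}}$, then amplitude-amplify (respectively amplitude-estimate) the good ancilla branch whose weight is $\norm{\overline{\m{Y}}}_F/(\norm{\m{A}}_F\,\mu)\approx\sqrt{p}/\mu(\m{V}^{(k)})$, and control the state error via the classical Frobenius bound on $\m{Y}-\overline{\m{Y}}$ combined with Claim~\ref{claim:vector_quantum_distance}. Your error analysis, bounding $\norm{\m{Y}-\overline{\m{Y}}}_F$ directly by $\norm{\m{V}^{(k)}-\overline{\m{V}}^{(k)}}\cdot\norm{\m{A}}_F$ rather than passing through the per-column parameter $\delta$, is in fact slightly cleaner than the paper's and makes the cancellation of $\sqrt{p}$ more transparent.
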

\label{Proof:qPCA_matrix}
\begin{proof}
Here with $\m{V}$ we denote $\m{V}^{(k)} \in \R^{m\times k}$.
Using the same reasoning as the proof above and giving a closer look at the proof of Theorem \ref{Theo:matrix_vec_mul} (Lemma 24 \cite{chakraborty2018power}), we see that it is possible to create the state $\ket{0}(\frac{\overline{\m{V}}^T}{\mu(\m{V})}\ket{\ve{a}_i}) + \ket{0_\perp}$ in time $\widetilde{O}(1)$ and that the term $\frac{\mu(\m{V})}{\gamma}$ is introduced to boost the probability of getting the right state. 
Indeed, if we apply Theorem \ref{Theo:matrix_vec_mul} without the amplitude amplification step to the superposition of the rows of $\m{A}$, we obtain the following mapping in time $\widetilde{O}(1)$:
\begin{align}
    \ket{\m{A}} = \frac{1}{\norm{\m{A}}_F} \sum_i^n \norm{\ve{a}_{i,\cdot}}\ket{i} \ket{\ve{a}_{i,\cdot}} \mapsto
    \frac{1}{\norm{\m{A}}_F\mu(\m{V})} \sum_i^n (\norm{\ve{y}_{i,\cdot}}\ket{0}\ket{i}\ket{\ve{y}_{i,\cdot}} +  \norm{\ve{y}_{i,\cdot\perp}}\ket{0_\perp}),
\end{align}
where $\norm{\ve{y}_{i,\cdot\perp}}$ are normalization factors.
Keeping in mind that $\norm{\m{A}}_F = \sqrt{\sum_i^r \sigma_i^2}$ and $\norm{\m{Y}}_F = \sqrt{\sum_i^n \norm{\ve{y}_{i,\cdot}}^2} = \sqrt{\sum_i^k \sigma_i^2}$, we see that the amount of explained variance is $p = \frac{\sum_i^k \sigma_i^2}{\sum_j^r \sigma_j^2}=\left(\frac{\norm{\m{Y}}_F}{\norm{\m{A}}_F}\right)^2$.
The probability of obtaining $\ket{Y} = \frac{1}{\norm{\m{Y}}_F} \sum_i^n \norm{\ve{y}_{i,\cdot}}\ket{i}\ket{\ve{y}_{i,\cdot}}$ is $\frac{p}{\mu(\m{V})^2} = \frac{\norm{\m{Y}}_F^2}{\norm{\m{A}}_F^2}\frac{1}{\mu(\m{V})^2} = \frac{\sum_i^n \norm{\ve{y}_{i,\cdot}}^2}{\norm{\m{A}}_F^2\mu(\m{V})^2}$.
We conclude that, using $\widetilde{O}(\mu(\m{V})/\sqrt{p})$ rounds of amplitude amplification, we obtain $\ket{\m{Y}}$ with probability $1-1/\text{poly}(m)$ (Theorem \ref{Theo:ampamp}). For the error, consider that 
$\norm{\m{Y} - \overline{\m{Y}}} = \norm{\m{A}\m{V}^{(k)} - \m{A}\overline{\m{V}}^{(k)}} \leq
\norm{\m{A}}\sqrt{k}\delta$,
so we can use Claim \ref{claim:vector_quantum_distance} to state
\begin{align}
    \norm{\ket{\m{Y}} - \ket{\overline{\m{Y}}}} \leq
    \frac{\norm{\m{A}}_F}{\norm{\m{Y}}_F}\sqrt{2k}\delta = \xi.
\end{align}
We can set $\delta = \frac{\xi}{\sqrt{2k}}\frac{\norm{\m{Y}}_F}{\norm{\m{A}}_F} = \frac{\xi\sqrt{p}}{\sqrt{2k}}$, so we require $\norm{ \m{V} - \overline{\m{V}} }_F \leq \frac{\xi\sqrt{p}}{\sqrt{2}}$.
\end{proof}

The error requirements of the two corollaries propagate to the run-time of the model extraction in the following way.

\begin{corollary} [Quantum PCA: fitting time]
\label{Coro:qPCAtraining}
    Let $\epsilon$ be a precision parameter and $p=\frac{\sum_{i: \overline{\sigma}_i \geq \theta} \sigma_i^2}{\sum_j^r \sigma_j^2}$ the amount of variance to retain, where $\abs{\sigma_i - \overline{\sigma}_i}\leq \epsilon$.
    Given efficient quantum access to a matrix $\m{A} \in \R^{n \times m}$,
    the run-time to extract $\m{V}^{(k)} \in \R^{m \times k}$ for corollaries \ref{Coro:qPCAvector}, \ref{Coro:qPCAmatrix} is $\widetilde{O}\left(\frac{\mu(\m{A})k^2m}{\theta\epsilon\xi^2}\right)$.
\end{corollary}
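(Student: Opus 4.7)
The plan is to combine Theorem \ref{TheoMio:top-k_sv_extraction}, which gives the extraction time for the top-$k$ right singular vectors, with the tomography precision requirements that the two preceding corollaries impose on $\overline{\m{V}}^{(k)}$. The only real work is to identify the correct $\delta$ (the per-vector $\ell_2$ tomography error) and to propagate it through the $1/\delta^2$ dependence.

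First, I would determine which of the two downstream corollaries imposes the stricter Frobenius-norm bound on $\overline{\m{V}}^{(k)} - \m{V}^{(k)}$. Corollary \ref{Coro:qPCAvector} asks for $\norm{\overline{\m{V}}^{(k)} - \m{V}^{(k)}}_F \leq \xi/\sqrt{2}$, while Corollary \ref{Coro:qPCAmatrix} asks for $\norm{\overline{\m{V}}^{(k)} - \m{V}^{(k)}}_F \leq \xi\sqrt{p}/\sqrt{2}$. Since $p \leq 1$, the matrix version is tighter, so I would target it. The Frobenius bound decomposes columnwise as $\norm{\overline{\m{V}}^{(k)} - \m{V}^{(k)}}_F \leq \sqrt{k}\,\delta$, exactly as in the proof of Corollary \ref{Coro:qPCAmatrix}, so it suffices to extract each column with $\ell_2$ tomography error $\delta = \xi\sqrt{p}/\sqrt{2k}$. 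This choice automatically satisfies the weaker requirement of Corollary \ref{Coro:qPCAvector} as well.

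Second, I would plug this $\delta$ into the right-singular-vector run-time of Theorem \ref{TheoMio:top-k_sv_extraction}, namely $\widetilde{O}\bigl(\frac{\norm{\m{A}}}{\theta}\frac{1}{\sqrt{p}}\frac{\mu(\m{A})}{\epsilon}\frac{km}{\delta^2}\bigr)$. Substituting $\delta^2 = \xi^2 p/(2k)$ gives
\[
\widetilde{O}\!\left(\frac{\norm{\m{A}}}{\theta}\,\frac{\mu(\m{A})\,k^2 m}{\epsilon\,\xi^2\,p^{3/2}}\right).
\]
Finally, I would invoke the normalization convention already adopted in Section \ref{sec:preliminaries}, i.e.\ that efficient quantum access to $\m{A}/\norm{\m{A}}$ is prepared via Theorem \ref{Theo:spectral_norm_estimation}, so that $\norm{\m{A}} \leq 1$, and the same convention used right above the corollary by which $p$ is treated as a constant bounded away from zero (hence $p^{-3/2} = O(1)$). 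Both $\norm{\m{A}}$ and $p^{-3/2}$ are absorbed, and the promised bound $\widetilde{O}\bigl(\mu(\m{A}) k^2 m/(\theta \epsilon \xi^2)\bigr)$ follows. One small additional step is to load the classical output $\overline{\m{V}}^{(k)} \in \R^{m\times k}$ into the QRAM data structure so that Corollaries \ref{Coro:qPCAvector}–\ref{Coro:qPCAmatrix} can use it; the insertion cost $\widetilde{O}(km)$ is strictly dominated by the extraction cost.

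The main obstacle is really a bookkeeping one rather than a mathematical one: one has to be careful that the Frobenius-level precision $\xi\sqrt{p}/\sqrt{2}$ demanded downstream is correctly converted into the per-column $\ell_2$ precision $\delta$ that Theorem \ref{TheoMio:top-k_sv_extraction} consumes, and that the $1/\delta^2$ in that theorem then produces exactly the $k^2 m/\xi^2$ scaling (the extra factor of $k$ beyond the naive $km/\delta^2$ coming from $\delta^{-2} \propto k$). Once that substitution is tracked correctly, the normalizations of $\norm{\m{A}}$ and $p$ collapse the expression into the stated form.
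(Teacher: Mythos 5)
Your proof is correct and follows essentially the same route as the paper's: target the tighter Frobenius requirement $\xi\sqrt{p}/\sqrt{2}$ from Corollary \ref{Coro:qPCAmatrix}, convert it to a per-column tomography error $\delta = \xi\sqrt{p}/\sqrt{2k}$, substitute into the right-singular-vector run-time of Theorem \ref{TheoMio:top-k_sv_extraction}, and absorb $p^{-3/2}$ as a constant. Your explicit handling of the $\norm{\m{A}}$ factor (via the normalization convention) and the $\widetilde{O}(km)$ QRAM-loading cost are small clarifications the paper leaves implicit, but the argument is the same.
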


\label{Proof:qPCA_fittingtime}
\begin{proof}
The procedure to train the model consists in using Theorem \ref{Theorivisto:binarysearch} or \ref{TheoMio:factor_score_estimation} to extract the threshold $\theta$, given the amount of variance to retain $p$, and to leverage Theorem \ref{TheoMio:top-k_sv_extraction} to extract the $k$ right singular vectors that compose $\m{V} \in \R^{m \times k}$.
The run-time of Theorem \ref{Theorivisto:binarysearch} and \ref{TheoMio:factor_score_estimation} are smaller than the one of Theorem \ref{TheoMio:top-k_sv_extraction}, so we can focus on the last one.
To have $\norm{\m{V} - \overline{\m{V}} }_F \leq \frac{\xi\sqrt{p}}{\sqrt{2}}$ we need $\norm{\ve{v}_i - \overline{\ve{v}}_i} \leq \frac{\xi\sqrt{p}}{\sqrt{2k}}$.
Substituting $\delta = \frac{\xi\sqrt{p}}{\sqrt{2k}}$ in the run-time of Theorem \ref{TheoMio:top-k_sv_extraction}, we get $\widetilde{O}(\frac{\mu(\m{A})k^2m}{p^{3/2}\theta\epsilon\xi^2})$.
If we consider that $p$ to be a reasonable number (e.g., at least grater than 0.05), we can consider it a constant factor that is independent from the input's size.
The asymptotic run-time is proven to be $\widetilde{O}(\frac{\mu(\m{A})k^2m}{\theta\epsilon\xi^2})$.
\end{proof}

When training the model for Corollary \ref{Coro:qPCAmatrix}, the run-time has a dependency on $1/p^{3/2}$. 
However, this term is constant and independent from the size of the input dataset.
With this additional $1/p^{3/2}$ cost, the error of Corollary \ref{Coro:qPCAvector} drops to $\xi$ for every row of the matrix and generally decreases in case of new  data points.

Using the same framework and proof techniques, it is possible to produce similar results for the representations of CA and LSA.

\emph{Remark:} Note that \citet[Theorem 1]{yu2019quantumdatacompression} propose a lower bound for a quantity similar to our $\alpha$. 
However, their result seems to be a loose bound: using their notation and setting $\eta=1, \theta=1$ they bound this quantity with $0$, while a tight bound should give $1$.


\subsection{Correspondence analysis}
\label{subsec:CA}
Correspondence analysis is a multivariate statistical tool from the family of \emph{factor analysis} methods.
It is used to explore relationships among categorical variables.
Given two random variables, $X$ and $Y$, with possible outcomes in $\{x_1, \cdots, x_n\}$ and $\{y_1, \cdots, y_m\}$, the model of Correspondence Analysis enables representing the outcomes as vectors in two related Euclidean spaces.
These vectors can be used for data visualization, exploration, and other unsupervised machine learning tasks.

\paragraph{Model}
Given a contingency table for $X$ and $Y$ (see Section~\ref{sec:preliminaries}), it is possible to compute the matrix $\m{A} = \m{D}_X^{-1/2}(\hat{\m{P}}_{X,Y} - \hat{\ve{p}}_X\hat{\ve{p}}_Y^T)\m{D}_Y^{-1/2} \in \R ^{n \times m}$, where $\hat{\m{P}}_{X,Y} \in \R^{n \times m}$ is the estimated matrix of joint probabilities, $\hat{\ve{p}}_X \in R^n$ and $\hat{\ve{p}}_X \in \R^m$ are the vectors of marginal probabilities, and $\m{D}_X^{-1/2} = diag(\hat{\ve{p}}_X)$, $\m{D}_Y^{-1/2} = diag(\hat{\ve{p}}_Y)$.
The computation of $\m{A}$ requires linear time in the non-zero entries of the contingency table.
The singular value decomposition of $\m{A}$ is strictly related to the model of correspondence analysis \cite{greenacre1984corranalysis, hsu2019correspondence}.
The new coordinates of $X$'s outcomes are given by the rows of $\m{D}_X^{-1/2}\m{U} \in \R^{n\times k}$, while the ones of $Y$ by the rows of $\m{D}_Y^{-1/2}\m{V}\in \R^{m\times k}$.
Like in PCA, it is possible to choose only a subset of the orthogonal factors as coordinates for the representation.
Factor scores and factor score ratios measure of how much ``correspondence'' is captured by the respective orthogonal factor, giving an estimate of the quality of the representation. \\

Similarly to what we have already discussed, it is possible to extract the model for CA by creating quantum access to the matrix $\m{A}$ and using Theorems \ref{TheoMio:factor_score_estimation}, \ref{TheoMio:check_explained_variance}, and \ref{TheoMio:top-k_sv_extraction} to extract the orthogonal factors, the factor scores and the factor score ratios in time $\widetilde{O}\left(\left( \frac{1}{\gamma^2} + \frac{k(n+m)}{\theta\delta^2}\right)\frac{\mu(\m{A})}{\epsilon}\right)$.
We provide a theoretical bound for the data representations in Lemma \ref{Lemma:accuracyDUeDV}.

\begin{lemma} [Accuracy of $\m{D}_{X}^{-1/2}\m{U}$ and $\m{D}_{Y}^{-1/2}\m{V}$]
\label{Lemma:accuracyDUeDV}
    Let $\m{A} \in \R^{n \times m}$ be a matrix. 
    Given some approximate procedures to retrieve unit estimates 
    $\overline{\ve{u}}_i$
    of the left singular vectors 
    $\ve{u}_i$
    such that 
    $\norm{\overline{\ve{u}}_i - \ve{u}_i} \leq \delta$,
    the error on 
    $\m{D}_{X}^{-1/2}\m{U}$
    can be bounded as  $\norm{ \m{D}_{X}^{-1/2}\m{U}- \m{D}_{X}^{-1/2}\overline{\m{U}}}_F \leq \norm{\m{D}_{X}^{-1/2}}_F\sqrt{k}\delta$.
    Similarly, $\norm{\m{D}_{Y}^{-1/2}\m{V} - \m{D}_{Y}^{-1/2}\overline{\m{V}}}_F \leq \norm{\m{D}_{Y}^{-1/2}}_F\sqrt{k}\delta.$
\end{lemma}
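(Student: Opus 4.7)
The plan is to factor out the diagonal matrix $\m{D}_X^{-1/2}$ from the error expression and then apply a submultiplicative norm inequality, reducing the problem to bounding the error on $\m{U}$ column by column. Proving the statement for $\m{D}_Y^{-1/2}\m{V}$ is completely symmetric, so I would only write the argument for $\m{D}_X^{-1/2}\m{U}$ in detail and note that the other follows by swapping the roles of left/right singular vectors.

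First, I would write
\begin{align}
\norm{\m{D}_X^{-1/2}\m{U} - \m{D}_X^{-1/2}\overline{\m{U}}}_F = \norm{\m{D}_X^{-1/2}(\m{U} - \overline{\m{U}})}_F,
\end{align}
and then apply the standard inequality $\norm{\m{A}\m{B}}_F \leq \norm{\m{A}}_F\,\norm{\m{B}}$ (where $\norm{\cdot}$ denotes the spectral norm) to obtain
\begin{align}
\norm{\m{D}_X^{-1/2}(\m{U} - \overline{\m{U}})}_F \leq \norm{\m{D}_X^{-1/2}}_F\,\norm{\m{U} - \overline{\m{U}}}.
\end{align}
The factor $\norm{\m{D}_X^{-1/2}}_F$ matches the one appearing in the statement, so what remains is to bound the spectral norm of $\m{U}-\overline{\m{U}}$ by $\sqrt{k}\delta$.

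Next, I would use the elementary fact $\norm{\m{M}} \leq \norm{\m{M}}_F$, and express the Frobenius norm column-wise:
\begin{align}
\norm{\m{U}-\overline{\m{U}}}_F^2 = \sum_{i=1}^k \norm{\ve{u}_i - \overline{\ve{u}}_i}^2 \leq k\delta^2,
\end{align}
using the hypothesis $\norm{\ve{u}_i - \overline{\ve{u}}_i}\leq \delta$ for each $i \in \{1,\dots,k\}$. Chaining this with the previous inequality yields $\norm{\m{D}_X^{-1/2}\m{U}-\m{D}_X^{-1/2}\overline{\m{U}}}_F \leq \norm{\m{D}_X^{-1/2}}_F \sqrt{k}\,\delta$, which is exactly the claim.

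There is no genuine obstacle here; the argument is a routine application of submultiplicativity and the elementary spectral-versus-Frobenius inequality, both of which are sharp enough to preserve the stated bound. The only choice one has to make carefully is which version of the submultiplicative inequality to use: pulling Frobenius out of the diagonal factor (and keeping spectral on the error term) is what produces the factor $\norm{\m{D}_X^{-1/2}}_F$ demanded by the lemma, rather than a looser bound involving $\norm{\m{D}_X^{-1/2}}_F \cdot \norm{\m{U}-\overline{\m{U}}}_F$ or a less informative $\norm{\m{D}_X^{-1/2}}\cdot \sqrt{k}\delta$.
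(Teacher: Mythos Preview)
Your proof is correct and follows essentially the same route as the paper: factor out $\m{D}_X^{-1/2}$, apply submultiplicativity, and bound $\norm{\m{U}-\overline{\m{U}}}_F\leq\sqrt{k}\delta$ column by column. The paper applies $\norm{\m{A}\m{B}}_F\leq\norm{\m{A}}_F\norm{\m{B}}_F$ directly, whereas you detour through the spectral norm via $\norm{\m{A}\m{B}}_F\leq\norm{\m{A}}_F\norm{\m{B}}$ and then $\norm{\m{B}}\leq\norm{\m{B}}_F$; this is harmless but unnecessary, and your closing remark that the Frobenius--Frobenius version would be ``looser'' is not quite right, since both paths land on exactly $\norm{\m{D}_X^{-1/2}}_F\sqrt{k}\delta$.
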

\begin{proof}
It suffices to note that
$\norm{\m{D}_X^{-1/2}\overline{\m{U}} - \m{D}_X^{-1/2}\m{U}}_F \leq 
    \norm{\m{D}_X^{-1/2}}_F\norm{\overline{\m{U}} - \m{U}}_F
    \leq
    \norm{\m{D}_X^{-1/2}}_F\sqrt{k}\delta$. Similar conclusions can be drawn for $\norm{\m{D}_{Y}^{-1/2}\m{V} - \m{D}_{Y}^{-1/2}\overline{\m{V}}}_F$.
\end{proof}

\subsection{Latent semantic analysis}
\label{subsec:LSArelw}
Latent semantic analysis is a data representation method used to represent words and text documents as vectors in Euclidean spaces. 
Using these vector spaces, it is possible to compare terms, documents, and terms and documents.
LSA spaces automatically model synonymy and polysemy \cite{deerwester1990indexing}, and their applications in machine learning range from topic modeling to document clustering and retrieval.

\paragraph{Model} The input of LSA is a contingency table of $n$ words and $m$ documents $\m{A} \in \R^{n \times m}$. 
Inner products of rows are a measure of words similarity, and can be computed at once as  $\m{A}\m{A}^T=\m{U}\Sig^2\m{U}^T$. 
Inner products of columns $\m{A}^T\m{A}=\m{V}\Sig^2\m{V}^T$ are a measure of documents similarity, and the $a_{ij}$ entry of $\m{A}=\m{U}\Sig\m{V}^T$ is a measure of similarity between word $i$ and document $j$.
We can use SVD to express words and documents in new spaces where we can compare them with respect to this similarity measure. 
In particular, we can compute: \begin{enumerate*}
    \item a representation for word comparisons $\m{U}\Sig \in \R^{n\times k}$;
    \item a representation for document comparisons $\m{V}\Sig \in \R^{m\times k}$;
    \item two representations for word and document comparisons $\m{U}\Sig^{1/2} \in \R^{n\times k}$ and $\m{V}\Sig^{1/2} \in \R^{m\times k}$.
\end{enumerate*} 
When using LSA for document indexing, like in a search engine, we need to represent the query as a vector in the document space.
In this case, instead of increasing $\m{A}$'s size and recomputing the document space, the new vector can be expressed as $\ve{v}_q^T = \ve{x}_q^T\m{U}\Sig^{-1}$, where $\ve{x}_q \in \R^{n}$ is obtained using the same criteria used to store a document in $\m{A}$.
The representation of the query can then be used to compare the query to the other documents in the document representation space.
Finally, factor score ratios play an important role in LSA too. For instance, the columns of $V$ can be seen as latent topics of the corpus. The importance of each topic is proportional to the corresponding factor score ratio.
This paragraph only stresses how computing the SVD of $\m{A}$ is connected to LSA.
For a better introduction to LSA and indexing, we invite the reader to consult the original paper \cite{deerwester1990indexing}.\\

Even in this case, the cost of extracting the orthogonal factors and the factor scores is bounded by $\widetilde{O}\left(\left( \frac{1}{\gamma^2} + \frac{k(n+m)}{\theta\delta^2}\right)\frac{\mu(\m{A})}{\epsilon}\right)$.
In some applications, the data analyst might use a fixed number of singular values and vectors, regardless of the factor score ratios. 
In \citet{deerwester1990indexing}, $k=100$ is found to be a good number for document indexing.
Similarly, we believe that if we scale the singular values by the spectral norm, it is possible to empirically determine a threshold $\theta$ to use in practice.
Determining such threshold would reduce the complexity of model computation to the one of Theorem \ref{TheoMio:top-k_sv_extraction}: $\widetilde{O}\left( \frac{k(n+m)}{\theta\delta^2}\frac{\mu(\m{A})}{\epsilon}\right)$.

For what concerns the error bounds, we already know that it is possible to retrieve an approximation $\overline{\m{U}\Sig}$ and $\overline{\m{V}\Sig}$ with precision $\sqrt{k}(\epsilon+\delta\norm{\m{A}})$ (Lemma \ref{Lemma:accuracyUSeVS}), where $\delta$ is the precision on the singular vectors and $\epsilon$ the precision on the singular values. 
To provide bounds on the estimations of $\m{U}\Sig^{1/2}$, $\m{V}\Sig^{1/2}$, and $\m{U}\Sig^{-1}$ we introduce Lemma \ref{Lemma:accuracyUS12eVS12} and Lemma \ref{Lemma:accuracyUE-1eVE-1}.
\begin{lemma} [Accuracy of $\overline{\m{U}\Sig}^{1/2}$ and $\overline{\m{V}\Sig}^{1/2}$] 
\label{Lemma:accuracyUS12eVS12}
    Let $\m{A} \in \R^{n \times m}$ be a matrix. 
    Given some approximate procedures to retrieve estimates $\overline{\sigma}_i$ of the singular values $\sigma_i$ such that
    $\abs{\overline{\sigma}_i - \sigma_i} \leq \epsilon$
    and unitary estimates 
    $\overline{\ve{u}}_i$
    of the left singular vectors 
    $\ve{u}_i$
    such that 
    $\norm{\overline{\ve{u}}_i - \ve{u}_i} \leq \delta$,
    the error on 
    $\m{U}\Sig^{1/2}$
    can be bounded as
    $\norm{\m{U}\Sig^{1/2} - \overline{\m{U}}\overline{\Sig}^{1/2}}_F \leq \sqrt{\sum_j^k \left(\delta\sqrt{\sigma_j} + \frac{\epsilon}{2\sqrt{\theta}}\right)^2}$.
    Similarly,
    $\norm{\m{V}\Sig^{1/2} - \overline{\m{V}}\overline{\Sig}^{1/2} }_F \leq \sqrt{\sum_j^k \left(\delta\sqrt{\sigma_j} + \frac{\epsilon}{2\sqrt{\theta}}\right)^2}$. Both are bounded by $\sqrt{k}\left(\delta\sqrt{\norm{\m{A}}} + \frac{\epsilon}{2\sqrt{\theta}}\right)$
\end{lemma}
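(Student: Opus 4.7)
The proof will closely mirror the structure of Lemma \ref{Lemma:accuracyUSeVS}: first establish a column-wise bound on $\|\overline{\sigma}_i^{1/2}\overline{\ve{u}}_i - \sigma_i^{1/2}\ve{u}_i\|$, then assemble the Frobenius-norm bound by summing squares over the top $k$ indices. The key new ingredient, compared with the earlier lemma, is that the scalar prefactor is now $\sqrt{\sigma_i}$ rather than $\sigma_i$, so we need a controlled estimate of $|\sqrt{\sigma_i}-\sqrt{\overline{\sigma}_i}|$. This is where the threshold $\theta$ enters: because we are only concerned with retained singular values, we may assume $\sigma_i,\overline{\sigma}_i \gtrsim \theta$.

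Concretely, I would insert $\pm\,\sigma_i^{1/2}\overline{\ve{u}}_i$ and apply the triangle inequality to get
\begin{align}
\|\overline{\sigma}_i^{1/2}\overline{\ve{u}}_i - \sigma_i^{1/2}\ve{u}_i\|
\;\leq\; \bigl|\overline{\sigma}_i^{1/2}-\sigma_i^{1/2}\bigr|\,\|\overline{\ve{u}}_i\|
\;+\; \sigma_i^{1/2}\,\|\overline{\ve{u}}_i-\ve{u}_i\|.
\end{align}
The second term is bounded by $\delta\sqrt{\sigma_i}$ by hypothesis and unit norm of $\overline{\ve{u}}_i$. For the first term, use the algebraic identity $\sigma_i-\overline{\sigma}_i = (\sqrt{\sigma_i}-\sqrt{\overline{\sigma}_i})(\sqrt{\sigma_i}+\sqrt{\overline{\sigma}_i})$, which gives
\begin{align}
\bigl|\sqrt{\sigma_i}-\sqrt{\overline{\sigma}_i}\bigr|
\;=\; \frac{|\sigma_i-\overline{\sigma}_i|}{\sqrt{\sigma_i}+\sqrt{\overline{\sigma}_i}}
\;\leq\; \frac{\epsilon}{2\sqrt{\theta}},
\end{align}
where the last step uses $\sigma_i,\overline{\sigma}_i \geq \theta$ (coming from the top-$k$ threshold). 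Combining yields the per-column bound $\delta\sqrt{\sigma_i} + \epsilon/(2\sqrt{\theta})$.

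Finally, I would write out the Frobenius norm as a sum over the $k$ retained columns, exactly as in the previous lemma, and observe that $f(x)=\sqrt{x}$ is monotone to pull the column bound inside the square root:
\begin{align}
\|\m{U}\Sig^{1/2}-\overline{\m{U}}\overline{\Sig}^{1/2}\|_F
\;=\; \sqrt{\sum_{j}^{k}\|\overline{\sigma}_j^{1/2}\overline{\ve{u}}_j-\sigma_j^{1/2}\ve{u}_j\|^2}
\;\leq\; \sqrt{\sum_{j}^{k}\Bigl(\delta\sqrt{\sigma_j}+\tfrac{\epsilon}{2\sqrt{\theta}}\Bigr)^{2}}.
\end{align}
Using $\sigma_j\leq \sigma_{\max}=\|\m{A}\|$ uniformly gives the loose bound $\sqrt{k}(\delta\sqrt{\|\m{A}\|}+\epsilon/(2\sqrt{\theta}))$. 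The argument for $\m{V}\Sig^{1/2}$ is identical after swapping $\ve{u}_i$ with $\ve{v}_i$.

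The only subtle step is the square-root difference bound, which is where the threshold $\theta$ is indispensable: without it, the denominator $\sqrt{\sigma_i}+\sqrt{\overline{\sigma}_i}$ could vanish and the claim would fail for tiny singular values. Everything else is routine triangle-inequality bookkeeping, so I expect the writeup to be short.
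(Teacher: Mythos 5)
Your proof is correct and follows essentially the same route as the paper: bound the per-column error by triangle inequality, rationalize the square-root difference to pull out $|\sigma_i-\overline{\sigma}_i|$, invoke the threshold $\theta$ to get $\epsilon/(2\sqrt{\theta})$, and sum squares for the Frobenius bound. The only cosmetic difference is that the paper reparameterizes via a relative error $\gamma=\epsilon/\sigma_i$ (which implicitly assumes $\overline{\sigma}_i\geq\sigma_i$) while your symmetric factorization $|\sqrt{\sigma_i}-\sqrt{\overline{\sigma}_i}|=|\sigma_i-\overline{\sigma}_i|/(\sqrt{\sigma_i}+\sqrt{\overline{\sigma}_i})$ handles both signs cleanly, at the cost of needing $\overline{\sigma}_i\geq\theta$ as well as $\sigma_i\geq\theta$; both assumptions are justified by the top-$k$ thresholding context, and both yield the same final bound.
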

We prove this result for $\norm{\overline{\m{U}}\overline{\Sig}^{1/2} - \m{U}\Sig^{1/2}}_F$.

\begin{proof} 

We start by bounding $\abs{\sqrt{\overline{\sigma}_i} - \sqrt{\sigma_i}}$. Let's define $\epsilon = \gamma \sigma_i$ as a relative error:
\begin{align}
    \abs{\sqrt{\sigma_i + \epsilon} - \sqrt{\sigma_i}} & =
    \abs{\sqrt{\sigma_i + \gamma \sigma_i} - \sqrt{\sigma_i}} =
    \abs{\sqrt{\sigma_i}(\sqrt{1 + \gamma} - 1)} \nonumber\\
    &= \sqrt{\sigma_i}\abs{\frac{(\sqrt{1 + \gamma} - 1)(\sqrt{1 + \gamma} + 1)}{\sqrt{1 + \gamma} + 1}} \nonumber\\ 
    &=
    \sqrt{\sigma_i}\abs{\frac{\gamma + 1 - 1}{\sqrt{1 + \gamma} + 1}} \leq
    \sqrt{\sigma_i}\frac{\gamma}{2}.
\end{align}
By definition $\gamma = \frac{\epsilon}{\sigma_i}$ and we know that $\sigma_{min} \geq \theta$:
\begin{align}
    \abs{\sqrt{\overline{\sigma}_i} - \sqrt{\sigma_i}} \leq \frac{\sqrt{\sigma_i}}{\sigma_1}\frac{\epsilon}{2} = \frac{\epsilon}{2\sqrt{\sigma_i}} \leq \frac{\epsilon}{2\sqrt{\theta}}.
\end{align}

Using the bound on the square roots, we can bound the columns of $\overline{\m{U}}\overline{\Sig}^{1/2}$:
\begin{align}
    \left\lVert\sqrt{\overline{\sigma_i}}\overline{\ve{u}}_i - \sqrt{\sigma_i}\ve{u}_i\right\rVert &\leq
    \left\lVert\left(\sqrt{\sigma_i} + \frac{\epsilon}{2\sqrt{\theta}}\right)\overline{\ve{u}}_i - \sqrt{\sigma_i}\ve{u}_i\right\rVert = \nonumber\\
    \left\lVert\sqrt{\sigma_i}(\overline{\ve{u}}_i - \ve{u}_i) +  \frac{\epsilon}{2\sqrt{\theta}}\overline{\ve{u}}_i\right\rVert &\leq 
    \sqrt{\sigma_i}\delta + \frac{\epsilon}{2\sqrt{\theta}}
\end{align}

From the error bound on the columns we derive the bound on the matrices:
\begin{align}
    \left\lVert\overline{\m{U}}\overline{\Sig}^{1/2} - \m{U}\Sig^{1/2}\right\rVert_F 
    &= 
    \sqrt{\sum_j^k \left( \left\lVert \sqrt{\overline{\sigma}_j}\overline{\ve{u}}_j - \sqrt{\sigma_j} \ve{u}_j \right\rVert \right)^2} \nonumber \\ 
    &\leq 
    \sqrt{\sum_j^k \left(\delta\sqrt{\sigma_j} + \frac{\epsilon}{2\sqrt{\theta}}\right)^2}
    \leq
    \sqrt{k}\left(\delta\sqrt{\norm{\m{A}}} + \frac{\epsilon}{2\sqrt{\theta}}\right).
\end{align}
\end{proof}

\begin{lemma} [Accuracy of $\overline{\m{U}\Sig}^{-1}$ and $\overline{\m{V}\Sig}^{-1}$] 
\label{Lemma:accuracyUE-1eVE-1}
    Let $\m{A} \in \R^{n \times m}$ be a matrix. 
    Given some approximate procedures to retrieve estimates $\overline{\sigma}_i$ of the singular values $\sigma_i$ such that
    $\abs{\overline{\sigma}_i - \sigma_i} \leq \epsilon$
    and unitary estimates 
    $\overline{\ve{u}}_i$
    of the left singular vectors 
    $\ve{u}_i$
    such that 
    $\norm{\overline{\ve{u}}_i - \ve{u}_i} \leq \delta$,
    the error on
    $\m{U}\Sig^{-1}$
    can be bounded as
    $\norm{ \m{U}\Sig^{-1} - \overline{\m{U}}\overline{\Sig}^{-1} }_F \leq \sqrt{k}\left(\frac{\delta}{\theta} + \frac{\epsilon}{\theta^2 - \theta\epsilon}\right)$.
    Similarly,
    $\norm{ \m{V}\Sig^{-1} - \overline{\m{V}}\overline{\Sig}^{-1} }_F \leq \sqrt{k}(\frac{\delta}{\theta} + \frac{\epsilon}{\theta^2 - \theta\epsilon})$.
\end{lemma}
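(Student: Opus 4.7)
The plan is to mimic the structure of the previous two lemmas: first bound the error on a single column $\frac{1}{\overline{\sigma}_i}\overline{\ve{u}}_i - \frac{1}{\sigma_i}\ve{u}_i$, then combine columns via the definition of the Frobenius norm. As before, I prove only the $\m{U}\Sig^{-1}$ case, the argument for $\m{V}\Sig^{-1}$ being identical.

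The key algebraic step is the ``add and subtract'' decomposition
\begin{equation*}
\frac{1}{\overline{\sigma}_i}\overline{\ve{u}}_i - \frac{1}{\sigma_i}\ve{u}_i
= \frac{1}{\overline{\sigma}_i}(\overline{\ve{u}}_i - \ve{u}_i) + \Bigl(\frac{1}{\overline{\sigma}_i} - \frac{1}{\sigma_i}\Bigr)\ve{u}_i,
\end{equation*}
after which the triangle inequality gives
\begin{equation*}
\Bigl\lVert \tfrac{1}{\overline{\sigma}_i}\overline{\ve{u}}_i - \tfrac{1}{\sigma_i}\ve{u}_i \Bigr\rVert
\leq \frac{\lVert \overline{\ve{u}}_i - \ve{u}_i\rVert}{\overline{\sigma}_i} + \frac{\lvert \sigma_i - \overline{\sigma}_i \rvert}{\sigma_i \overline{\sigma}_i}\lVert \ve{u}_i \rVert
\leq \frac{\delta}{\overline{\sigma}_i} + \frac{\epsilon}{\sigma_i\,\overline{\sigma}_i},
\end{equation*}
using $\lVert\ve{u}_i\rVert = 1$ and the hypotheses on $\delta$ and $\epsilon$.

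Next I would convert these singular-value denominators into the threshold $\theta$. Since the top-$k$ extraction retains precisely those indices for which $\overline{\sigma}_i \geq \theta$, we have $\overline{\sigma}_i \geq \theta$ directly, and $\sigma_i \geq \overline{\sigma}_i - \epsilon \geq \theta - \epsilon$. Substituting gives the column bound
\begin{equation*}
\Bigl\lVert \tfrac{1}{\overline{\sigma}_i}\overline{\ve{u}}_i - \tfrac{1}{\sigma_i}\ve{u}_i \Bigr\rVert
\leq \frac{\delta}{\theta} + \frac{\epsilon}{\theta(\theta - \epsilon)}
= \frac{\delta}{\theta} + \frac{\epsilon}{\theta^2 - \theta\epsilon}.
\end{equation*}

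Finally, since the columns of $\m{U}\Sig^{-1} - \overline{\m{U}}\overline{\Sig}^{-1}$ are exactly these vectors (for $i = 1,\dots,k$), the Frobenius norm decomposes as
\begin{equation*}
\lVert \m{U}\Sig^{-1} - \overline{\m{U}}\overline{\Sig}^{-1} \rVert_F
= \sqrt{\sum_{j=1}^{k} \Bigl\lVert \tfrac{1}{\overline{\sigma}_j}\overline{\ve{u}}_j - \tfrac{1}{\sigma_j}\ve{u}_j \Bigr\rVert^{2}}
\leq \sqrt{k}\left( \frac{\delta}{\theta} + \frac{\epsilon}{\theta^2 - \theta\epsilon} \right),
\end{equation*}
which is the stated bound. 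The only subtle point — and the one place the proof could go wrong — is the use of $\overline{\sigma}_i \geq \theta$ and $\sigma_i \geq \theta - \epsilon$: one must implicitly assume $\epsilon < \theta$ so that the denominator $\theta^2 - \theta\epsilon$ stays positive, which is the same mild consistency condition already required for the earlier lemmas to be meaningful.
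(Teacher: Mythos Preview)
Your proposal is correct and follows essentially the same route as the paper: bound the error on each column via an add-and-subtract decomposition and the triangle inequality, then aggregate with the Frobenius norm. The only cosmetic difference is that the paper factors out $\tfrac{1}{\sigma_i}$ (using $\sigma_i \geq \theta$) whereas you factor out $\tfrac{1}{\overline{\sigma}_i}$ (using $\overline{\sigma}_i \geq \theta$); both choices land on the same bound under the shared assumption $\epsilon < \theta$.
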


We prove this result for  $\norm{\overline{\m{U}}\overline{\Sig}^{-1} - \m{U}\Sig^{-1}}_F$.

\begin{proof}
We start by bounding $\abs{ \frac{1}{\overline{\sigma}_i} - \frac{1}{\sigma_i}}$. 
Knowing that $\sigma_{min} \geq \theta$ and $\epsilon < \theta$:
\begin{align}
    \abs{\frac{1}{\overline{\sigma}_i} - \frac{1}{\sigma_i}} \leq 
    \abs{\frac{1}{\sigma_i - \epsilon} - \frac{1}{\sigma_i}} \leq
    \frac{\epsilon}{\theta^2 - \theta\epsilon}.
\end{align}

From the bound on the inverses, we can obtain the bound on the columns of $\overline{\m{U}}\overline{\Sig}^{-1}$:
\begin{align}
    \norm{\frac{1}{\overline{\sigma_i}}\overline{\ve{u}}_i - \frac{1}{\sigma_i}\ve{u}_i} \leq 
    \norm{\left(\frac{1}{\sigma_i} \pm \frac{\epsilon}{\theta^2 - \theta\epsilon}\right)\overline{\ve{u}}_i - \frac{1}{\sigma_i}\ve{u}_i} \leq 
    \frac{1}{\sigma_i}\delta + \frac{\epsilon}{\theta^2 - \theta\epsilon} \leq 
    \frac{\delta}{\theta} + \frac{\epsilon}{\theta^2 - \theta\epsilon}.
\end{align}

To complete the proof, we compute the bound on the matrices:
\begin{align}
    \norm{\overline{\m{U}}\overline{\Sig}^{-1} - \m{U}\Sig^{-1}}_F = 
    \sqrt{\sum_j^k \left( \norm{\frac{1}{\overline{\sigma}}_j\overline{\ve{u}}_{j} - \frac{1}{\sigma_j} \ve{u}_{j} } \right)^2} \leq 
    \sqrt{k}\left(\frac{\delta}{\theta} + \frac{\epsilon}{\theta^2 - \theta\epsilon}\right)
\end{align}
\end{proof}

\section{Experiments} \label{Section:experiments}
All of our experiments are numerical and can be carried out on classical computers. \footnote{The code of the experiments is available at https://github.com/ikiga1/qadra.}
We have analysed the distribution of the factor score ratios in the MNIST, Fashion MNIST, CIFAR-10, Tiny Imagenet and Research Papers datasets. 
They decrease exponentially fast (figures in the appendix), confirming the low rank nature of the data.
Focusing on MNIST, Fashion-MNIST, and CIFAR-10, we have simulated PCA's dimensionality reduction for image classification. 
The datasets have been shifted to row mean 0 and normalized so that $\sigma_{max}=1$.
We have simulated Algorithm \ref{alg_main:factor_score_estimation} by sampling $1/\gamma^2 = 1000$ times from the state $\sum_i^r\lambda_i \ket{\ve{u}_i} \ket{\ve{v}_i} \ket{\overline{\sigma}_i}$ to search the first $k$ principal components that account for a factor score ratios sum $p=0.85$.
The simulation occurs by sampling with replacement from the discrete probability distribution given by the $\lambda_i$. 
We then estimated the measured $\lambda_i$ using the Wald estimator (see the proof of Theorem \ref{TheoMio:factor_score_estimation}) and searched for the most important $k$.\footnote{Note that, in practice, one could also estimate the factor score ratios as $\overline{\lambda}_i = \frac{\overline{\sigma}_i}{\norm{A}_F}$. This method should require less measurements: a bound on the necessary number of measurements can be obtained via the coupon collector's problem with non-uniform probabilities.}
In all cases, sampling the singular values has been enough to decide how many to keep.
However, as $p$ increases, the gap between the factor score ratios decreases and the quality of the estimation of $k$ or $\theta$ decreases.
As discussed in Section \ref{Subsec:quality_repr}, it is possible to detect this problem using Theorem \ref{TheoMio:check_explained_variance} and solve it with a binary search for $\theta$ (Theorem \ref{Theorivisto:binarysearch}).
We have tested the quality of the representation by observing the accuracy of 10-fold cross-validation k-nearest neighbors with $k=7$ as we introduce error in the representation's Frobenius norm (see Figure \ref{Fig:MNIST_Classification}).
To introduce the error, we have added truncated Gaussian noise to each element of $\m{U}\Sig$ to have $\norm{\m{U}\Sig - \overline{\m{U}}\overline{\Sig}} \leq \xi = \sqrt{k}(\epsilon+\delta)$ (Lemma \ref{Lemma:accuracyUSeVS}).
The parameter $\delta$ has been estimated using the bound above, choosing the error so that the accuracy drops no more than $0.01$ and fixing $\epsilon$ to a number that allows for correct thresholding.
Table \ref{table:parameters} summarizes the run-time parameters.
The results show that Theorems \ref{TheoMio:factor_score_estimation}, \ref{TheoMio:check_explained_variance}, \ref{Theorivisto:binarysearch} are already advantageous on small datasets, while Theorem \ref{TheoMio:top-k_sv_extraction} requires bigger datasets to express its speed-up.
We have also simulated the creation of the state at step \ref{algTop-k:ampamp2} of Algorithm \ref{alg_main:topk_sv} to test the average number of measurements needed to collect all the singular values as $\epsilon$ increases.
The analysis has confirmed the run-time's expectations.
To end with, we have tested the value of $\alpha$ (Definition \ref{def:pca_assumption}, Claim \ref{claim:pcarepr}) for the MNIST dataset, fixing $\varepsilon=0$ and trying $p \in \{0.1,0.2,0.3,0.4,0.5,0.6,0.7,0.8,0.9\}$. 
We have observed that $\alpha = 0.97 \pm 0.03$, confirming that the run-time of Corollary \ref{Coro:qPCAvector} can be assumed $\widetilde{O}(\mu(\m{V})^{(k)})$ for the majority of the data points of a PCA-representable dataset.

We point out that more experiments on the run-time parameters have been extensively discussed in other works that rely on the same parameters \cite{kerenidis2020mnist, kerenidis2020gaussian}. 
These works study the scaling of the parameters as the dataset size increases, both in features and samples, and conclude that the parameters of interest are almost constant.
In addition to the existing experiments, we have studied the trend of the run-time parameters on the Tiny Imagenet dataset as the number of samples scales.
While the spectral norm increases, the other run-time parameters become constant after a certain number of samples.
Figure \ref{Fig:Imagenet_scaling} shows that the algorithms discussed in Section \ref{Subsec:quality_repr} are already of practical use for small datasets, while the singular vector extraction routines of Section \ref{Subsec:extraction} require larger datasets to be convenient over their classical counterparts.
We refer the interested reader to the appendix for more details about the experiments.

\begin{figure}[t]
    \begin{center}
        \centerline{\includegraphics[width=0.5\columnwidth]{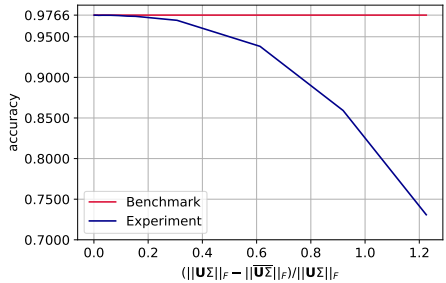}}
        \caption{Accuracy of 10-fold cross-validation using K-Nearest-Neighbors, with 7 neighbors, on the MNIST dataset after PCA's dimensionality reduction (0.8580\% of variance retained). The \emph{benchmark} accuracy was computed with an exact PCA. The \emph{experiment} line shows how the classification accuracy decreases as error is introduced in the Frobenius norm of the representation.}
        \label{Fig:MNIST_Classification}
    \end{center}
\end{figure}

\begin{table}[t]
\caption{Summary of the run-time parameters. The parameters that depend on $k$ have been computed using the estimated $k$.}

\label{table:parameters}
    \begin{center}
        \begin{small}
            \begin{sc}
                \begin{tabular}{ccccr}
                    \toprule
                    Parameter & MNIST & F-MNIST & CIFAR-10 \\
                    \midrule
                    $\mu(\m{A}) = \norm{\m{A}}_F$ & 3.2032 & 1.8551 & 1.8540\\
                    Estimated $k$ & 62 & 45& 55\\
                    Exact $k$ & 59 & 43& 55\\
                    Estimated $p$ & 0.8510 & 0.8510& 0.8510\\
                    Exact $p$ & 0.8580 & 0.8543& 0.8514\\
                    Thrs. $\epsilon$ & 0.0030 & 0.0009& 0.0006\\
                    $\theta$ & 0.1564 & 0.0776& 0.0746\\
                    $\delta$ & 0.1124 & 0.0106& 0.0340\\
                    \bottomrule
                \end{tabular}
            \end{sc}
        \end{small}
    \end{center}
\end{table}

\begin{figure}[t]
    \begin{center}
        \centerline{\includegraphics[width=\linewidth]{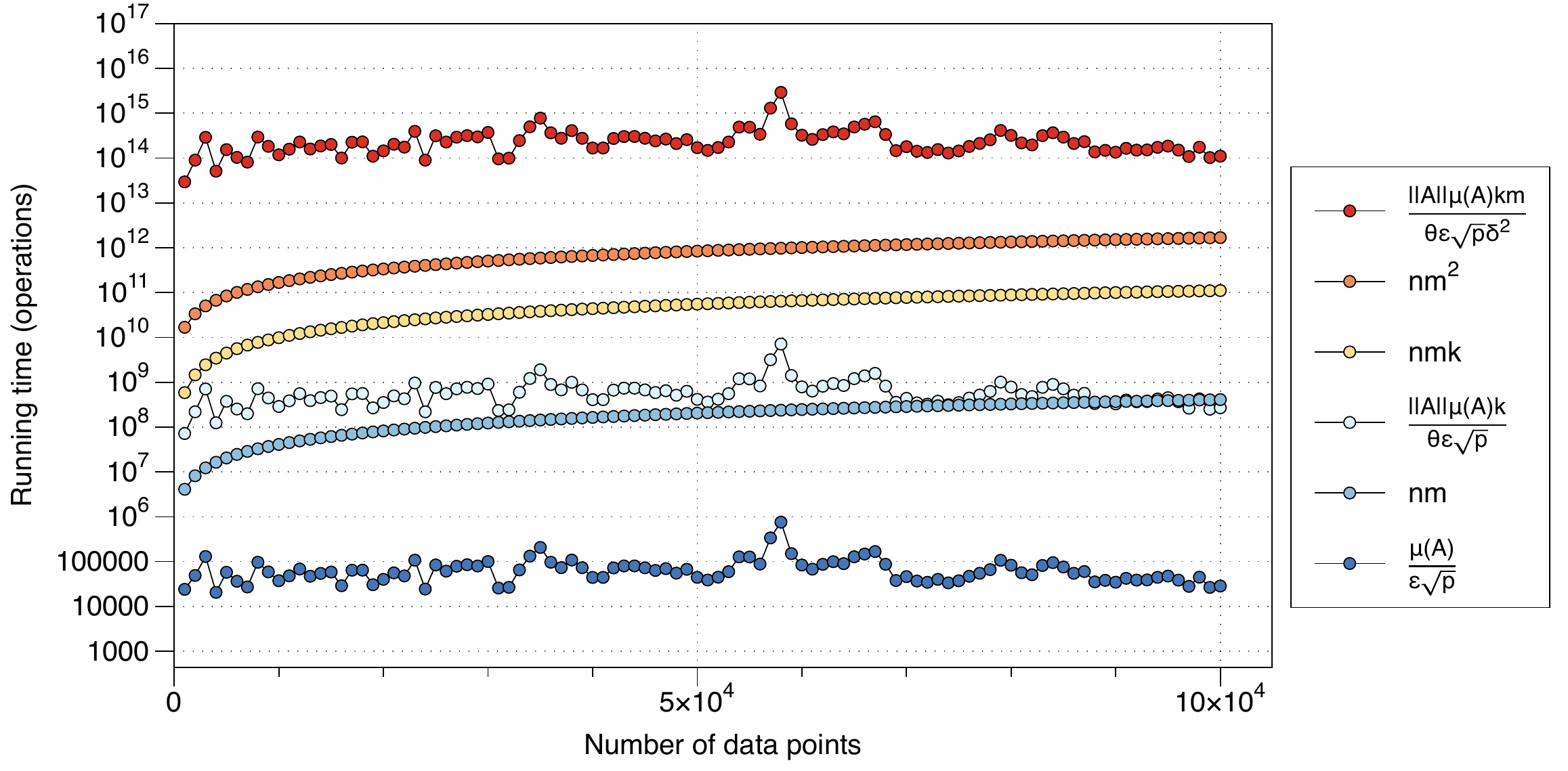}}
        \caption{Run-time comparison on Imagenet as the number of data points increases. The plots have been computed setting $\delta = 0.1$ and $p = 0.85$ and are logarithmic w.r.t. the y axis.}
        \label{Fig:Imagenet_scaling}
    \end{center}
\end{figure}

\section{Conclusions}
In this paper, we formulate many eigenvalue problems in machine learning within a useful framework, filling the gap left open by previous literature with new algorithms.
Our new procedures fill the gap by estimating the quality of a representation and extracting a classical description of the top-k singular values and vectors.
We have shown how to use the new tools to extract the information needed by SVD-based data representation algorithms, computing theoretical error bounds for three machine learning applications.
Besides identifying the proper quantum tools and formalizing the novel quantum algorithms, the main technical difficulty was analyzing how the error propagates to bound the algorithms' run-time properly.

We do not expect run-time improvements that exceed poly-logarithmic factors or constant factors, using similar techniques. 
For non-zero singular values and dense singular vectors, the run-time of the extraction can not be smaller than $kz$, as one needs to read vectors of size $kz$. 
The $\delta^2$ parameter is a tight bound for the $\ell_2$ norm of the vectors, as it is a result of Chernoff's bound.
The parameter $\epsilon$ is a tight error bound from phase estimation, which is necessary to distinguish the singular vectors.
$\theta$ is the condition number of the low-rank approximation of the matrix, and it is necessary to amplify the amplitudes of the smallest singular values. 

As future work, we deem it interesting to explore quantum algorithms for incremental SVD or for datasets whose points are available as a data streaming. 
It might be possible to reduce the overhead due to tomography and achieve greater speed-ups in these settings.
It also remains an open question whether there are particular applications and dataset distributions for which the singular vector extraction algorithms offer a practical advantage over their classical counterparts.
Finally, an appropriate resource estimation that takes into consideration different quantum hardware architectures, noise models, and error correction codes is out of the scope of this paper and is left for future work.

\backmatter

\bmhead{Acknowledgments}
A.B. and S.Z. thank Prof. Ferruccio Resta and Prof. Donatella Sciuto for their support.
A.L. has been supported by QuantERA ERA-NET Cofund in Quantum Technologies implemented within the European Union’s Horizon 2020 Programme (QuantAlgo project), the ANRT, and Singapore's National Research Foundation, the Prime Minister’s Office, Singapore, the Ministry of Education, Singapore under the Research Centres of Excellence program under research grant R 710-000-012-135.

\section*{Declarations}

\subsection*{Competing interests: } The authors have no competing interests to declare.
\subsection*{Funding: } Not applicable.
\subsection*{Ethics approval: } Not applicable.
\subsection*{Consent to participate: } Not applicable.
\subsection*{Consent for publication: } Not applicable.
\subsection*{Availability of data and materials: } Not applicable.
\subsection*{Code availability: } Not applicable.
\subsection*{Authors' contributions: } Not applicable.

\begin{appendices}

\section{Experiments}

\subsection{Factor score ratios distribution in real data}
Throughout the work, we often claim that real datasets for machine learning are low-rank and that the distribution of their singular values is so that a few of them are much bigger than the rest. 
To verify this fact, we have selected five datasets for machine learning and investigated the distribution of the factor score ratios $\frac{\sigma_i^2}{\sum_j^r \sigma_j^2}$ in all of them. 
We briefly describe the datasets and our pre-processing steps.

\paragraph{MNIST}
MNIST \cite{lecun1998gradient} is probably the most used dataset in image classification. 
It is a collection of $70000$ images of $28 \times 28 = 784$ pixels. 
Each image is a black and white hand-written digit between 0 and 9 and it is paired with a label that specifies the digit. 
Since the images are black and white, they are represented as arrays of 784 values that encode the lightness of each pixel.
The dataset, excluding the labels, can be encoded in a matrix of size $70000 \times 784$. 

\paragraph{Fashion MNIST}
Fashion MNIST \cite{xiao2017} is a recent dataset for benchmarking in image classification.
Like the MNIST, it is a collection of 70000 images composed of $28 \times 28 = 784$ pixels. 
Each image represents a black and white fashion item among \{T-shirt/top, Trouser, Pullover, Dress, Coat, Sandal, Shirt, Sneaker, Bag, Ankle boot\}. 
Each image is paired with a label that specifies the item represented in the image. 
Since the images are black and white, they are represented as arrays of 784 values that encode the lightness of each pixel.
The dataset, excluding the labels, can be encoded in a matrix of size $70000 \times 784$. 

\paragraph{CIFAR-10}
CIFAR-10 \cite{krizhevsky2009learning} is another widely used dataset for benchmarking image classification. 
It contains 60000 colored images of $32 \times 32$ pixels, with the values for each of the 3 RGB colors.
Each image represents an object among \{airplane, automobile, bird, cat, deer, dog, frog, horse, ship, truck\} and is paired with the appropriate label. 
We use all the images, reshaping them to unroll the three channels in a single vector. 
The resulting size of the dataset is $60000 \times 3072$.

\paragraph{Tiny Imagenet}
Tiny Imagenet \cite{le2015tiny} is a subset of Imagenet, a large dataset for image classification.
It is a collection of $100000$ colored images of $64 \times 64$ pixels. 
Tiny Imagenet contains images of $200$ object classes. Each class is composed of $500$ images.
We process the dataset to have only black and white images. 
Though the size is considerably less than the one of Imagenet, its complexity is higher than CIFAR-10's.
The dataset, excluding the labels, can be encoded in a matrix of size $100000 \times 4096$. 

\paragraph{Research Paper}
Research Paper \cite{researchpaper} is a dataset for text classification, available on Kaggle. 
It contains 2507 titles of papers together with the labels of the venue where they have been published. 
The labels are \{WWW, INFOCOM, ISCAS, SIGGRAPH, VLDB\}. 
We pre-process the titles to compute a contingency table of $papers \times words$: the value of the $i^{th}-j^{th}$ cell is the number of times that the $j^{th}$ word is contained in the $i^{th}$ title. 
We remove the English stop-words, the words that appear in only one document, and those that appear in more than half the documents. 
The result is a contingency table of size $2507 \times 2010$.\\

Except for Research Paper, all the datasets have been shifted to row mean $0$ and normalized so that $\sigma_{max} = 1$.
Figure \ref{fig:factor_score_ratio} shows the factor score ratios distributions in these datasets.
The rapid decrease is exponential and confirms the expectations. 
\begin{figure}[ht]
    \begin{subfigure}{0.5\linewidth}
        \centering
        \includegraphics[width=0.8\linewidth]{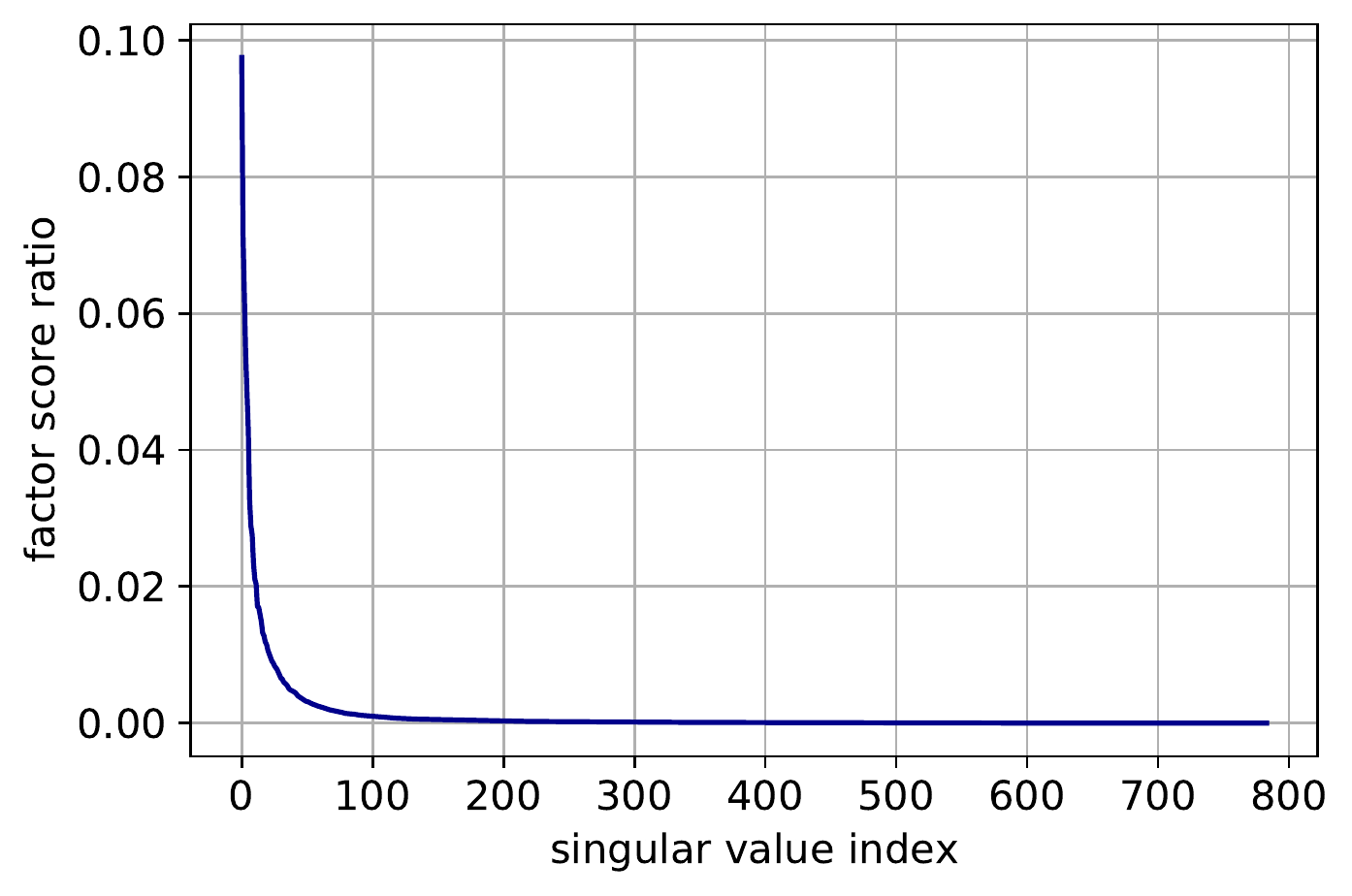}
        \caption{MNIST.}
        \label{fig:factor_score_ratio_cifar10}
    \end{subfigure}
    \begin{subfigure}{0.5\linewidth}
        \centering
        \includegraphics[width=0.8\linewidth]{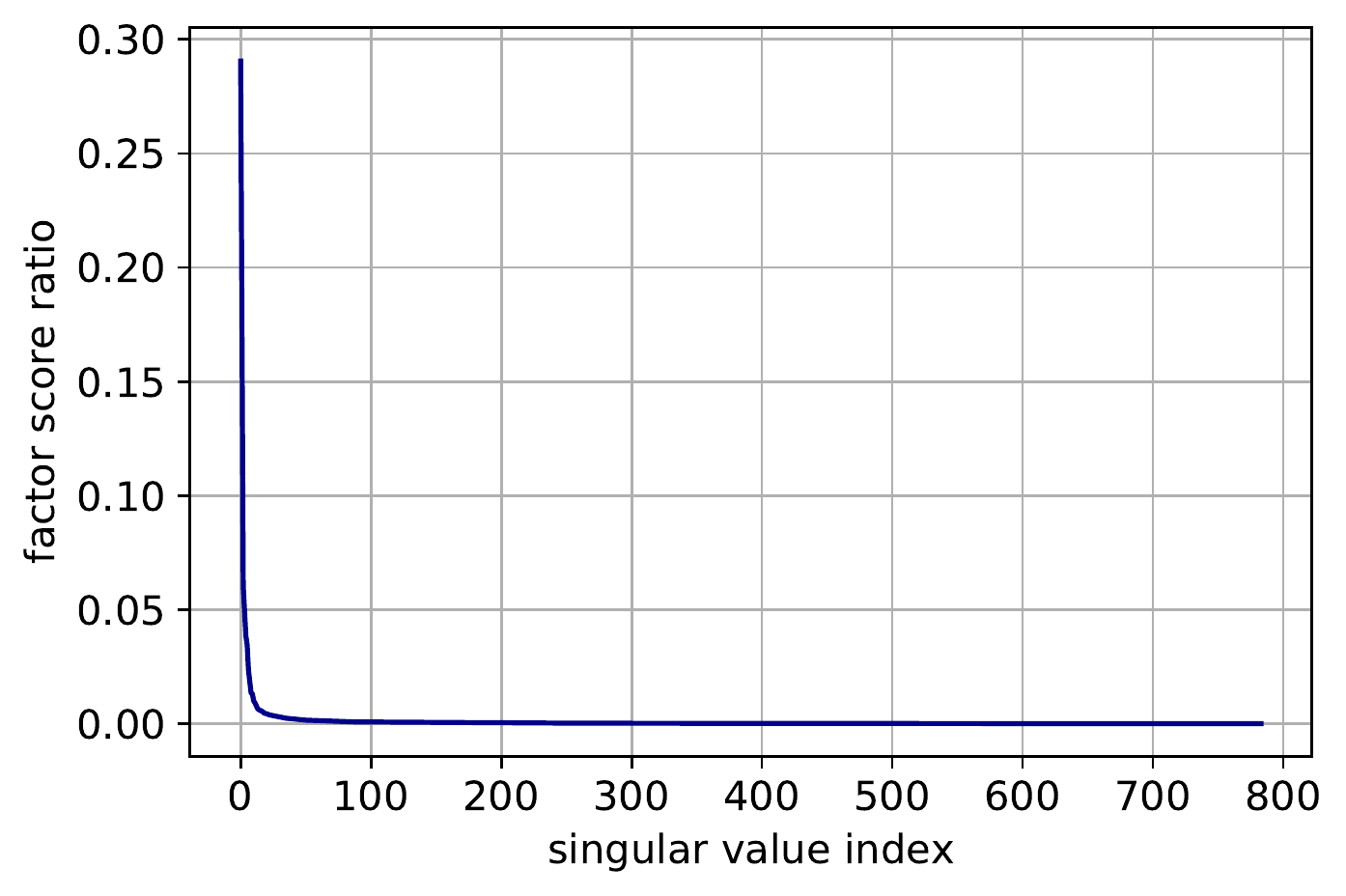}
        \caption{Fashion MNIST.}
        \label{fig:factor_score_ratio_conf}
    \end{subfigure}
    \begin{subfigure}{0.5\linewidth}
        \centering
        \includegraphics[width=0.8\linewidth]{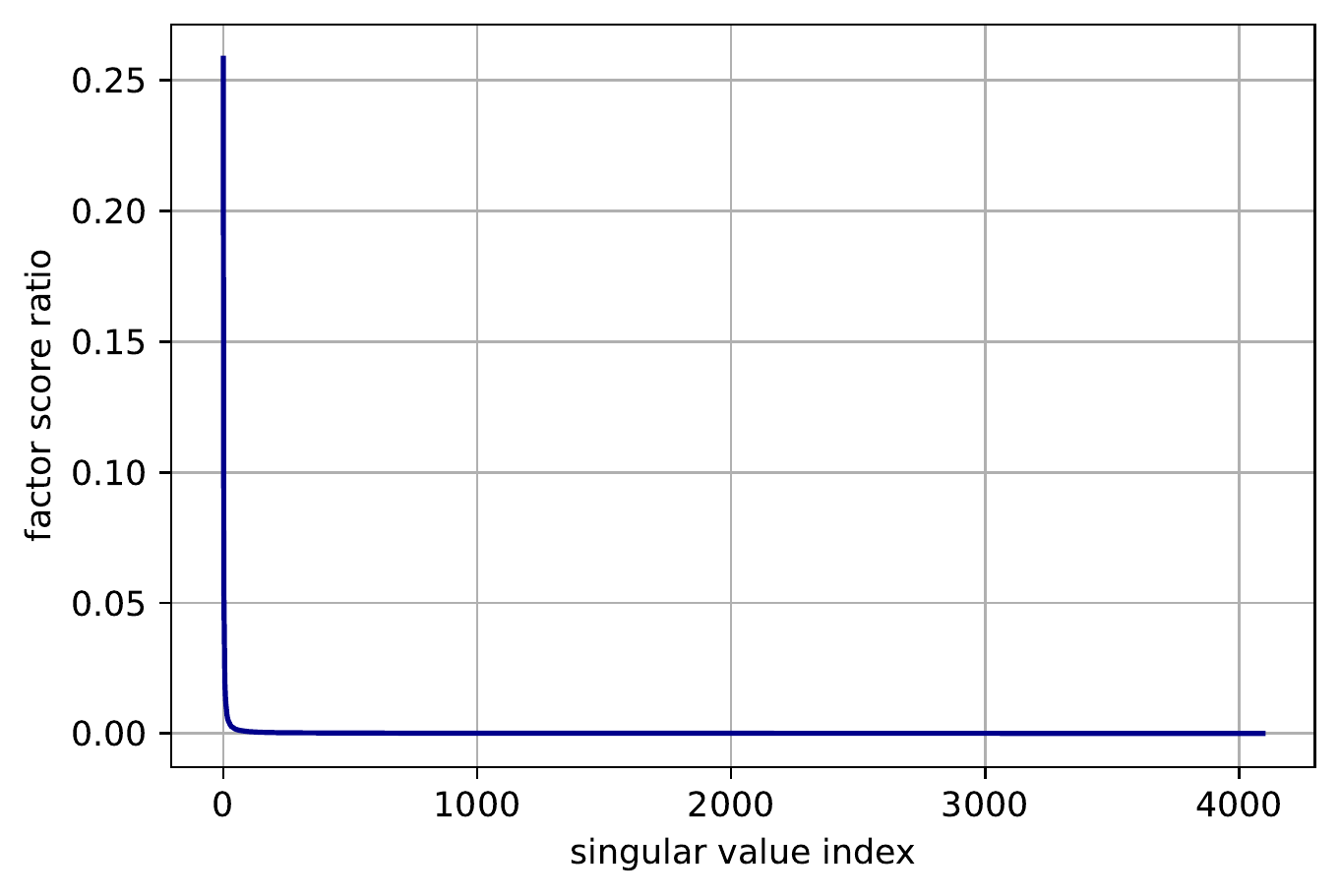}
        \caption{Tiny Imagenet.}
        \label{fig:factor_score_ratio_tiny}
    \end{subfigure}
    \begin{subfigure}{0.5\linewidth}
        \centering
        \includegraphics[width=0.8\linewidth]{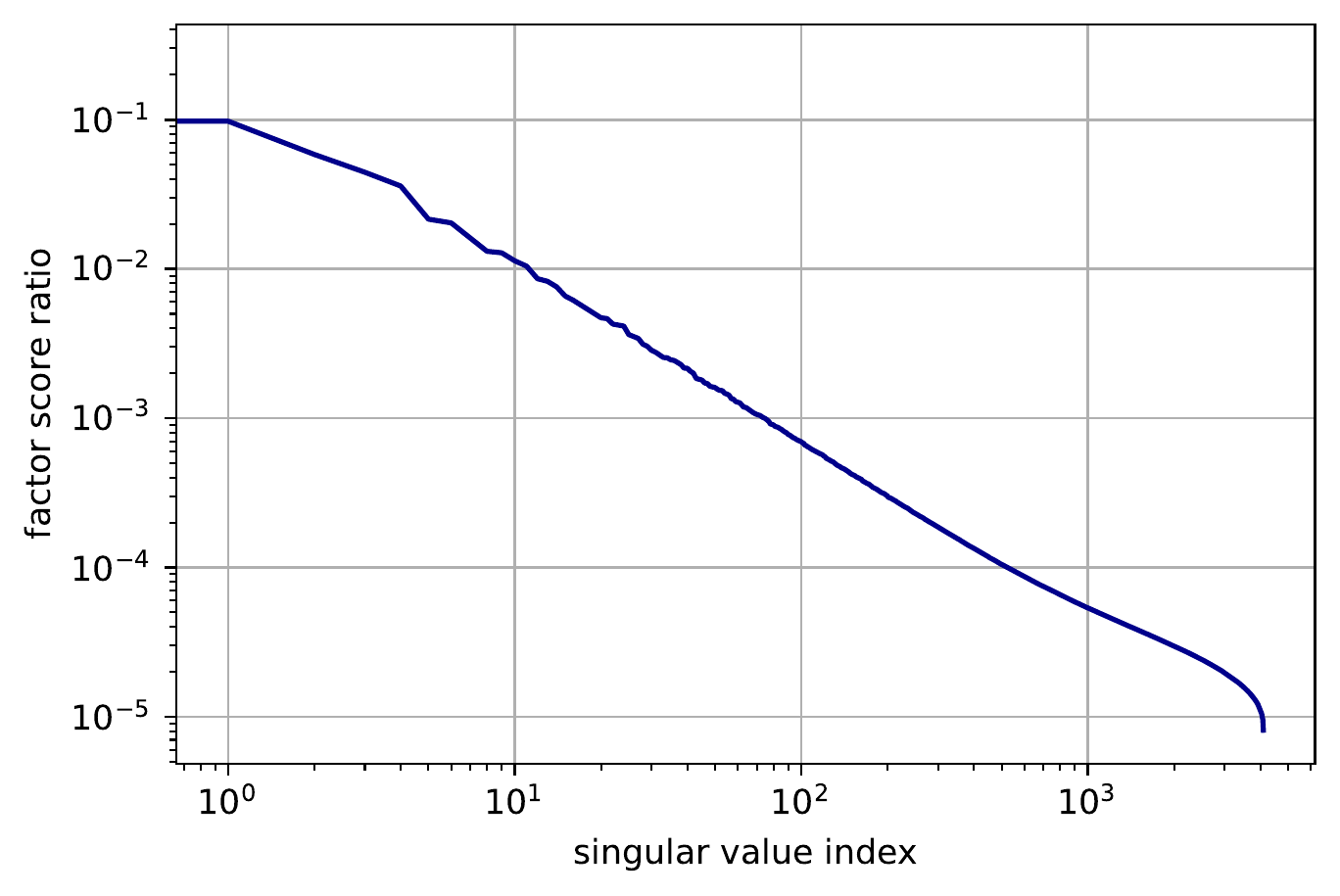}
        \caption{Tiny Imagenet log-log scale.}
        \label{fig:factor_score_ratio_tiny_loglog}
    \end{subfigure}
    \begin{subfigure}{0.5\linewidth}
        \centering
        \includegraphics[width=0.8\linewidth]{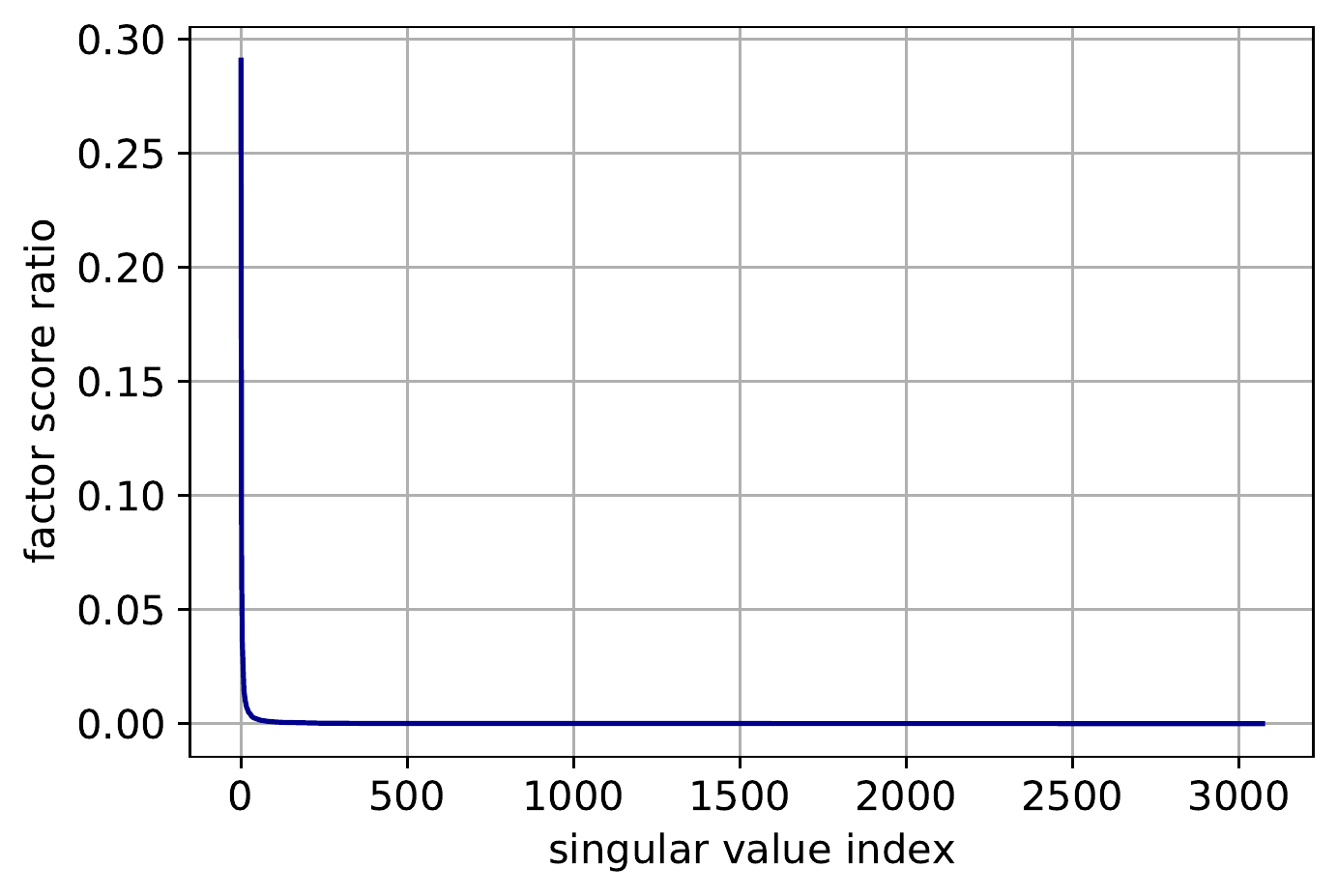}
        \caption{CIFAR-10.}
        \label{fig:factor_score_ratio_MNIST}
    \end{subfigure}
    \begin{subfigure}{0.5\linewidth}
        \centering
        \includegraphics[width=0.8\linewidth]{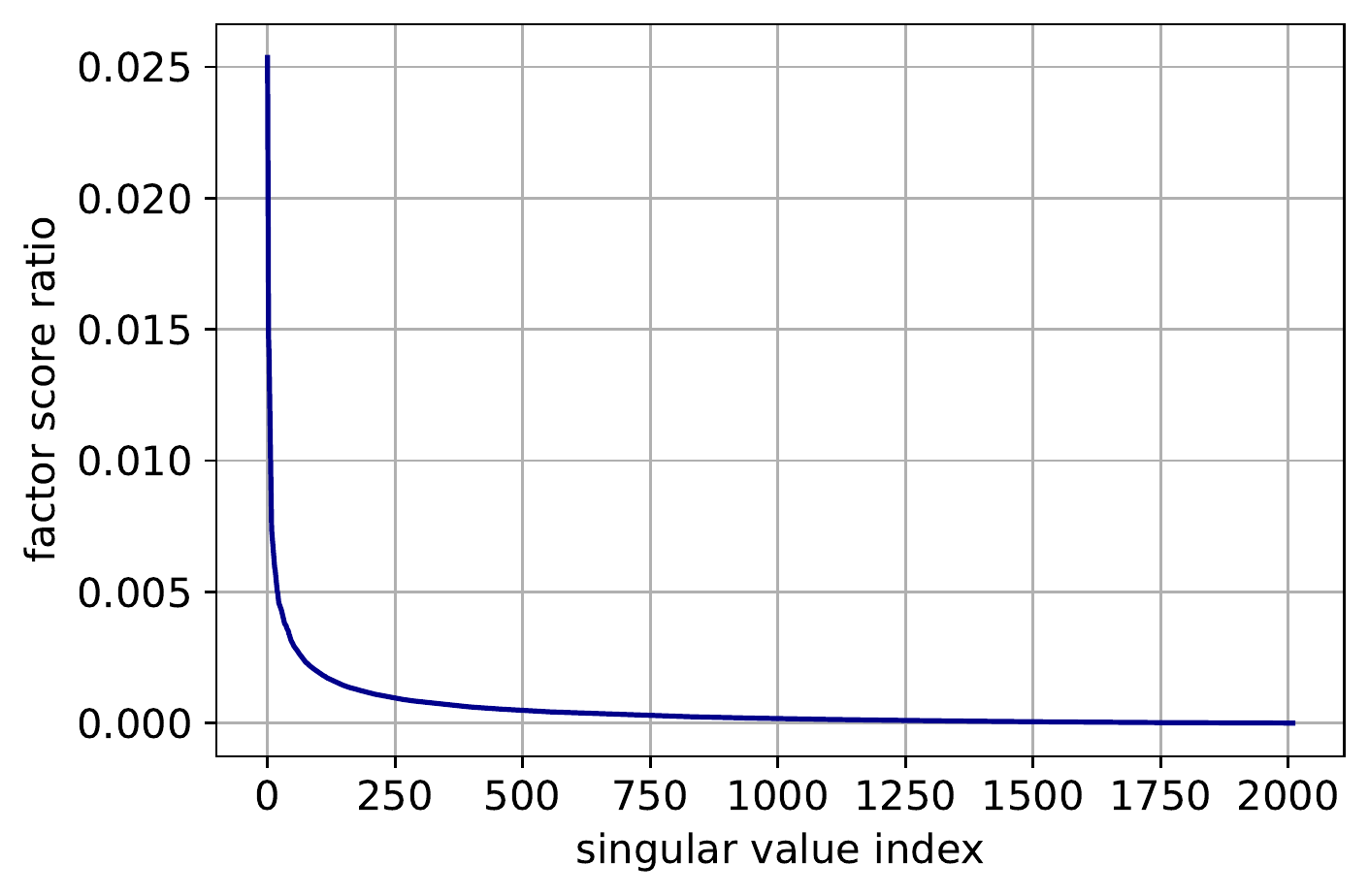}
        \caption{Research Papers.}
        \label{fig:factor_score_ratio_Research}
    \end{subfigure}

    \caption{Factor score ratios distributions in machine learning datasets.}
    \label{fig:factor_score_ratio}
\end{figure}

\subsection{Run-time parameters}
We have computed the run-time parameters on the Tiny Imagenet dataset, maintaining the number of features steady (i.e., 4096 black and white pixels) and observing how the parameters scale as we consider an increasing number of data points.
The results are shown in Figure \ref{fig:runtimeparam}.
In these plots, epsilon is half the gap between the least singular value to retain and the one below, leading to correct thresholding, while theta is computed as the least singular value to retain. 
Although we would fine-tune $\theta$ and $\epsilon$ better in practice, the trend and the order of magnitudes of these parameters would remain like our plots. 
We have computed the best $\mu(\m{A})$ over a finite set of $p \in [0,1]$, and for any number of data points, the Frobenius norm was the most convenient.
Finally, in this experiment, we did not estimate $\delta$.
This is because $\delta$ can only be estimated with respect to a specific classification task.
We did not run classification on this dataset for practical computational reasons. 
However, the following sections contain more run-time parameters for image classification datasets on smaller datasets, including estimates for $\delta$.

\begin{figure}[ht]
    \begin{subfigure}{0.5\linewidth}
        \centering
        \includegraphics[width=0.8\linewidth]{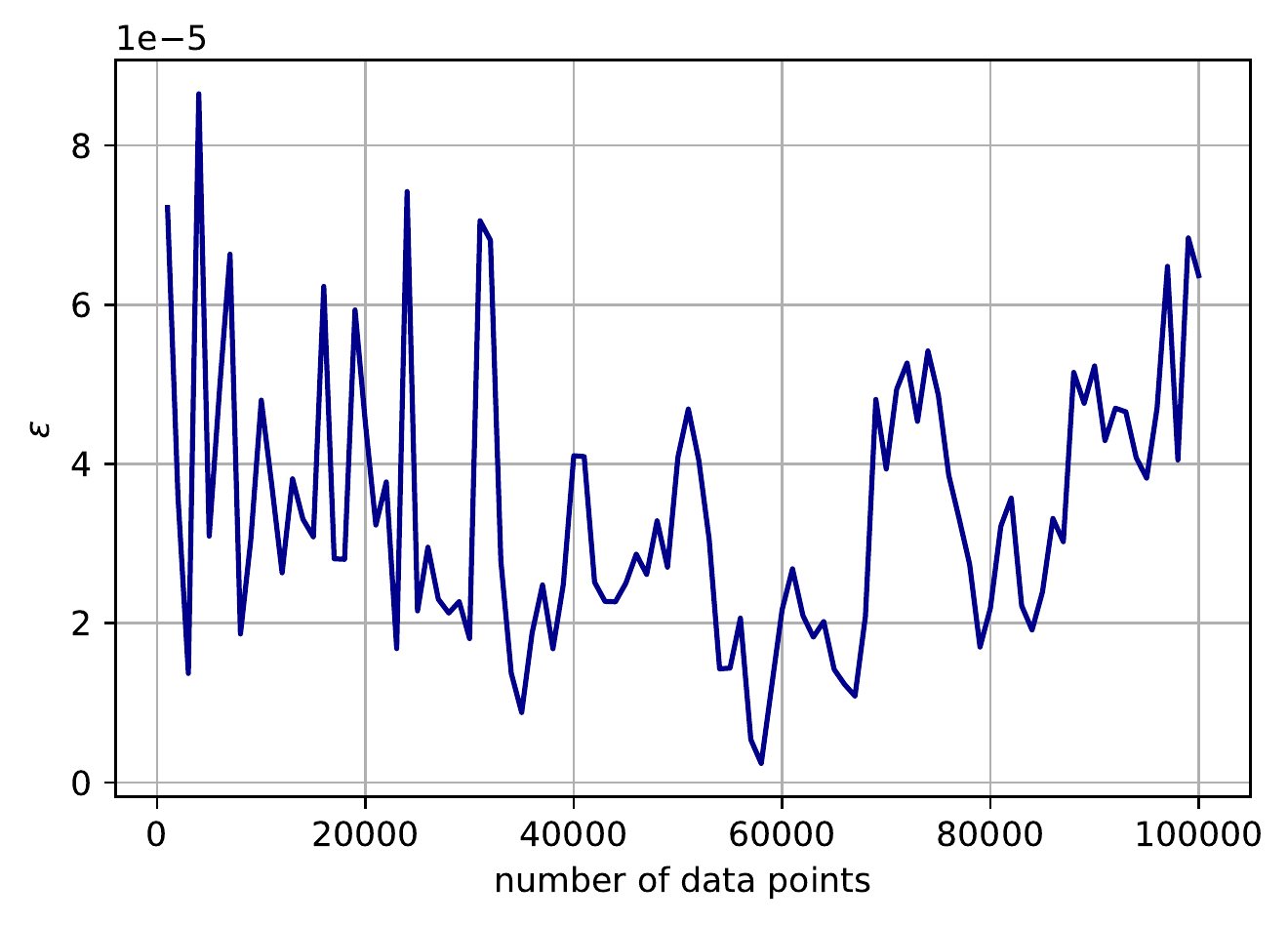}
        \caption{Threshold $\epsilon$.}
        \label{fig:im_eps}
    \end{subfigure}
    \begin{subfigure}{0.5\linewidth}
        \centering
        \includegraphics[width=0.8\linewidth]{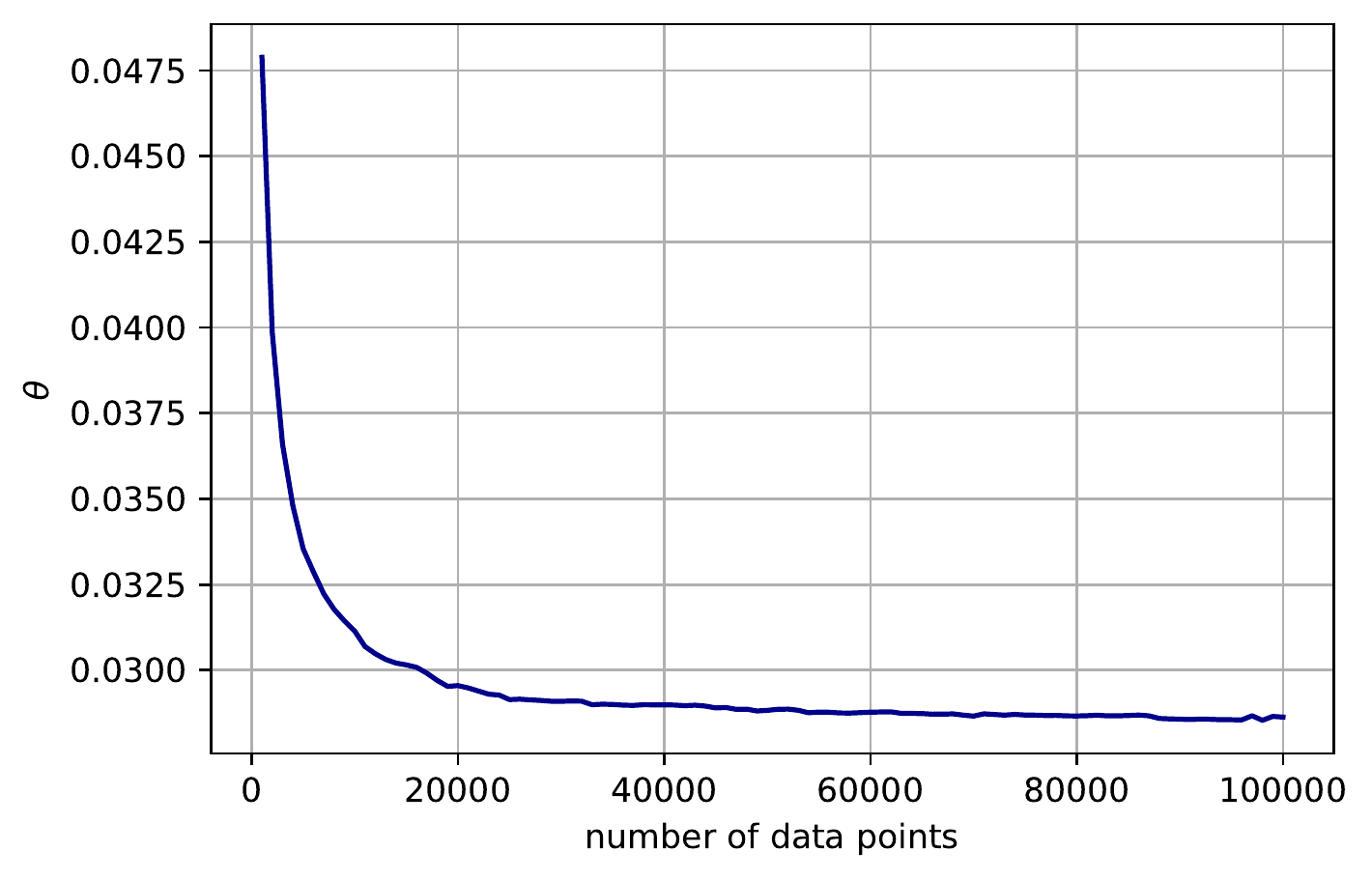}
        \caption{$\theta$.}
        \label{fig:im_theta}
    \end{subfigure}
    \begin{subfigure}{0.5\linewidth}
        \centering
        \includegraphics[width=0.8\linewidth]{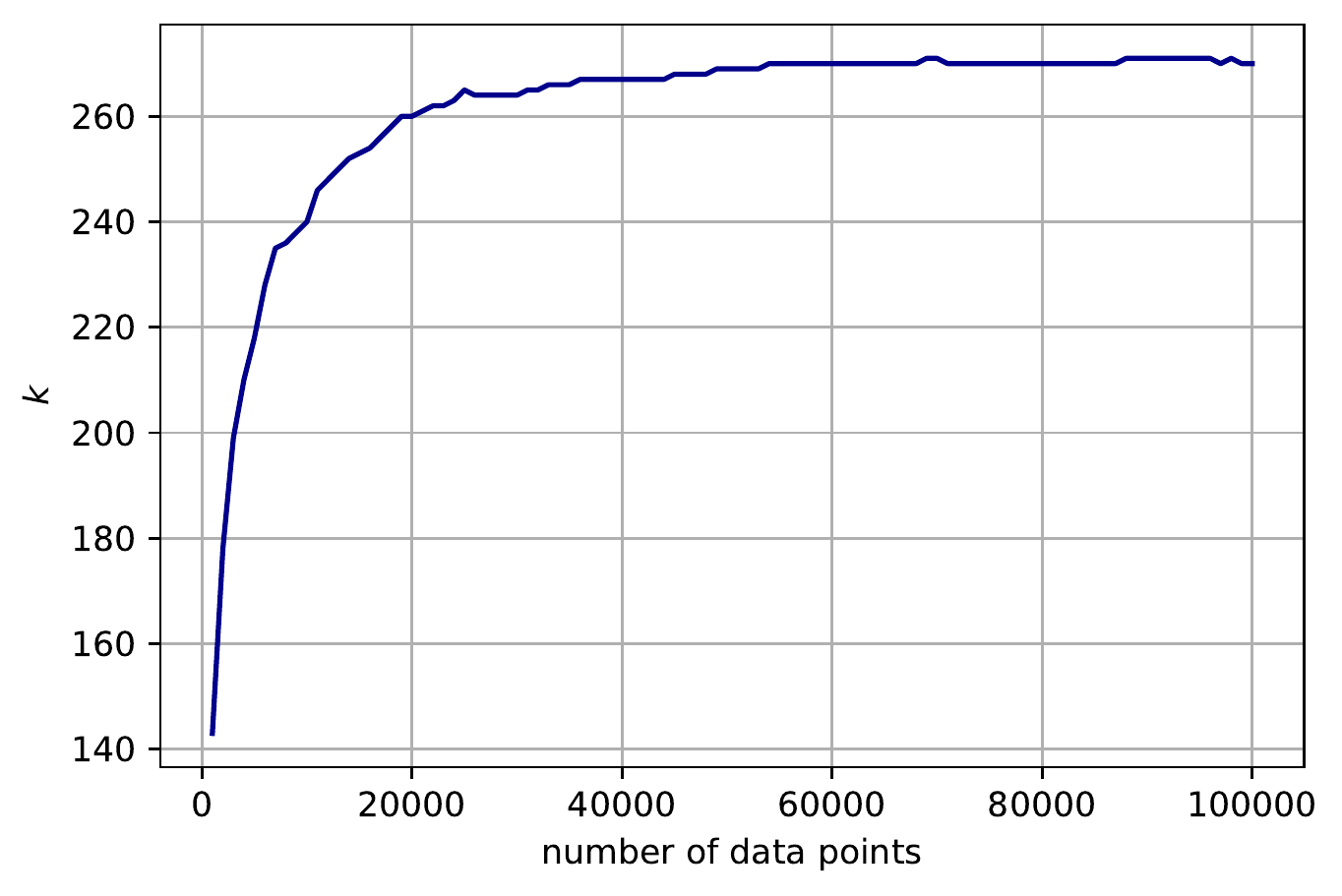}
        \caption{Number of principal components $k$.}
        \label{fig:im_ks}
    \end{subfigure}
    \begin{subfigure}{0.5\linewidth}
        \centering
        \includegraphics[width=0.8\linewidth]{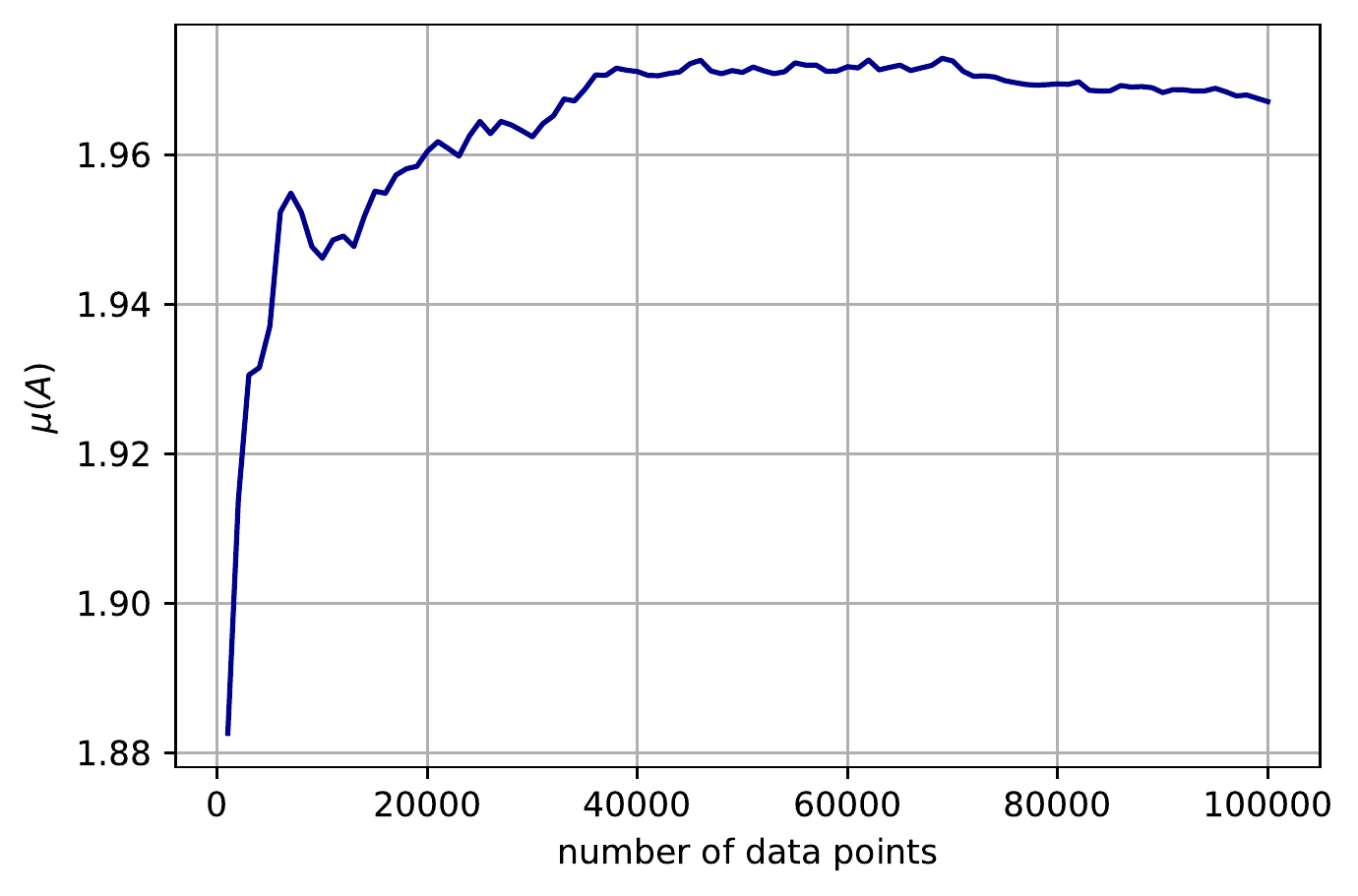}
        \caption{$\mu(\m{A}) = \norm{\m{A}}_F$.}
        \label{fig:im_mu}
    \end{subfigure}
    \begin{subfigure}{0.5\linewidth}
        \centering
        \includegraphics[width=0.8\linewidth]{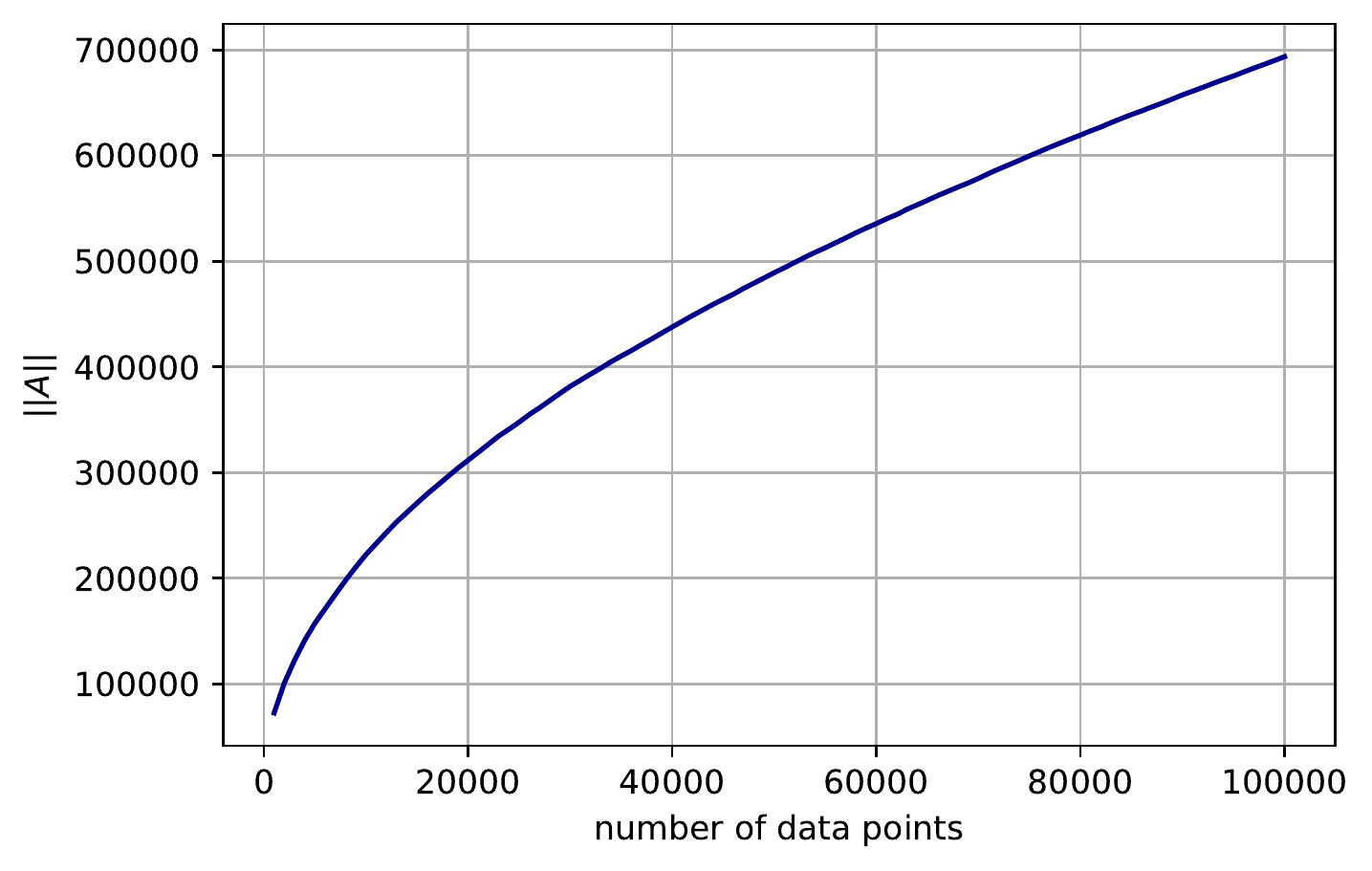}
        \caption{Spectral norm.}
        \label{fig:im_specnorm}
    \end{subfigure}
    \caption{Tiny Imagenet run-time parameters.}
    \label{fig:runtimeparam}
\end{figure}

From the plots, we can see that the spectral norm increases with the number of data points and that the thresholding epsilon is independent of this quantity. 
All the other parameters asymptotically approach a constant after introducing a certain number of data points. 
Our intuition suggests that the number of data points after which the parameters are constant depends on the number of classes in the dataset.
Indeed, this quantity should be related to the amount of information that a new data point adds to the dataset. 
The reader might find it weird that the Frobenius norm, in Figure \ref{fig:im_mu}, slightly decreases towards the end.
However, this trend is justified by the fact that we compute these parameters after the dataset is divided by the spectral norm, and this parameter continues to increase (Figure \ref{fig:im_specnorm}).
The fact that $\mu(A)$ is a positive homogeneous function makes it so that scaling by the spectral norm does not improve the overall run-time.
If we did not divide the dataset by the spectral norm, we would have seen the effect of its trend in $\epsilon$, $\theta$, and $\mu(\m{A})$.
The decrease of $\mu$ after the normalization corresponds to a decrease of $\epsilon$ and $\theta$, making the overall run-time remain the same.

We have used this data to generate the run-time plots in the main text (Figure \ref{Fig:Imagenet_scaling}). 
In that figure, we can see that the algorithms of Section \ref{Subsec:quality_repr} are already convenient on datasets of this size.
In contrast, the ones for singular vector extraction of Section \ref{Subsec:extraction} require datasets of greater size to show their potential.

\subsection{Image classification with quantum PCA}
To provide the reader with a clearer view of our new algorithms and their use in machine learning, we provide experiments on quantum PCA for image classification. 
We perform PCA on the three datasets for image classification (MNIST, Fashion MNIST, and CIFAR 10) and classify them with a K-Nearest Neighbors model. 
First, we simulate the extraction of the singular values and the percentage of variance explained by the principal components (top $k$ factor score ratios' sum) using the procedure from Theorem \ref{TheoMio:factor_score_estimation}. 
Then, we study the error of the model extraction, using Lemma \ref{Lemma:accuracyUSeVS}, by introducing errors on the Frobenius norm of the representation to see how this affects the accuracy. 

\paragraph{Estimating the number of principal components}
We shift MNIST, Fashion MNIST, and CIFAR-10 to row mean $0$ and divide them by their spectral norm.
We directly simulate Theorem \ref{TheoMio:factor_score_estimation} to decide the number of principal components needed to retain 0.85 of the total variance. 
For each dataset, we classically compute the singular values with an exact classical algorithm and simulate the quantum state $\frac{1}{\sqrt{\sum_j^r \sigma_j^2}} \sum_i^r \sigma_i\ket{\sigma_i}$ 
to emulate the measurement process of Algorithm \ref{alg_main:factor_score_estimation}.
After initializing the random object with the correct probabilities, we measure it $\frac{1}{\gamma^2} = 1000$ times and estimate the factor score ratios with a frequentist approach (i.e., dividing the number of measurements of each outcome by the total number of measurements). 
Measuring $1000$ times guarantees us an error of at most $\gamma=0.03$ on each factor score ratio. 
In practice, the error is much smaller.
To determine the number of principal components to retain, we sum the factor score ratios until the percentage of explained variance becomes more significant than $0.85$.
We report the results of these experiments in Table \ref{table:principal_components}.
We obtained good results for all the datasets, estimating no more than three extra principal components than needed.
\begin{table}
    \caption{Results of the estimation of the number of principal components to retain. The parameter $k$ is the number of components needed to retain at least $p=0.85$ of the total variance. The parameter $p$ is computed with respect to the estimated $k$.}
    \label{table:principal_components}
    \begin{center}
    \begin{small}
    \begin{sc}
    \begin{tabular}{ccccr}
        \toprule
        Parameter & MNIST & F-MNIST & CIFAR-10 \\
        \midrule
        Estimated $k$ & 62 & 45& 55\\
        Exact $k$ & 59 & 43& 55\\
        Estimated $p$ & 0.8510 & 0.8510& 0.8510\\
        Exact $p$ & 0.8580 & 0.8543& 0.8514\\
        $\gamma$ & 0.0316 & 0.0316 & 0.0316\\
        \bottomrule
    \end{tabular}
    \end{sc}
    \end{small}
    \end{center}
\end{table}
We could further refine the number of principal components using Theorems \ref{TheoMio:check_explained_variance}, \ref{Theorivisto:binarysearch}. 
When we increase the percentage of variance to retain, the factor score ratios become smaller and the estimation worsens.
When the factor score ratios become too small to perform efficient sampling, it is possible to establish the threshold $\theta$ for the smaller singular value to retain using Theorems \ref{TheoMio:check_explained_variance} and \ref{Theorivisto:binarysearch}.
Suppose one is interested in refining the exact number $k$ of principal components, rather than $\theta$.
In that case, it is possible to obtain it using a combination of the Theorems \ref{TheoMio:check_explained_variance}, \ref{Theorivisto:binarysearch} and the quantum counting algorithm in time that scales with the square root of $k$ (Theorem \ref{TheoMio:counting}) to find a good trade-off.
Once one sets the number of principal components, the next step is to use Theorem \ref{TheoMio:top-k_sv_extraction} to extract the top singular vectors. 
To do so, we can retrieve the threshold $\theta$ from the previous step by checking the gap between the last singular value to retain and the first to exclude. 

\paragraph{Studying the error in the data representation}
We continue the experiment by checking how much error in the data representation a classifier can tolerate.
We compute the exact PCA representation for the three datasets and the 10-fold Cross-validation error using k-Nearest Neighbors with $7$ neighbors.
For each dataset, we introduce errors in the representation and check how the accuracy decreases.
To simulate the error, we perturb the exact representation by adding truncated Gaussian error (zero mean and unit variance, truncated on the interval $[\frac{-\xi}{\sqrt{nm}},\frac{\xi}{\sqrt{nm}}]$) to each matrix entry. 
The graph in Figure \ref{fig:error_matrix} shows the distribution of the simulated error on $2000$ approximation of a matrix $\m{A}$, such that $\norm{\m{A} - \overline{\m{A}}} \leq 0.1$. 
The distribution is still Gaussian, centered almost at half the bound.
\begin{figure}
    \center
    \begin{subfigure}{0.5\linewidth}
        \includegraphics[width=0.8\linewidth]{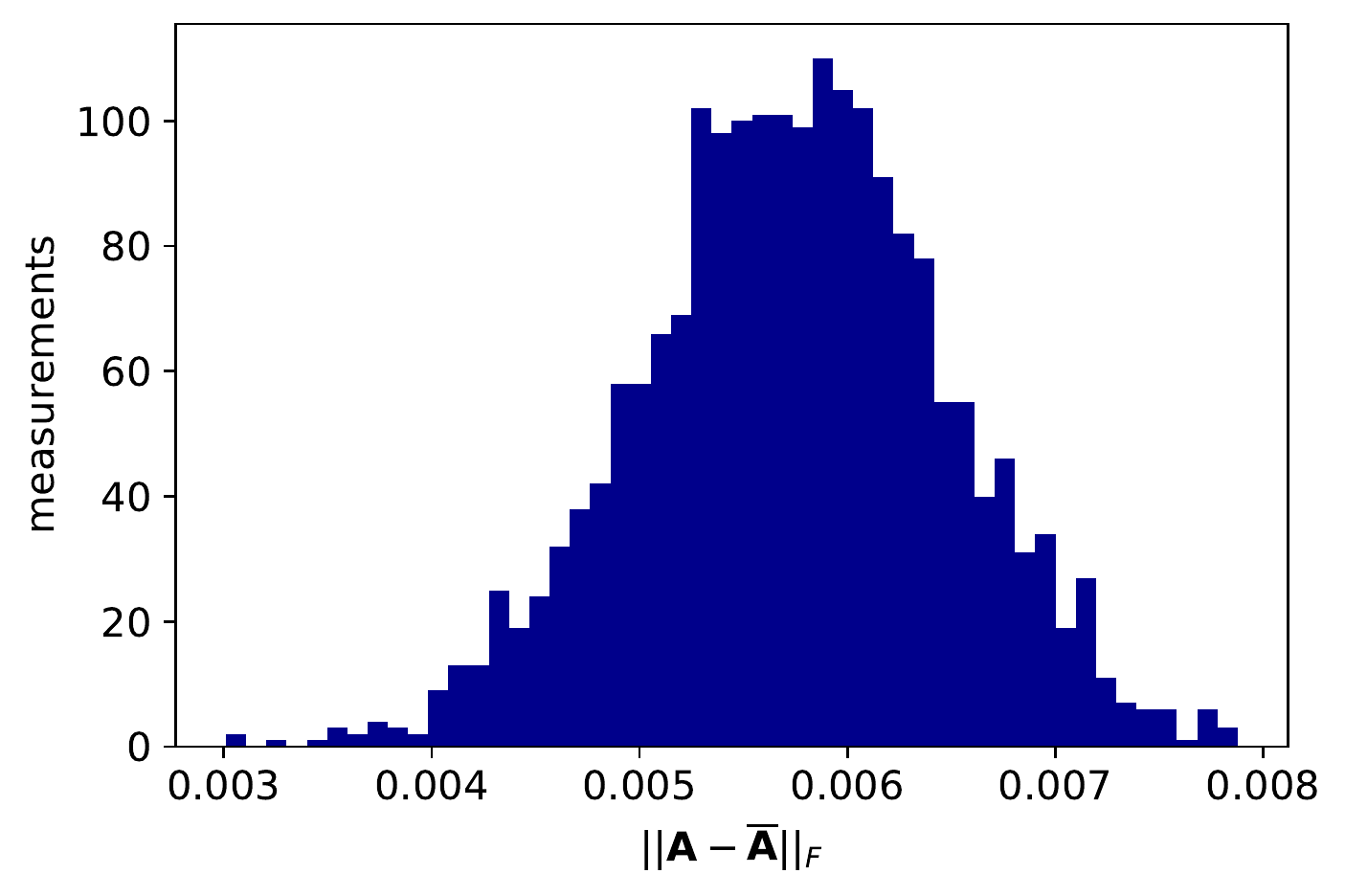}
    \end{subfigure}

    \caption{Introducing some error in the Frobenius norm of a matrix $\m{A}$. The error was introduced such that $\norm{\m{A} - \overline{\m{A}}} \leq 0.01$. The figure shows the distribution of the error over $2000$ measurements.}
    \label{fig:error_matrix}
\end{figure}
The results show a reasonable tolerance of the errors; we report them in two sets of figures.
Figure \ref{fig:class_bound} shows the drop of accuracy in classification as the error bound increases. Figure \ref{fig:class_relative} shows the accuracy trend against the approximation's error. 
\begin{figure}[h]
    \begin{subfigure}{0.328\linewidth}
        \centering
        \includegraphics[width=\linewidth]{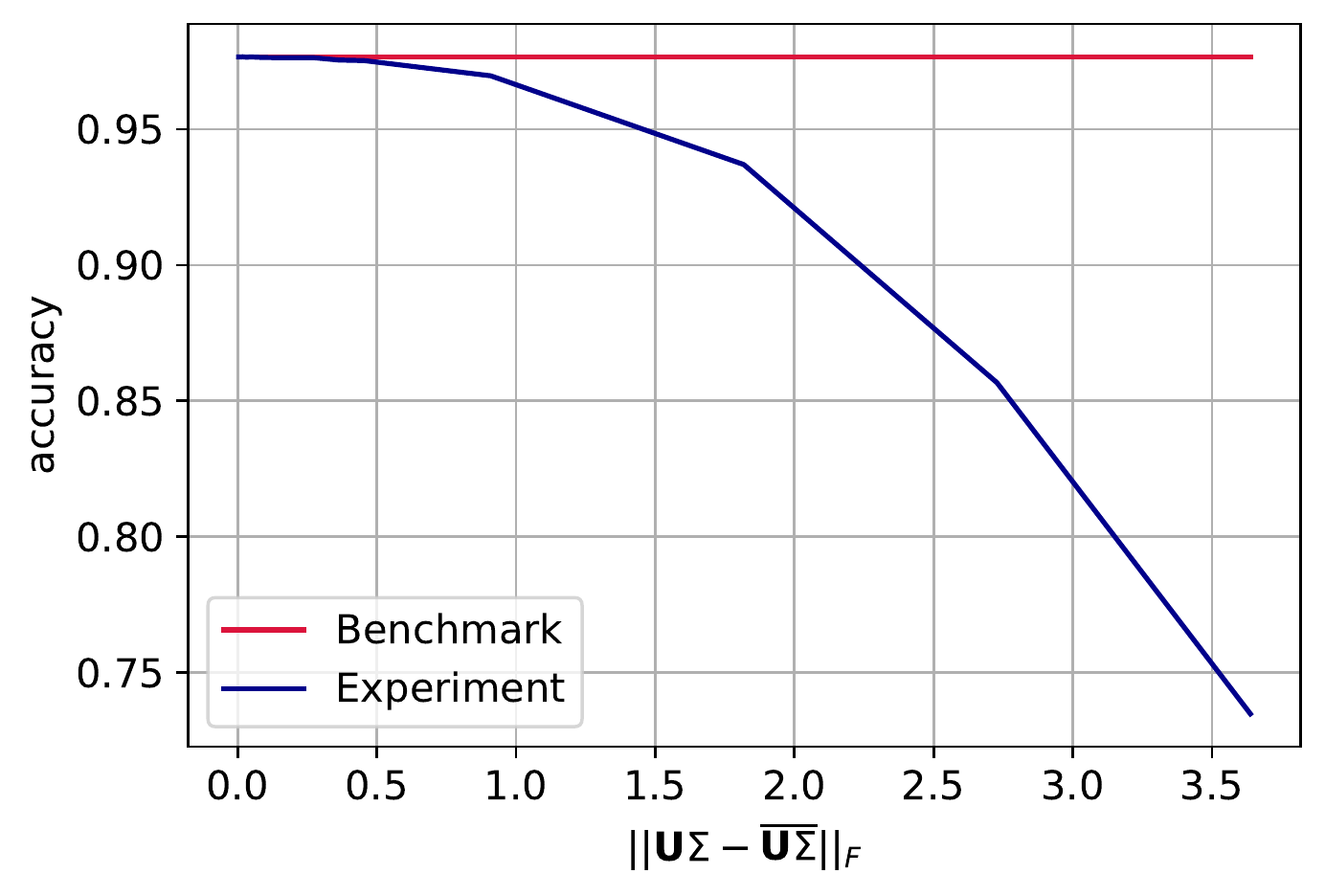}
        \caption{MNIST.}
        \label{fig:class_bound_mnist}
    \end{subfigure}
    \begin{subfigure}{0.328\linewidth}
        \centering
        \includegraphics[width=\linewidth]{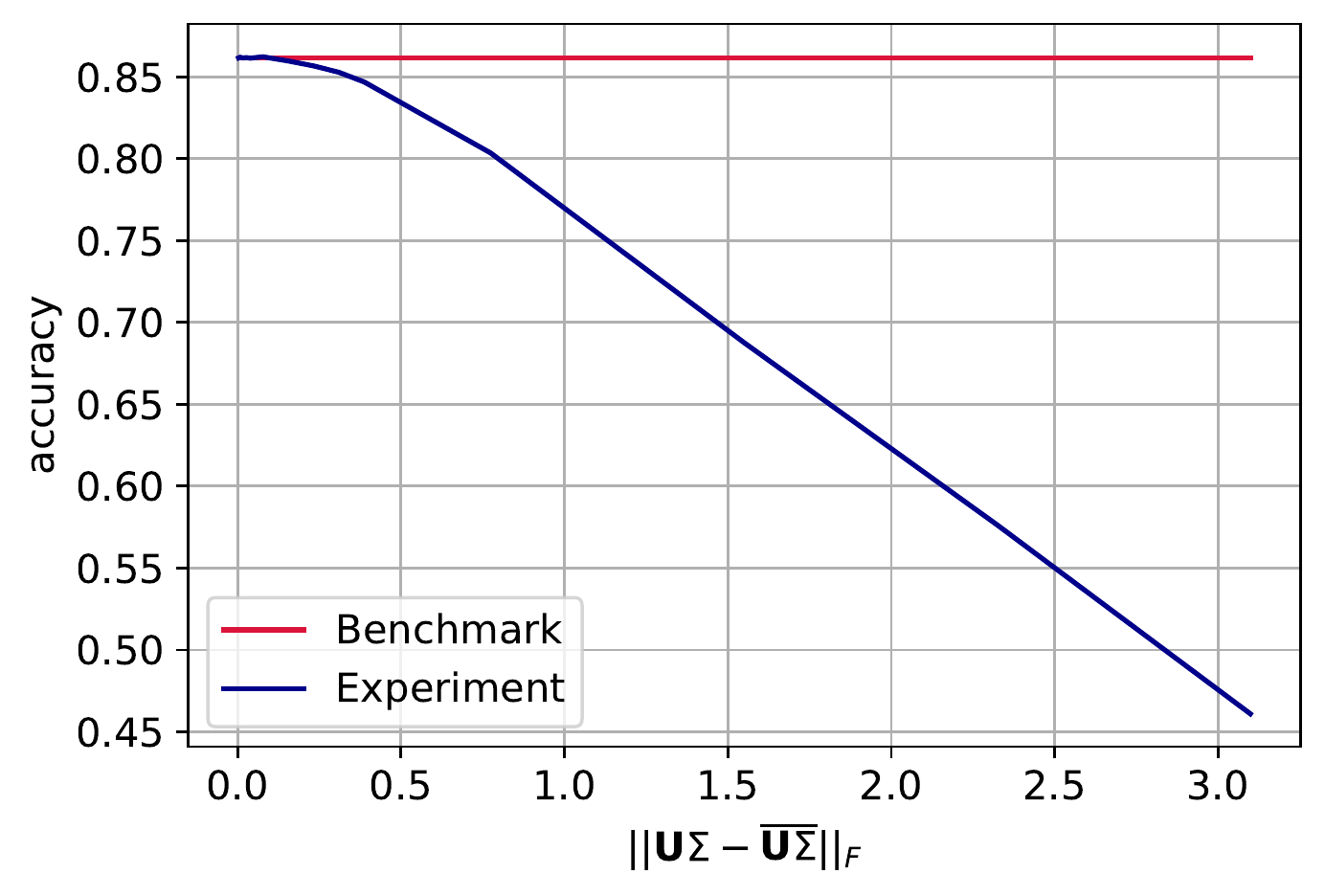}
        \caption{Fashion MNIST.}
        \label{fig:class_bound_Fmnist}
    \end{subfigure}
    \begin{subfigure}{0.328\linewidth}
        \includegraphics[width=\linewidth]{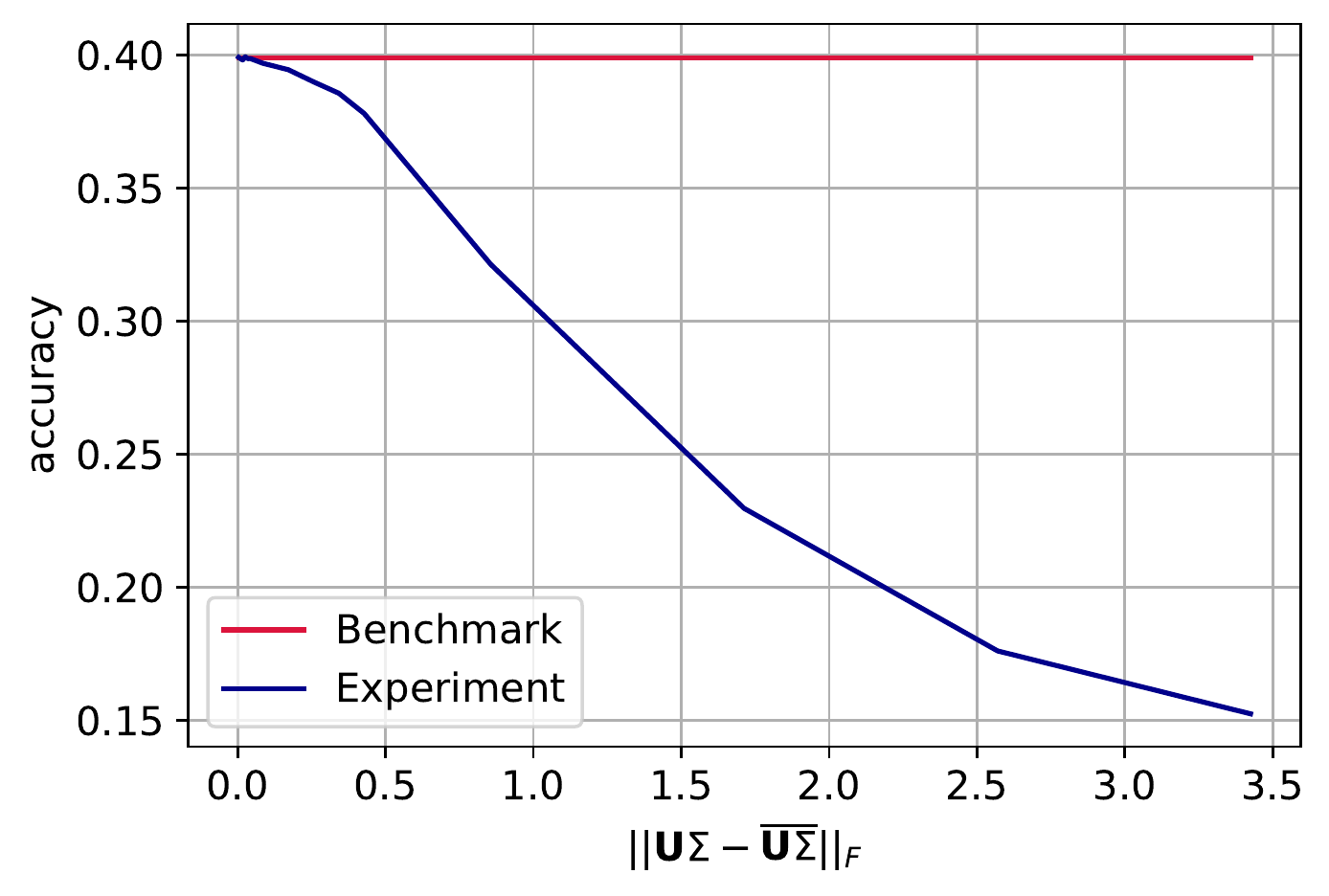}
        \caption{CIFAR-10.}
        \label{fig:class_bound_cifar}
    \end{subfigure}

    \caption{Classification accuracy of $7$-Nearest Neighbor on three machine learning datasets after PCA's dimensionality reduction. The drop in accuracy is plotted with respect to the \emph{bound} on the Frobenius norm of the difference between the exact data representation and its approximation.}
    \label{fig:class_bound}
\end{figure}
\begin{figure}[h]
    \begin{subfigure}{0.328\linewidth}
        \centering
        \includegraphics[width=\linewidth]{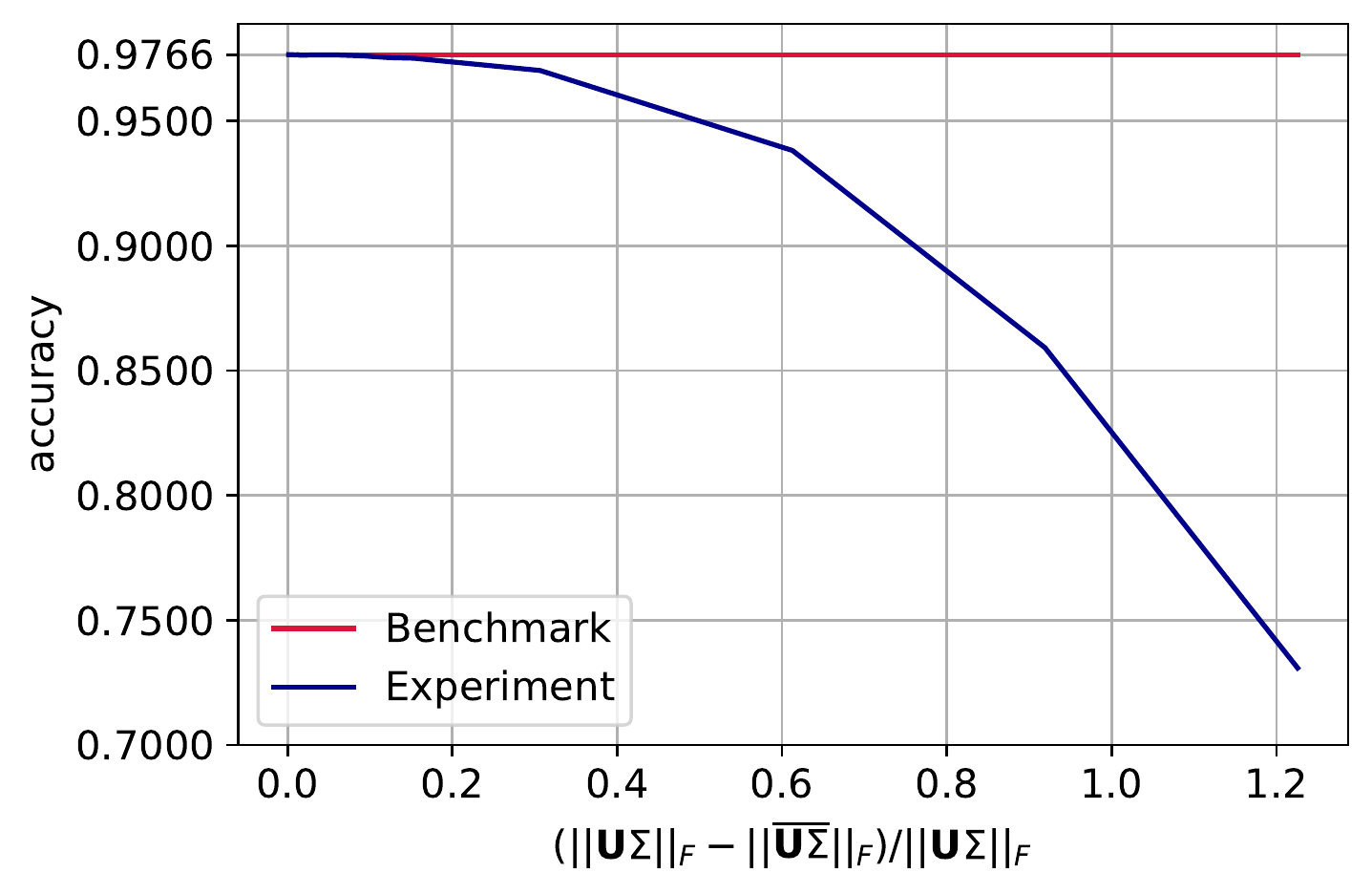}
        \caption{MNIST.}
        \label{fig:class_relative_mnist}
    \end{subfigure}
    \begin{subfigure}{0.328\linewidth}
        \centering
        \includegraphics[width=\linewidth]{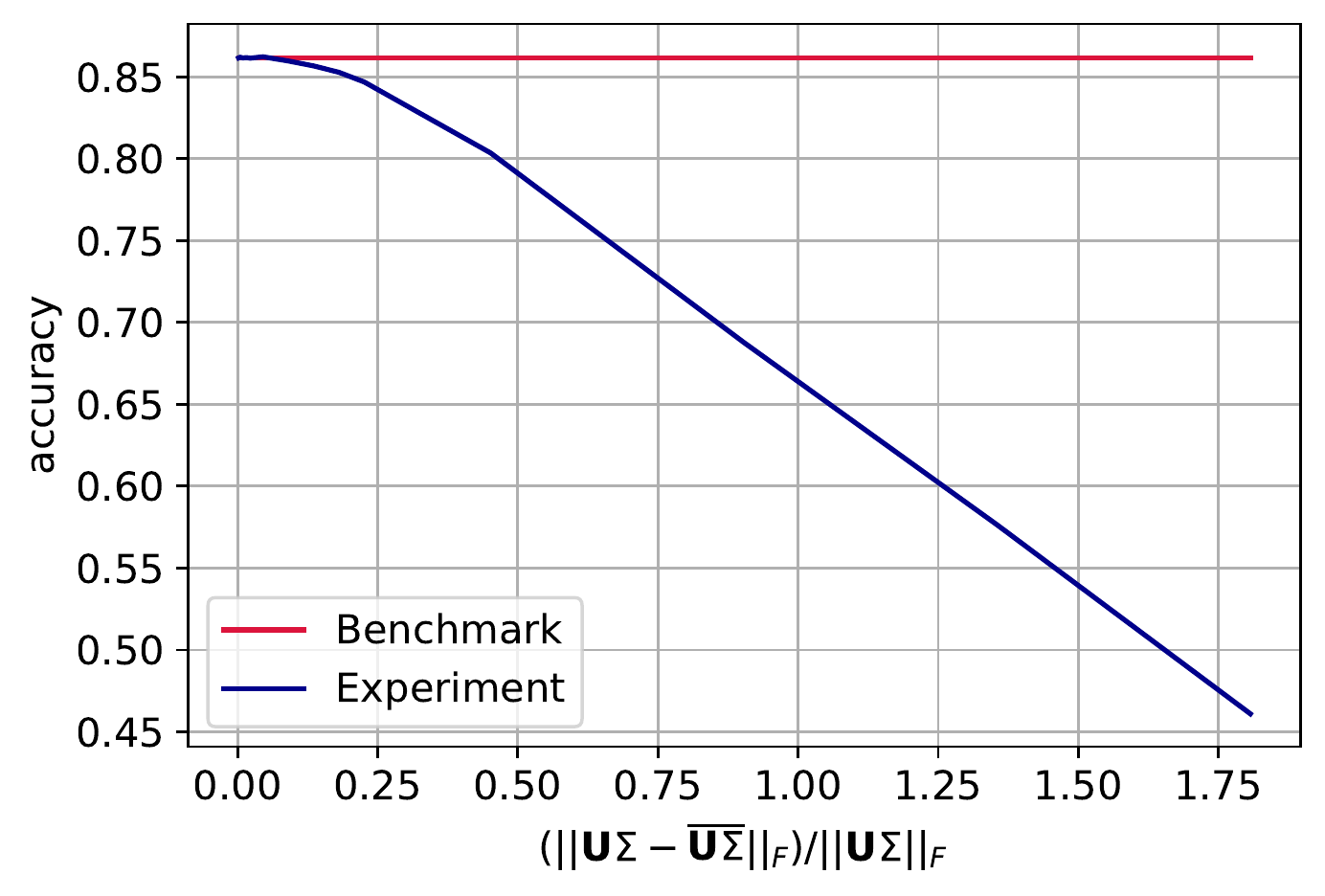}
        \caption{Fashion MNIST.}
        \label{fig:class_relative_Fmnist}
    \end{subfigure}
    \begin{subfigure}{0.328\linewidth}
        \includegraphics[width=\linewidth]{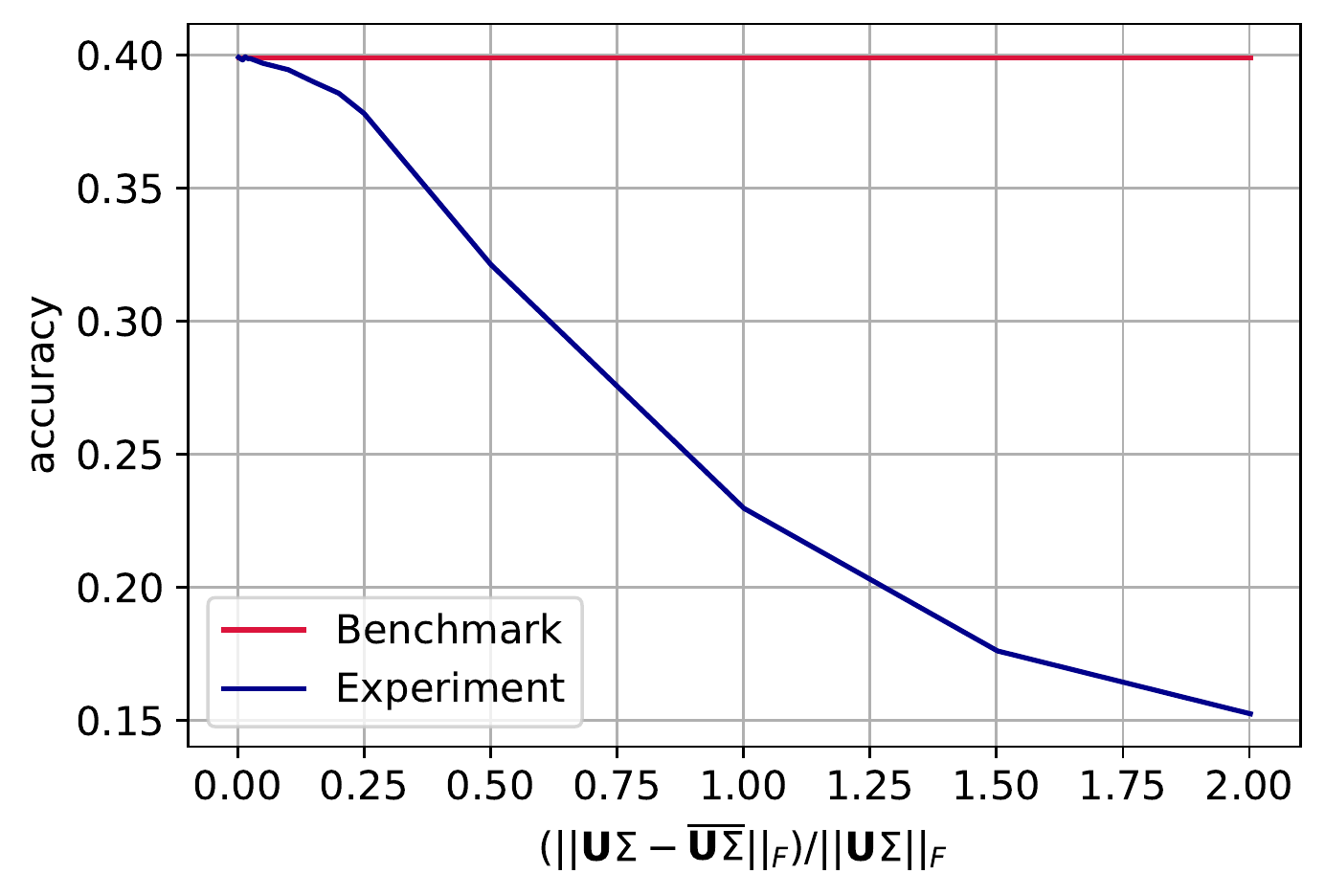}
        \caption{CIFAR-10.}
        \label{fig:class_relative_cifar}
    \end{subfigure}

    \caption{Classification accuracy of $7$-Nearest Neighbor on three machine learning datasets after PCA's dimensionality reduction. The drop in accuracy is plotted with respect to the \emph{effective} Frobenius norm of the difference between the exact data representation and its approximation.}
    \label{fig:class_relative}
\end{figure}

\paragraph{Analyzing the run-time parameters}
\label{SuppExperiments:run-time_param}
As discussed in Section \ref{Section:applications}, the model extraction's run-time is 
$\widetilde{O}\left(\left( \frac{1}{\gamma^2} + \frac{kz}{\theta\sqrt{p}\delta^2}\right)\frac{\mu(\m{A})}{\epsilon}\right)$, where $\m{A} \in \R^{n\times m}$ is PCA's input matrix, $\mu(\m{A})$ is a parameter bounded by $\min(\norm{\m{A}}_F, \norm{\m{A}}_\infty)$, $k$ is the number of principal components retained, $\theta$ is the value of the last singular value retained, $\gamma$ is the precision to estimate the factor score ratios, $\epsilon$ bounds the absolute error on the estimation of the singular values, $\delta$ bounds the $\ell_2$ norm of the distance between the singular vectors and their approximation, and $z$ is either $n$, $m$ depending on whether we extract the left singular vectors, to compute the classical representation, or the right ones, to retrieve the model and allow for further quantum/classical computation.
This run-time can be further lowered using Theorem \ref{Theorivisto:binarysearch} if we are not interested in the factor score ratios.
This paragraph aims to show how to determine the run-time parameters for a specific dataset.
We enrich the parameters of Table \ref{table:principal_components} with the ones in Table \ref{table:supp_parameters}, and we discuss how to compute them. 
From the previous paragraphs, it should be clear how to determine $k$, $\theta$, $\gamma$, and $p$, and it is worth noticing again that $1/\sqrt{p} \simeq 1$.
We have computed $\mu(\m{A})$ over a finite set of values $p \in [0,1]$ and have seen that $\norm{\m{A}}_F$ is the best $\mu(\m{A})$ (this is true for CIFAR-10, Fashion MNIST, Tiny Imagenet, and Research Papers as well).
To compute the parameter $\epsilon$ one should consider the epsilon that allows for a correct singular value thresholding. 
We refer to this as the thresholding $\epsilon$ and
set it as the difference between the last retained singular value and the first that is excluded.
For the sake of completeness, we have run experiments to check how the Coupon Collector's problem changes as $\epsilon$ increases. 
Recall that in the proof of Theorem \ref{TheoMio:top-k_sv_extraction}, we use 
$\frac{1}{\sqrt{\sum_{i}^k \frac{\sigma_i^2}{\overline{\sigma}_i^2}}}\sum_i^k\frac{\sigma_i}{\overline{\sigma}_i} \ket{\ve{u}_i}\ket{\ve{v}_i}\ket{\overline{\sigma}_i} \sim \frac{1}{\sqrt{k}}\sum_i^k \ket{\ve{u}_i}\ket{\ve{v}_i}\ket{\overline{\sigma}_i}$ to say that the number of measurements needed to observe all the singular values is $O(k\log(k))$, and this is true only if $\epsilon$ is small enough to let the singular values distribute uniformly. 
We observe that the thresholding $\epsilon$ always satisfies the Coupon Collector's argument, and we have plotted the results of our tests in Figure \ref{fig:coupon_collector}.
\begin{figure}[h]
    \begin{subfigure}{0.328\linewidth}
        \centering
        \includegraphics[width=\linewidth]{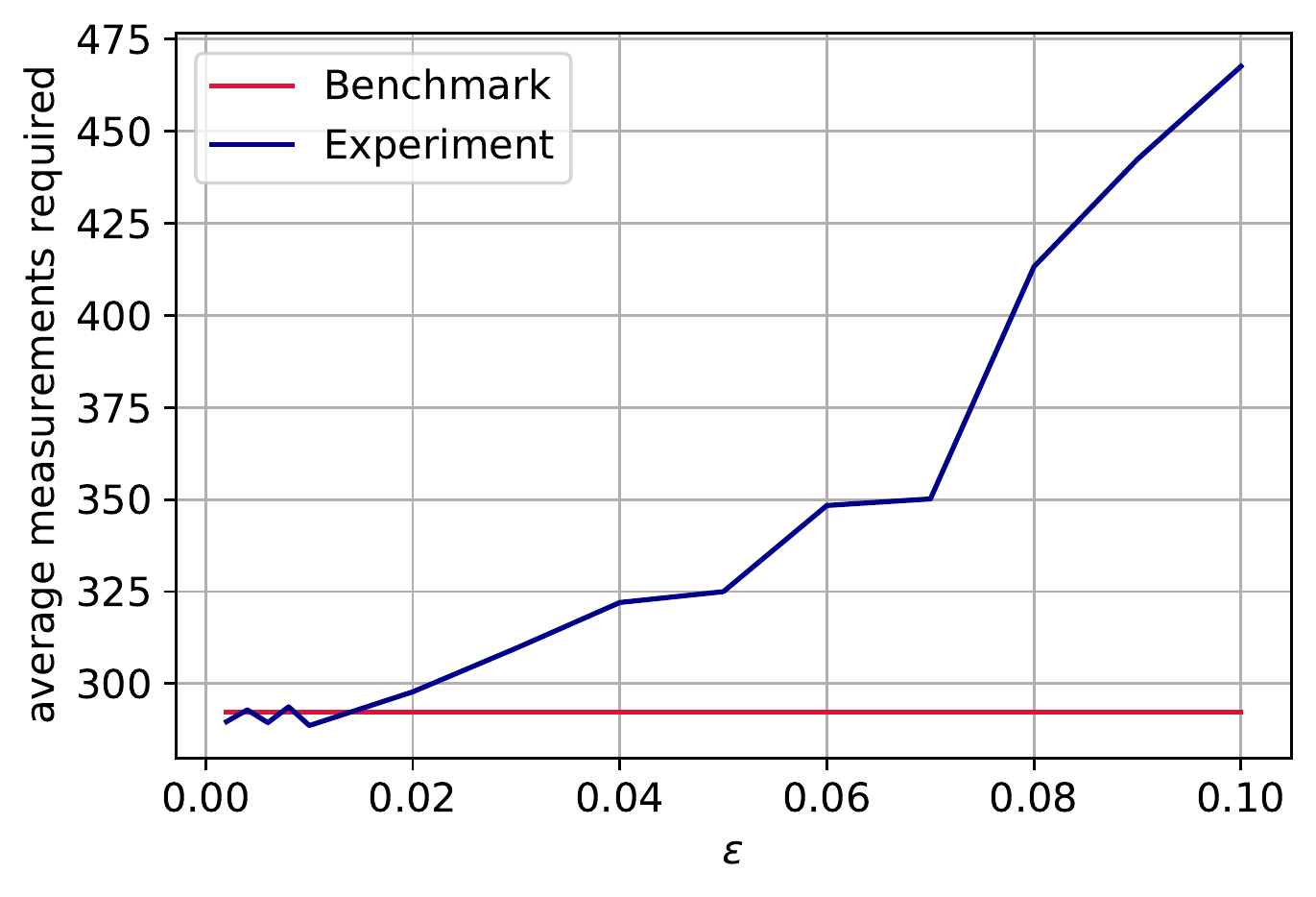}
        \caption{MNIST.}
        \label{fig:coupon_mnist}
    \end{subfigure}
    \begin{subfigure}{0.328\linewidth}
        \centering
        \includegraphics[width=\linewidth]{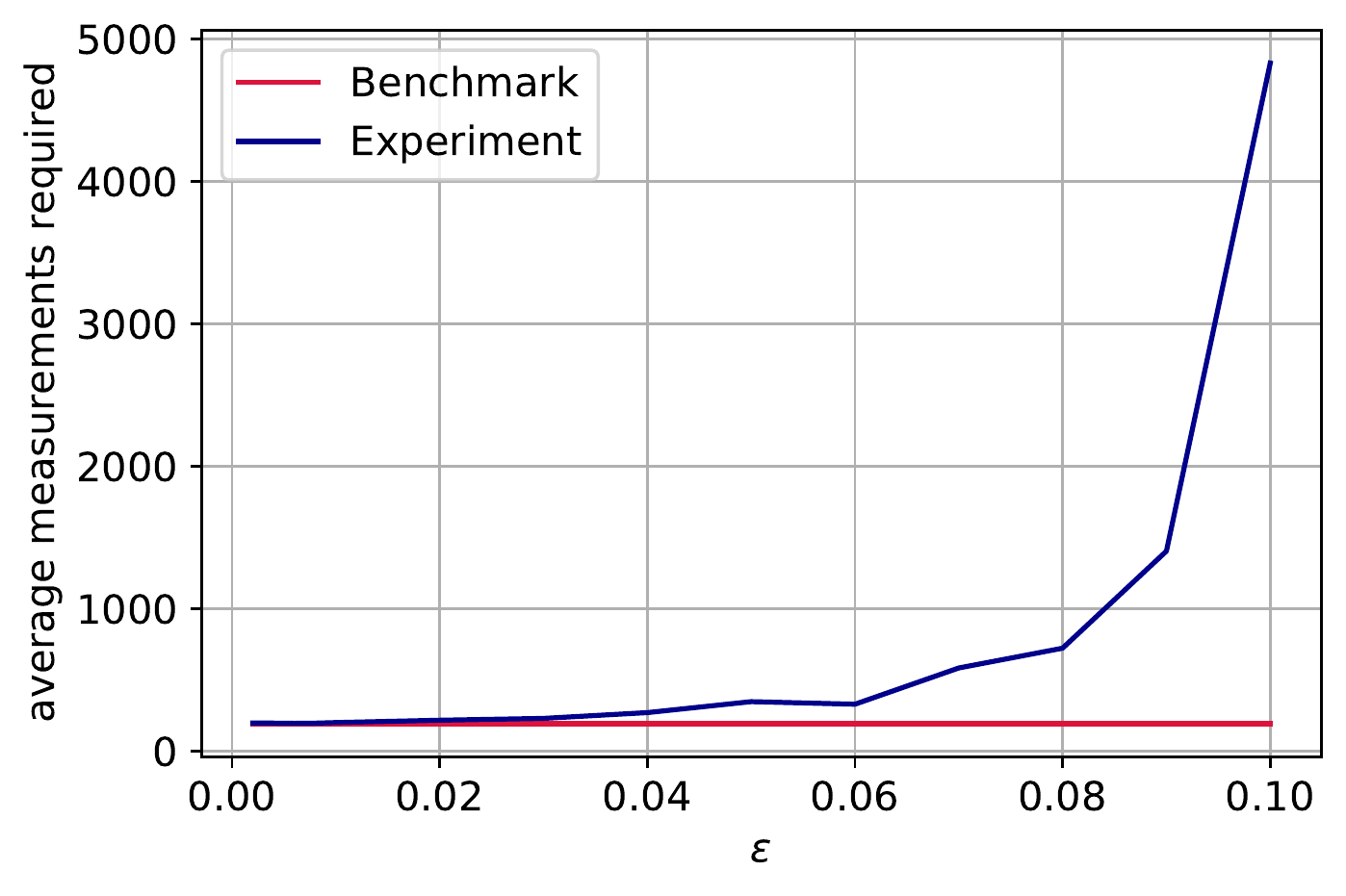}
        \caption{Fashion MNIST.}
        \label{fig:coupon_Fmnist}
    \end{subfigure}
    \begin{subfigure}{0.328\linewidth}
        \includegraphics[width=\linewidth]{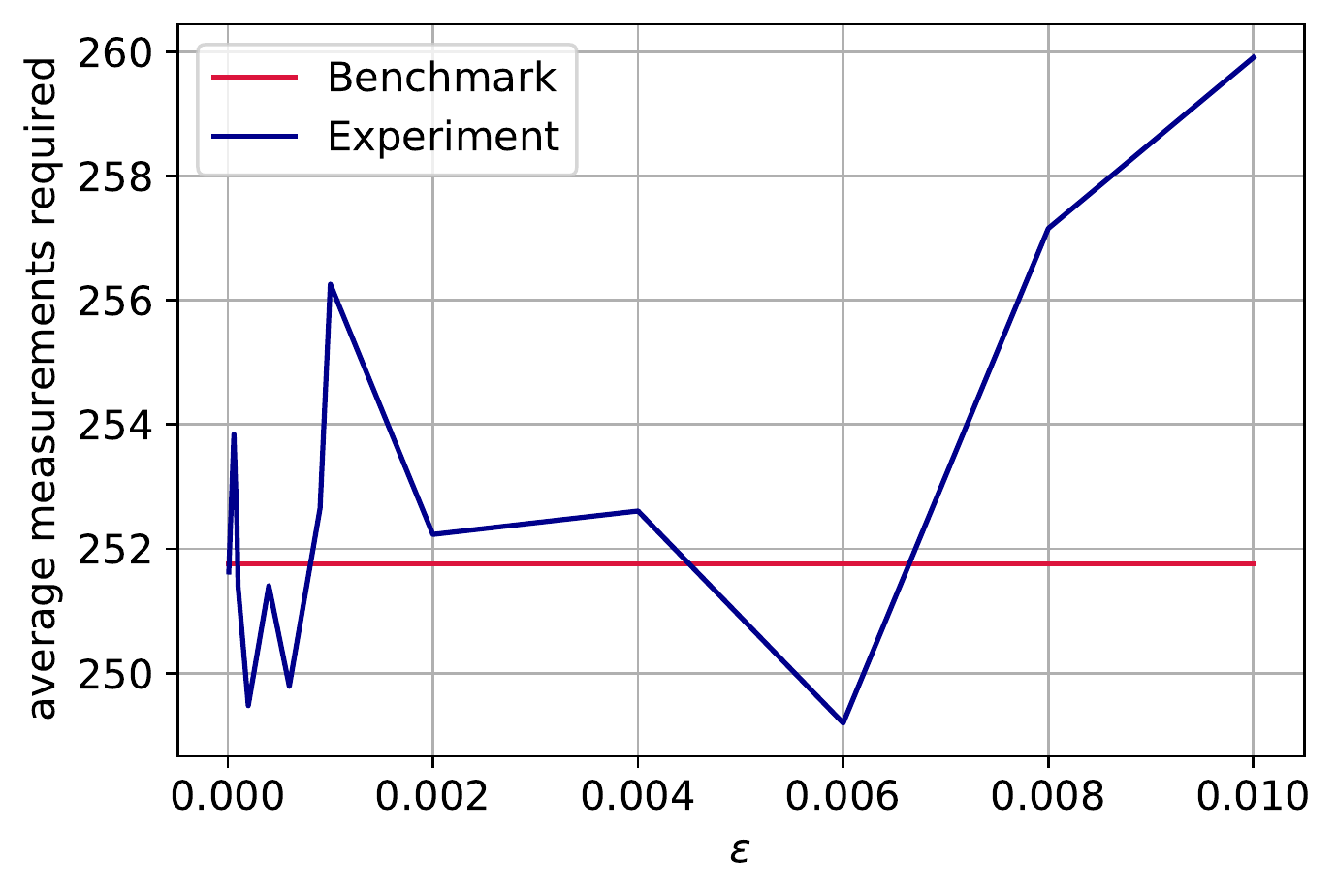}
        \caption{CIFAR-10.}
        \label{fig:coupon_cifar}
    \end{subfigure}

    \caption{Number of measurements needed to obtain all the 
    $k$ singular values from the quantum state $\frac{1}{\sqrt{\sum_i^r\sigma_i^2}}\sum_i^k \frac{\sigma_i}{\overline{\sigma}_i}\ket{\overline{\sigma}_i}$, where $\norm{\sigma_i - \overline{\sigma}_i} \leq \epsilon$, as $\epsilon$ increases. The benchmark line is $k\log_{2.4}(k)$.}
    \label{fig:coupon_collector}
\end{figure}
Furthermore, we have computed $\delta$ by using the fact that $\norm{\m{A} - \overline{\m{A}}} \leq \sqrt{k}(\epsilon + \delta)$ (Lemma \ref{Lemma:accuracyUSeVS}). 
We have computed an estimate for $\delta$ by inverting the equation and considering the thresholding $\epsilon$. 
In particular, we have fixed $\norm{\m{A} - \overline{\m{A}}}$ to the biggest value in our experiments so that the accuracy doesn't drop more than $1\%$. 
\begin{table}
\caption{Run-time parameters.}
\label{table:supp_parameters}
    \begin{center}
        \begin{small}
            \begin{sc}
                \begin{tabular}{ccccr}
                    \toprule
                    Parameter & MNIST & F-MNIST & CIFAR-10 \\
                    \midrule
                    $\mu(\m{A})$ & 3.2032 & 1.8551 & 1.8540\\
                    Thrs. $\epsilon$ & 0.0030 & 0.0009& 0.0006\\
                    $\theta$ & 0.1564 & 0.0776& 0.0746\\
                    $\delta$ & 0.1124 & 0.0106& 0.0340\\
                    \bottomrule
                \end{tabular}
            \end{sc}
        \end{small}
    \end{center}
\end{table}
These results show that Theorem \ref{TheoMio:factor_score_estimation}, \ref{TheoMio:check_explained_variance}, and \ref{Theorivisto:binarysearch} can already provide speed-ups on datasets as small as the MNIST.
Even though their speed-up is not exponential, they still run sub-linearly on the number matrix entries even though all the entries are taken into account during the computation, offering a polynomial speed-up with respect to their traditional classical counterparts. 
On the other hand, Theorem \ref{TheoMio:top-k_sv_extraction} requires bigger datasets. 
These algorithms are expected to show their full speed-up on big low-rank datasets that maintain a good distribution of singular values.
As a final remark, the parameters have similar orders of magnitude.

\section{Related works} \label{Section:related_work}
One of the first papers that faced the problem of performing the eigendecomposition of a matrix with a quantum computer is the well-known \citet{lloyd2014qpca}, which leveraged the intuition that density matrices are covariance matrices whose trace has been normalized. In this work, the authors assume to have quantum access to a matrix in the form of a density matrix and develop a method for fast density matrix exponentiation that enables preparing the eigendecomposition of the input matrix in time logarithmic on its dimensions. 
However, this algorithm requires the input matrix to be square, symmetric, and sparse or low-rank. 
More recently, the works of Kerenidis et al. on recommendation systems  \cite{kerenidis2016recommendation} and least-squares \cite{kerenidis2020gradient} have used a different definition of quantum access to a matrix (the one used throughout this work) and defined the task of singular value estimation. Their singular value decomposition scales better with respect to the error parameters, eliminates the dependency on the condition number, and does not have requirements on the input matrix. 
Several recent works, such as \citet{lin2019improved, rebentrost2018svd, gu2019quantum}, have improved or extended the quantum singular value decomposition techniques. 
Almost none of them have provided a formal analysis of an algorithm that ensures classical access to the singular vectors, values and the amount of variance explained by each.
There have also been attempts at creating near-term quantum algorithms for singular value decomposition. 
These works propose quantum circuits for singular value decomposition of quantum states on noisy intermediate-scale quantum (NISQ) devices using variational circuits \cite{bravo2020quantum, wang2020variational}. 
However, the complexity of such methods is unclear, and recent works have questioned the efficacy of the speed-ups of variational quantum algorithms due to (entanglement and noise-induced) barren plateaus in the optimization landscape \cite{wang2020noise, marrero2020entanglement}.

In classical computer science, most diffused implementations of PCA, CA, and LSA available \cite{scikit-learn} relays on ARPACK \cite{lehoucq1998arpack} or similar packages, which implement improvements of the Lanczos method, like the Implicitly restarted Arnoldi method (IRAM) \cite{sorensen1997implicitly}, an improvement upon the simple Arnoldi iteration, which dates back to 1951 (a more general case of Lancsoz algorithm, which works only for Hermitian matrices). The run-time of these algorithms is bounded by $O(nmk \frac{\ln (m/\epsilon)}{\sqrt{\epsilon}})$, where $\epsilon$ is an approximation error related to the relative spectral gap between eigenvalues \cite{saad1992numerical}. 

The realization of quantum procedures that provide exponential speed-ups in linear algebra tasks has given inspiration for the realization of classical quantum-inspired algorithms that try to achieve the same run-time as their quantum counterparts. 
The process of transforming a quantum algorithm into a classical algorithm with a similar speed-up is usually referred to as ``dequantization''.  
In our case, the comparison with dequantized algorithms is often not easy, as they solve problems that are different from ours. 
Most of these works are based on a famous algorithm by Frieze, Kannan, and Vempala, which computes a low-rank approximation of a matrix in time that is sub-linear in the number of entries \cite{fkv2004low_rank_approx, chia2020sampling, arrazola2019quantum}.
Such algorithms promise exponential speed-ups over the traditional SVD algorithm for low-rank matrices. 
However, the high polynomial dependency of the run-times on the condition number, the rank, and the estimation error makes them advantageous only for matrices of extremely large dimensions, with low ranks and small condition numbers. 
The research described in \citet{arrazola2019quantum} observed that the dependencies like $O\left(\frac{\norm{A}_F^6}{\epsilon^6}\right)$ are far from being tight in real implementations, but still order of magnitudes slower than the best classical algorithms. 

Concomitantly to our work, a new important result \cite{chepurko2020quantum} was able to lower the complexity of these dequantizations by better leveraging all the previous literature of classical algorithms in randomized linear algebra and re-framing them into a more complete mathematical framework. 
Indeed, previous sample-based dequantizations were just doing a form of leverage score sampling. 
These new algorithms seem to be tighter than previous results and offer a better comparison with quantum algorithms, solving problems related to ours.
While we believe that it is not possible to have classical algorithms with run-times comparable to the ones of Theorems \ref{TheoMio:factor_score_estimation}, \ref{TheoMio:check_explained_variance}, \ref{Theorivisto:binarysearch} (see the relationships between LLSD, SUES, and DQC1 in \citet{cade2018quantum}) and Corollaries \ref{Coro:qPCAvector} and \ref{Coro:qPCAmatrix}, we have found that the work of \citet{chepurko2020quantum} may question the practical advantage of our Theorem \ref{TheoMio:top-k_sv_extraction} over a classical counterpart.
At first sight, their Theorem 33 might seem relevant for this work, as it provides a set of linearly independent rows of the input matrix. 
We stress that this problem is not related to finding the singular vectors provided by SVD, which are linearly independent and orthonormal. 
Moreover, even after further orthonormalization processing (e.g., Gram–Schmidt), the computed row basis wouldn't necessarily be the one provided by SVD.
This is why we cannot compare the run-time of this procedure to our Theorem \ref{TheoMio:top-k_sv_extraction}. 
On the other hand, Theorem 37 is more similar to our Theorem \ref{TheoMio:top-k_sv_extraction} but still aims to solve a different problem.
While ours provides estimates $\norm{v_i - \overline{v}_i} \leq \epsilon, \forall i \in [k]$ (which we recall are also relative-error estimates, as $\norm{v_i}=1$), their Theorem 37 provides a rank-$k$ projector matrix $Q^{(k)}$, with orthonormal columns, such that $\norm{A - AQ^{(k)}Q^{(k)T}}_F^2 \leq (1+\epsilon')\norm{A - A_k}_F^2$ in time $\widetilde{O}(nnz(\m{A})+\frac{k^{w-1}m}{\epsilon'} + \frac{k^{1.01}m}{\epsilon'^2})$.
While it is easy to see that $Q \rightarrow V$ as $\epsilon \rightarrow 0$, it is not easy to see how $\norm{\m{Q} - \m{V}}_F$ varies as $\epsilon$ varies and that becomes even less clear if we are interested in the error on a specific singular vector.
If the run-time of this algorithm is shown to be better than its quantum equivalent, it would still be great to include it in our framework instead of Theorem \ref{TheoMio:top-k_sv_extraction} and continue to take advantage of the speed-ups of the other quantum procedures. 
One downside of using the dequantized subroutines would be that, in general, the $\widetilde{O}(nnz(\m{A}))$ data pre-processing step is different from the one required to provide efficient quantum access. 
Even though it can be possible that a classical algorithm could extract the singular vectors with a run-time comparable to the quantum one, using it would require paying additional costs both in time and space.
Those costs arise from the need for an ad hoc data structure that would not be adequate to provide competitive speed-ups with respect to the other available quantum machine learning and data analysis algorithms. 
We believe that both the classical and quantum versions of singular vectors extraction may be used in the future, depending on the computational capabilities available to the interested data analysts.

\subsection{Principal component analysis} Probably no other algorithm in ML has been studied as much as PCA, so the literature around this algorithm is vast \cite{halko2011algorithm,jolliffe2016principal}. To mention an improvement upon the standard Lanczos method for PCA \cite{wang2020improved}, the authors used more Lanczos iterations to improve the numerical stability of PCA, by obtaining a better description of the Krylov subspace (i.e., more iterations help obtain a more orthonormal base).   
As mentioned, the problem of PCA has been studied previously within the model of quantum computation. 
\citet{lin2019improved, he2020exact} focus on a circuit implementation of qPCA, whose run-time has been superseded by more recent techniques used in this paper. The work of \citet{yu2019quantumdatacompression} faces the problem of performing PCA for dimensionality reduction on quantum states achieving an exponential advantage over the best known classical algorithms. However, their algorithm is somewhat impractical, due to the overall error dependence, which can be of $\widetilde{O}(\epsilon^{-5})$. Furthermore they use old Hamiltonian simulation techniques, superseded by the techniques that we use in our paper. 
To our knowledge, there are no works that provide a theoretical analysis of the run-time for the procedure needed to \emph{select} the number of singular vectors needed to retain enough variance, obtain a classical description of the model, and map new data points in the new feature space with theoretical guarantees on the run-time (which we believe cannot be improved, as in this work we show that the run-time for this mapping is almost constant). 

\subsection{Correspondence analysis} While correspondence analysis has been really popular in the past, so much that entire books have been written about it \cite{clausen1998applied, greenacre2017correspondence}, it seems to have become out of fashion in the last decades, probably overshadowed by the wave of results in deep learning.
The novel formulation of \citet{hsu2019correspondence} gives a new perspective of CA. 
The authors connects correspondence analysis to the principal inertia components theory, making it relevant also in tasks that concern privacy in machine learning \cite{wang2019privacy}.
As said before, similarly and independently from us, \citet{koide2020dequantumcorrespondence} have extended the dequantized subroutines to perform canonical correspondence analysis. This algorithm is not expected to beat the performance of our quantum algorithm, let alone the performance of the best classical algorithm for CA. 

\subsection{Latent semantic indexing} LSA was first introduced in \citet{deerwester1990indexing}, which spurred a flurry of applications \cite{landauer2013handbook}. 
Some notable works are streaming and/or distributed algorithms for incremental LSA \cite{vrehuuvrek2011subspace, cavanagh2009parallel,zhang2017index}.
While these work might offer inspiration for new quantum algorithms, their distributed nature make it an unfair comparison with a single-QPU quantum algorithm. 
LSA with neural networks has also been explored in the past years \cite{yu2008latent}, albeit without guarantees on the run-time or the approximation error. 
During the preparation of this manuscript we discovered a previous work on quantum LSA, which pointed  at the similarities between quantum states and LSA, albeit without offering any practical algorithm \cite{qlsa2021caicedo}. 

\end{appendices}



\end{document}